\newcommand{\set}[1]{\left\{ #1 \right\}}
\newcommand{\card}[1]{\left| #1 \right|}
\newcommand{\mcald}{\mathcal{D}}
\newcommand{\mcalq}{\mathcal{Q}}
\newcommand{\mcalz}{\mathcal{Z}}
\newcommand{\mcali}{\mathcal{I}}
\newcommand{\mcalj}{\mathcal{J}}
\newcommand{\stcut}[1][S,T]{$(#1)$-cut }
\newcommand{\cmincuts}{\textsc{counting mincuts}}
\newcommand{\closest}{closest }
\newcommand{\anodyne}{\closest }
\newcommand{\network}{drainage }
\begin{document}

\title{Fixed-parameter tractability of counting small minimum $(S,T)$-cuts}
\titlerunning{Fixed-parameter tractability of counting small minimum $(S,T)$-cuts}

\author{Pierre Berg\' e\inst{1}, Benjamin Mouscadet\inst{2}, Arpad Rimmel\inst{2} and Joanna Tomasik\inst{2}}
\institute{LRI, Universit\'e Paris-Sud, Universit\'e Paris-Saclay, Orsay France \email{Pierre.Berge@lri.fr}
\and LRI, CentraleSup\' elec, Universit\'e Paris-Saclay, Orsay France,
\email{Arpad.Rimmel@lri.fr},
\email{Benjamin.Mouscadet@supelec.fr},
\email{Joanna.Tomasik@lri.fr}}

\maketitle

\begin{abstract}
The parameterized complexity of counting minimum cuts stands as a natural question because Ball and Provan showed its \#P-completeness.
For any undirected graph $G=(V,E)$ and two disjoint sets of its vertices $S,T$, we design a fixed-parameter tractable algorithm which counts minimum edge $(S,T)$-cuts parameterized by their size $p$.

Our algorithm operates on a transformed graph instance. This transformation, called drainage, reveals a collection of at most $n=\card{V}$ successive minimum $(S,T)$-cuts $Z_i$. We prove that any minimum $(S,T)$-cut $X$ contains edges of at least one cut $Z_i$. This observation, together with Menger's theorem, allows us to build the algorithm counting all minimum $(S,T)$-cuts with running time $2^{O(p^2)}n^{O(1)}$. 
Initially dedicated to counting minimum cuts, it can be modified to obtain an FPT sampling of minimum edge $(S,T)$-cuts.
\end{abstract}

\section{Introduction}

The issue of counting minimum cuts in graphs has been drawing attention over the years due to its practical applications. Indeed, the number of minimum cuts is an important factor for the network reliability analysis~\cite{BaCoPr95,BaPr83,BaPr84,NaSuIb91}. Thereby, the probability that a stochastic graph is connected may be computed~\cite{BaPr83}. Furthermore, cuts on planar graphs are used for image segmentation~\cite{BoVe06}. An image is seen as a planar graph where vertices represent pixels and edges connect two neighboring pixels if they are similar. Counting minimum cuts provides an estimation of the number of segmentations.

We focus on the problem of counting minimum edge $(S,T)$-cuts in undirected graphs $G = (V,E)$, $S,T \subseteq V$. We call it \cmincuts\ (Def.~\ref{def:cmincuts}) as it is the counting variant of the classical problem \textsc{mincut}, which asks for a minimum $(S,T)$-cut in graph $G$. 
Ball and Provan showed in~\cite{PrBa83} that \cmincuts\ is unlikely solvable in polynomial time as it is \#P-complete. 
They also devised a polynomial-time algorithm for \cmincuts\ on planar graphs~\cite{BaPr83}. Bez\'{a}kov\'{a} and Friedlander~\cite{BeFr12} generalized it with an $O(n\mu+n\log n)$-time algorithm on weighted planar graphs, where $\mu$ is the length of the shortest $(s,t)$-paths. For general graphs, some upper bounds on the number of minimum cuts have been given~\cite{ChRa02} in function of parameters such as the radius, the maximum degree, etc. Two fixed-parameter tractable (FPT) algorithms have been proposed for \cmincuts . Bez\'{a}kov\'{a} {\em et al.}~\cite{BeChFo16} built an algorithm for both directed and undirected graphs with small treewidth $\lambda$; its time complexity is $O(2^{3\lambda}\lambda n)$. Moreover, Chambers {\em et al.}~\cite{ChFoNa13} designed an algorithm for directed graphs embedded on orientable surfaces of genus $g$: its execution time is $O(2^gn^2)$. We study the fixed-parameter tractability of \cmincuts , parameterized by the size $p$ of the minimum $(S,T)$-cuts.

\begin{definition}[Counting mincuts]
~

\textbf{Input: }Undirected graph $G=(V,E)$, sets of vertices $S,T \subsetneq V$, $S\cap T = \emptyset$.  

\textbf{Output: }The number of minimum edge $(S,T)$-cuts. 
\label{def:cmincuts}
\end{definition}

The minimum $(S,T)$-cut size for a \cmincuts\ instance $\mcali = (G,S,T)$ is obtained in polynomial time~\cite{FoFu56}. A brute force XP algorithm computes the number $C(\mcali)$ of minimum $(S,T)$-cuts in time $n^{O(p)}$ by enumerating all edge sets of size $p$ and picking up those which are $(S,T)$-cuts. More efficient exponential algorithms exist, as the one of Nagamochi {\em et al.}, in time $O\left(pn^2 + pnC(\mcali)\right)$, in~\cite{NaSuIb91}. Our contribution, summarized in the theorem below, is an algorithm efficient for small values of $p$.
\begin{theorem}
The counting of minimum edge $(S,T)$-cuts can be solved in time $O(2^{p(p+2)}pmn^3)$ on undirected graphs $G=(V,E)$, where $n = \card{V}$ and $m = \card{E}$.
\label{th:main_result}
\end{theorem}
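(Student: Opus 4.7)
The plan is to assume the two ingredients announced in the abstract: the drainage, which in polynomial time produces an ordered family $Z_1,\ldots,Z_k$ of minimum $(S,T)$-cuts with $k\le n$, and the structural lemma guaranteeing that any minimum $(S,T)$-cut $X$ shares at least one edge with some $Z_i$. With these in hand, I would present the counting algorithm and then analyse its running time; the proof of Theorem~\ref{th:main_result} is a description plus a complexity count.

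To avoid double counting, I would assign to each minimum cut $X$ the canonical label $i(X)=\min\{i:X\cap Z_i\neq\emptyset\}$, which is well-defined by the structural lemma, and decompose
\[
C(\mcali)\;=\;\sum_{i=1}^{k}N_i, \qquad N_i=\bigl|\{X:i(X)=i\}\bigr|.
\]
It is then enough to compute each $N_i$. For fixed $i$, I would apply Menger's theorem to fix $p$ edge-disjoint $(S,T)$-paths $P_1,\ldots,P_p$. Since every minimum $(S,T)$-cut uses exactly one edge of each $P_j$, a cut $X$ is encoded by a vector of $p$ edges, one per path, and the cuts contributing to $N_i$ are exactly the vectors that (a) meet $Z_i$ in at least one coordinate, (b) avoid every $Z_{j'}$ with $j'<i$, and (c) whose support actually separates $S$ from $T$. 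I would enumerate these vectors by a branching procedure that first guesses the non-empty subset $X\cap Z_i\subseteq Z_i$ (a $2^p$-factor), and then, for each of the remaining $p$ coordinates, guesses the position of $X$'s edge relative to $Z_i$ and to the pairwise interactions with the other chosen edges. A case analysis based on the $p\times p$ interaction pattern of the chosen edges bounds the number of branches by $2^{p(p+2)}$, and within each branch a single max-flow computation of cost $O(pmn)$ decides whether the branch yields a valid minimum $(S,T)$-cut.

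Combining the per-$i$ cost with the at most $n$ values of $i$ and the polynomial cost of the drainage (itself $O(n^2)$ max-flow-type work) yields the overall bound $O(2^{p(p+2)}pmn^3)$. The main obstacle I foresee is the correctness and tightness of the branching. One must show simultaneously that the enumeration is exhaustive (every minimum cut with $i(X)=i$ is produced in some branch, which leans on the structural lemma) and non-redundant (each $X$ is produced in exactly one branch, which is where the canonical witness $i(X)$ is crucial), and that the bound $2^{p(p+2)}$ is sharp: once $X\cap Z_i$ and the pairwise interactions among the $p$ coordinates are recorded, the remaining coordinates must be forced or resolvable by a polynomial subroutine. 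This combinatorial accounting, rather than any algorithmic ingredient, will carry the weight of the proof.
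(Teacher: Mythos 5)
Your opening moves match the paper: build the drainage, invoke the structural theorem to give every minimum cut a well-defined front index $i(X)$, and split the count as $C(\mcali)=\sum_i N_i$. The gap is in how you compute $N_i$. After guessing the front dam $B_i=X\cap Z_i$ (the $2^p$ factor, which is fine), the remaining $p-\card{B_i}$ edges of $X$ are \emph{not} determined, forced, or confinable to $2^{O(p^2)}$ branches by any ``pairwise interaction pattern'': a cut is encoded by choosing one edge on each of the $p$ Menger paths, so the raw search space is $\prod_j\card{Q_j}=n^{\Theta(p)}$, and the number of minimum cuts sharing a given front dam can itself be $\Omega((n/p)^{p})$ (the paper's own example of $p$ disjoint paths). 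So either each of your branches corresponds to a single candidate cut --- in which case $2^{p(p+2)}$ branches cannot cover them all --- or a branch must return a \emph{count}, and you have given no subroutine for that. The sentence you flag as ``carrying the weight of the proof'' is exactly the part that fails, and no max-flow call per branch repairs it.

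The paper closes this gap by recursion rather than branching. Once the front dam $B_{i(X)}$ is fixed, it identifies the unique \emph{closest dam} $B_{h(X)}$ with the same signature (Lemma~\ref{le:unique_closest}) and proves (Theorems~\ref{th:keystone} and~\ref{th:recursive}) that $X\backslash B_{i(X)}$ is a minimum cut of the \emph{dry instance} $\mcald\bigl(\mcali,\overline{B}_{h(X)}\bigr)$, a strictly smaller instance whose minimum-cut value is $p-\card{B_{i(X)}}<p$. This yields the recurrence of Eq.~\eqref{eq:counting}; exhaustiveness comes from Theorems~\ref{th:keystone}--\ref{th:recursive}, non-redundancy from the uniqueness of the closest dam (not merely from the canonical front index), and soundness from Theorem~\ref{th:converse_keystone}. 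The factor $2^{p^2}$ is then obtained not by bounding branches per level but by bounding the recursion tree: each edge lies in at most $2^{pd}$ instances at depth $d\le p-1$ (Theorem~\ref{th:complexity}), giving at most $2^{p^2}m$ instances, each processed in $O(mnp+2^{2p}n^3)$ time. You would need to import this entire closest-dam/dry-instance apparatus --- or supply an alternative argument that genuinely reduces the per-front-dam subproblem to something of size $f(p)$ --- before your complexity claim can be justified.
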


An FPT$\langle p \rangle$ algorithm can be deduced from the results in two articles~\cite{BeChFo16,MaORa13} and its execution time is $O^*\left(2^{H(p)}\right)$ where $H(p) = \Omega\left(\frac{2^p}{\sqrt{p}}\right)$. The treewidth reduction theorem established by Marx {\em et al.} in~\cite{MaORa13} says that there is a linear-time reduction transforming graph $G$ into another graph $G'$ which conserves the $(s,t)$-cuts of size $p$ and such that the treewidth of $G'$, $\tau(G')$, verifies $\tau(G') = 2^{O(p)}$. After this transformation, the number of minimum $(S,T)$-cuts of $G'$ is obtained thanks to the algorithm given in~\cite{BeChFo16}. The overall time taken with this method is $O^*\left(2^{2^p}\right)$. Our result, Theorem~\ref{th:main_result}, improves this exponential factor.

This result highlights a complexity gap between the counting and the enumeration, as the latter cannot be FPT parameterized by $p$. 
Indeed, certain instances contain a number $C(\mcali) = (\frac{n-1}{p})^p$ of minimum cuts, as in case of graph $G$ made of $p$ vertex-disjoint $(S,T)$-paths with $S =\set{s}$ and $T =\set{t}$.

Our algorithm is based on a cut-decomposition $\mcalz(\mcali) = \left(Z_1,\ldots,Z_k\right)$ of instance $\mcali$, $1 \le k < n$, called the \textit{drainage} where for every $1\le i\le k$, edge set $Z_i$ is a minimum $(S,T)$-cut. Set $R(Z_i,S)$ denotes the vertices which are reachable from $S$ after the removal of edges in $Z_i$. The reachable sets of $Z_i$ are included one into another: $R(Z_1,S) \subsetneq R(Z_2,S) \subsetneq \ldots \subsetneq R(Z_k,S)$. The \network fulfils the following property: if $X$ is a minimum $(S,T)$-cut, some edges $B_i$ of a certain $Z_i$ belong to $X$, $B_i = X \cap Z_i \neq \emptyset$, and no other edge of $X$ has one endpoint in $R(Z_i,S)$. The set $B_i$ is called the \textit{front dam} of cut $X$. The key idea of the recursive counting we propose is that any minimum cut $X$ is the union of its front dam with a minimum cut of a sub-instance, called \textit{dry instance}, of $\mcali$. These techniques work as well on multigraphs, {\em i.e.} on graphs with multiple edges.
After modifications, our algorithm also samples minimum edge $(S,T)$-cuts.

To design the \network $\mcalz(\mcali)$, we use the concept of \textit{important cuts}~\cite{Ma06} which is the key ingredient of many FPT algorithms to solve cuts problems~\cite{BoDaTh11,ChHaMa13,CyLoPiPiSa14,Ma06,MaRa11}. An $(S,T)$-cut $Y$ is \textit{important} if there is no other $(S,T)$-cut $Y'$ such that $\card{Y'} \le \card{Y}$ and $R(Y,S) \subsetneq R(Y',S)$.
There is a unique minimum important $(S,T)$-cut and it can be identified in polynomial time~\cite{Ma06}.

The second concept used in our algorithm is Menger's theorem~\cite{Me27}. It states that the size of minimum edge $(S,T)$-cuts in an undirected graph is equal to the maximum number of edge-disjoint $(S,T)$-paths. 
As the max-flow min-cut theorem~\cite{FoFu56} generalizes Menger's theorem, one of the largest sets of edge-disjoint $(S,T)$-paths is found in polynomial time.

To close this introductory chapter, we give a ``table of contents'' of our article. Section~\ref{sec:preliminaries} introduces the notations used. Section~\ref{sec:skeleton} explains the construction of the \network $\mcalz(\mcali) = (Z_1,\ldots,Z_k)$. In Section~\ref{sec:algo}, we propose our algorithm and compute its time complexity. 
Finally, we conclude and give ideas about future research.

\section{Definitions and notation} \label{sec:preliminaries}

We summarize basic concepts of parameterized and counting complexity but also introduce the notation we will use.

\textbf{Fixed-parameter tractability.} NP-hard problems are unlikely to be solvable with polynomial time algorithms. However, solving them efficiently may become possible when parameters are associated to problem instances and the values of these parameters are small.

Referring to Downey and Fellows~\cite{DoFe99} and Niedermeier~\cite{Ni06}, a parameterized problem is said \textit{fixed-parameter tractable} (FPT) if there is an algorithm solving it in time $O(f(p)P(n)) = O^*(f(p))$, where $p$ is a parameter, $n$ is the instance size, $P$ is a polynomial function, and $f$ is an arbitrary computable function. As a problem may be studied for different parameters $p_1,p_2,\ldots$, the notation ``FPT'' becomes ambiguous. If there is an algorithm solving a problem in time $O\left(f(p_1)P(n)\right)$, then it is FPT$\langle p_1\rangle$. In this study, the parameter $p$ of \cmincuts\ is the size of the minimum $(S,T)$-cut.

\textbf{Counting problems.} The study of \#P complexity class and the counting problems it contains, started with Valiant~\cite{Va79}. Class \#P is the set of counting problems such that their decision version is in class NP. The subclass \#P-complete contains counting problems such that all problems in \#P can be reduced to them with a polynomial-time counting reduction. No \#P-complete problem can be solved in polynomial time unless P$=$NP. Moreover, there are decision problems such as \textsc{cnf-2sat}~\cite{Kr67} which are solvable in polynomial time but their associated counting problem is \#P-complete \cite{Va79}.
The complexity of counting problems has been extended via the parameterized complexity framework~\cite{Cu18,FlGr04}. A relevant question to ask about a \#P-complete problem is whether there is an FPT algorithm counting all its solutions. For example, with $G$ and $H$ as input, FPT algorithms counting the number of occurrences of $H$ as a subgraph of $G$ have been intensively studied~\cite{ArRa02,GuSi13,WiWi13}.

\textbf{Cuts in undirected graphs.} We study undirected graphs $G=(V,E)$, where $n = \card{V}$ and $m = \card{E}$. For any set of vertices $U \subseteq V$, we denote by $E\left[U\right]$ the set of edges of $G$ with two endpoints in $U$ and $G\left[U\right]$ the subgraph of $G$ induced by $U$: $G\left[U\right] = \left(U,E\left[U\right]\right)$. Notation $G\backslash U$ refers to the graph deprived of vertices in $U$. For any set of edges $E' \subseteq E$, the graph $G$ deprived of edges $E'$ is denoted by $G\backslash E'$:
\[
G\backslash U = G\left[V\backslash U\right]~~\mbox{and}~~G\backslash E' = \left(V,E\backslash E'\right).
\]
A \textit{path} is a sequence of pairwise different vertices $v_1\cdot v_2\cdot v_3\cdots v_i\cdot v_{i+1} \cdots$, where two successive vertices $(v_i,v_{i+1})$ are adjacent in $G$. To improve readability, we abuse notations: $v_1 \in Q$ and $(v_1,v_2) \in Q$ mean that vertex $v_1$ and edge $(v_1,v_2)$ are on path $Q$, respectively.

Cut problems usually consist in finding the smallest set of edges $X \subseteq E$ which splits the graph $G\backslash X$ into connected components meeting certain criteria. Given two sets of vertices $S$ (sources) and $T$ (targets), set $X \subseteq E$ is an \stcut if there is no path connecting a vertex from $S$ with a vertex from $T$ in $G \backslash X$. An \stcut $X$ is said to be \textit{minimum} if there is no \stcut $X'$ such that $\card{X'} < \card{X}$.
For any \stcut $X$, its \textit{source side} $R(X,S)$ is the set of vertices that are reachable from $S$ in $G \backslash X$. Its \textit{target side} $R(X,T)$ contains the vertices reachable from $T$ in $G\backslash X$.
We define two sets $V^S(X)$ and $V^T(X)$:
\begin{itemize}
  \item set $V^S(X)= \set{u \in R(X,S);(u,v) \in X}$, {\em i.e.} the vertices of $R(X,S)$ incident to cut $X$,
  \item set $V^T(X)= \set{u \in R(X,T);(u,v) \in X}$, {\em i.e.} the vertices of $R(X,T)$ incident to cut $X$.
\end{itemize}

\textbf{Important and closest cuts.} As defined in~\cite{Ma06}, an \stcut $X$ is \textit{important} if there is no other \stcut $X'$ such that $\card{X'} \leq \card{X}$ and $R(X,S) \subsetneq R(X',S)$. 
Intuitively, an important \stcut is such that there is no other cut smaller in size which is closer to $T$. The number of important $(S,T)$-cuts of size at most $p$ depends only on $p$~\cite{ChLiLu09} and there is no more than one minimum important $(S,T)$-cut~\cite{Ma06}. Although the proofs in~\cite{Ma06} handle vertex cuts, an edge-to-vertex reduction preserves these properties on edge cuts~\cite{BoDaTh11,Ma06}.
  
\begin{lemma}[Unicity of minimum important cuts~\cite{Ma06}]
\label{le:impcutsenum}
  For disjoint sets of vertices $S$ and $T$, there is a unique minimum important $(S,T)$-cut and it may be found in polynomial time.
\end{lemma}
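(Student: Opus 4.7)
The plan is to argue uniqueness by combining submodularity of the edge-boundary function with the observation that a minimum $(S,T)$-cut equals the edge boundary of its source side; the polynomial-time claim then follows from any standard max-flow computation. For a set $U \subseteq V$ containing $S$ and disjoint from $T$, write $\delta(U) \subseteq E$ for the set of edges with exactly one endpoint in $U$, so that $\delta(U)$ is an $(S,T)$-cut. Let $p$ denote the minimum $(S,T)$-cut size and $f(U) = \card{\delta(U)}$; the function $f$ is submodular on such sets.

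First, I would argue that a minimum important cut exists at all. Starting from any minimum cut $X_0$, if $X_0$ is not important then one may replace it by a witness $X_1$ with $\card{X_1} \leq p$ and $R(X_0,S) \subsetneq R(X_1,S)$; by minimality $\card{X_1} = p$. The source side strictly grows at each step, so this process terminates after at most $n$ iterations at an important cut of size $p$.

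For uniqueness, let $X_1$ and $X_2$ be two minimum important cuts, and set $A = R(X_1,S)$ and $B = R(X_2,S)$. Because each $X_i$ is minimum, I would first observe that $X_i = \delta(A)$ (resp.\ $\delta(B)$): any edge internal to $A$ or to $V \backslash A$ could be removed while still separating $S$ from $T$, contradicting minimality. Both $A \cup B$ and $A \cap B$ contain $S$ and avoid $T$, hence $f(A \cup B), f(A \cap B) \geq p$. Submodularity of $f$ then yields
\[
2p \;=\; f(A) + f(B) \;\geq\; f(A \cup B) + f(A \cap B) \;\geq\; 2p,
\]
so equality holds and in particular $f(A \cup B) = p$. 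Moreover, the source side of $\delta(A \cup B)$ is exactly $A \cup B$, since $S$ reaches every vertex of $A$ within $G[A]$ and every vertex of $B$ within $G[B]$. If $B \not\subseteq A$ then $A \subsetneq A \cup B$ and $\card{\delta(A \cup B)} \leq \card{X_1}$ contradicts the importance of $X_1$; by symmetry $A = B$, and therefore $X_1 = \delta(A) = X_2$.

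For polynomial-time computation, I would merge $S$ and $T$ into a single super-source $s$ and super-sink $t$ via infinite-capacity edges to their members, run Ford--Fulkerson to obtain a maximum flow and its residual graph $G_f$, and let $R^\star$ be the complement of the set of vertices from which $t$ is reachable in $G_f$. Standard max-flow min-cut reasoning shows that $\delta(R^\star)$ is the minimum $(S,T)$-cut whose source side is inclusion-maximal, which by the uniqueness just proven coincides with the minimum important cut. The main obstacle I anticipate is the reachability/submodularity step above, making sure that $A \cup B$ really is the source side of $\delta(A \cup B)$; as noted in the sentence preceding the statement, one could alternatively perform an edge-to-vertex subdivision and invoke the vertex version of the result directly from~\cite{Ma06}.
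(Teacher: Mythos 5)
Your proof is correct. Note that the paper does not actually prove Lemma~\ref{le:impcutsenum}: it cites~\cite{Ma06}, where the uniqueness of the minimum important cut is established for vertex cuts, and remarks that an edge-to-vertex reduction transfers the statement to edge cuts (the very fallback you mention at the end of your proposal). Your argument is a direct, self-contained proof for the edge case, and it is essentially the same uncrossing technique that underlies Marx's proof: identify a minimum cut with the edge boundary $\delta$ of its source side, use submodularity of $\card{\delta(\cdot)}$ to show that the union of two minimum-cut source sides is again a minimum-cut source side, and conclude that two distinct important minimum cuts would contradict each other's importance. All the individual steps check out --- in particular the observation that $R\left(\delta(A\cup B),S\right)=A\cup B$, which you rightly flag as the delicate point, holds because every vertex of $A$ (resp.\ $B$) is reached from $S$ inside $G[A]$ (resp.\ $G[B]$) and no edge of $\delta(A\cup B)$ is internal to $A\cup B$. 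The existence argument and the residual-graph construction of the inclusion-maximal source side are standard and correctly stated. What your route buys over the paper's citation is a proof that works natively on edge cuts (and on multigraphs, which the paper needs later for the weighted extension), at the cost of re-deriving a known lattice property of minimum cuts rather than invoking it.
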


On undirected graphs, we say that an important $(T,S)$-cut is a \textit{closest} $(S,T)$-\textit{cut}. Fig.~\ref{fig:closest} gives an example of graph $G$ with two $(S,T)$-cuts $X_1$ and $X_2$, where $S = \set{s_1,s_2}$ and $T = \set{t}$.
\begin{figure}[h]
	\centering
	\scalebox{.60}{\begin{tikzpicture}[line cap=round,line join=round,>=triangle 45,x=1cm,y=1cm]
\coordinate (S1) at (1.0,2.5);
\coordinate (S2) at (1.0,3.5);
\coordinate (A) at (3.0,1.5);
\coordinate (B) at (3.0,2.5);
\coordinate (C) at (3.0,3.5);
\coordinate (D) at (3.0,4.5);
\coordinate (E) at (5.0,1.5);
\coordinate (F) at (5.0,2.5);
\coordinate (G) at (5.0,3.5);
\coordinate (H) at (7.0,1.5);
\coordinate (I) at (7.0,2.5);
\coordinate (J) at (7.0,3.5);
\coordinate (K) at (9.0,1.5);
\coordinate (L) at (9.0,2.5);
\coordinate (M) at (9.0,3.5);
\coordinate (T) at (11.0,3.0);
\draw [line width=1pt] (S1)--(A);
\draw [line width=1pt] (S1)--(B);
\draw [line width=1pt] (S2)--(C);
\draw [line width=1pt] (S2)--(D);
\draw [line width=1pt,color=blue] (A)--(E);
\draw [line width=1pt,color=blue] (A)--(F);
\draw [line width=1pt,color=blue] (B)--(E);
\draw [line width=1pt,color=blue] (C)--(F);
\draw [line width=1pt,color=blue] (D)--(G);
\draw [line width=1.2pt,color=red] (E)--(H);
\draw [line width=1.2pt,color=red] (F)--(I);
\draw [line width=1.2pt,color=red] (G)--(J);
\draw [line width=1pt] (H)--(K);
\draw [line width=1pt] (I)--(L);
\draw [line width=1pt] (J)--(M);
\draw [line width=1pt] (K)--(T);
\draw [line width=1pt] (L)--(T);
\draw [line width=1pt] (M)--(T);
\draw [fill=black] (S1) circle (2.5pt);
\draw [fill=black] (S2) circle (2.5pt);
\draw [fill=black] (A) circle (2.5pt);
\draw [fill=black] (B) circle (2.5pt);
\draw [fill=black] (C) circle (2.5pt);
\draw [fill=black] (D) circle (2.5pt);
\draw [fill=black] (E) circle (2.5pt);
\draw [fill=black] (F) circle (2.5pt);
\draw [fill=black] (G) circle (2.5pt);
\draw [fill=black] (H) circle (2.5pt);
\draw [fill=black] (I) circle (2.5pt);
\draw [fill=black] (J) circle (2.5pt);
\draw [fill=black] (K) circle (2.5pt);
\draw [fill=black] (L) circle (2.5pt);
\draw [fill=black] (M) circle (2.5pt);
\draw [fill=black] (T) circle (2.5pt);
\node [scale=1.2] at (0.7,3.8) {$s_2$};
\node [scale=1.2] at (0.7,2.1) {$s_1$};
\node [scale=1.2] at (11.2,3.4) {$t$};
\node [color=black,scale=1.2] at (2.0,1.3) {$Z_1$};
\node [color=blue,scale=1.2] at (4.0,1.0) {$X_1$};
\node [color=red,scale=1.2] at (6.0,1.0) {$X_2$};
\end{tikzpicture} }
	\caption{Illustration of Def.~\ref{def:closest} for closest $(S,T)$-cuts: $X_2$ is closest whereas $X_1$ is not.}
	\label{fig:closest}
\end{figure}
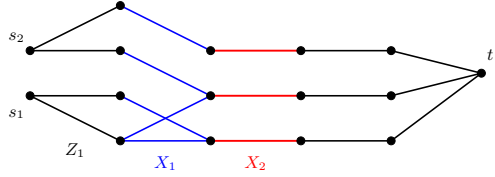
Cut $X_1$ is not closest as the edges incident to $S$ form a cut $Z_1$ smaller than $X_1$ and $R(Z_1,S) \subseteq R(X_1,S)$. Cut $X_2$ is closest because there is no cut with at most three edges whose reachable set of vertices is contained in $R(X_2,S)$.

\begin{definition}
  An \stcut $X$ is \textit{\anodyne}if there is no other \stcut $X'$ such that $\card{X'} \leq \card{X}$ and $R(X',S) \subseteq R(X,S)$.
  \label{def:closest}
\end{definition}

As a minimum closest $(S,T)$-cut is also a minimum important $(T,S)$-cut on undirected graphs,  there is a unique minimum closest $(S,T)$-cut according to Lemma~\ref{le:impcutsenum}.
Since the graph is uncapacitated, computing the minimum closest $(S,T)$-cut is made in time $O(mp)$, using $p$ iterations of Ford-Fulkerson's algorithm~\cite{FoFu56}.

\section{Framework: \network and Menger's paths} \label{sec:skeleton}

We introduce tools needed to design an algorithm solving \cmincuts\ in FPT$\langle p \rangle$ time, where $p$ is the size of any minimum $(S,T)$-cut.
We build the \textit{drainage}, a collection of minimum cuts $Z_i$, $i \in \set{1,\ldots,k}$, where $k < n$, such that at least one edge of any minimum $(S,T)$-cut $X$ belongs to $\bigcup_{i=1}^k Z_i$. Then, we highlight properties coming from Menger's theorem.

\subsection{Construction of the \network}

The \network $\mcalz\left(\mcali\right) = (Z_1,\ldots,Z_k)$ of an instance $\mcali = (G,S,T)$ is a collection of minimum $(S,T)$-cuts $Z_i$, $\card{Z_i} = p$, satisfying the following properties:
\begin{itemize}
\item there are less than $n$ cuts $Z_i$, {\em i.e.}  $1\le k < n$,
\item the reachable sets of cuts $Z_i$ fulfil $R(Z_i,S) \subsetneq R(Z_{i+1},S)$ for $i \in \set{1,\ldots,k-1}$,
\item for any minimum $(S,T)$-cut $X$, there is at least one cut $Z_i$ which has edges with $X$ in common: $X \cap Z_i \neq \emptyset$. 
\end{itemize}

We construct the \network iteratively.
Let $S_1 = S$ and $Z_1$ be the minimum \closest $(S_1,T)$-cut.
We fix $R_1 = R(Z_1,S)$. Let $S_2$ be the set of vertices incident to edges of $Z_1$ inside $R(Z_1,T)$: $S_2 = V^T(Z_1)=\set{v \notin R_1, (u,v) \in Z_1}$.

Next, we construct $Z_2$ which is the minimum closest $(S_2,T)$-cut in $G\backslash R(Z_1,S)$. If $\card{Z_2} > p$, the drainage construction stops. Otherwise, if $\card{Z_2} = p$, set $R_2$ follows the same scheme as $R_1$, $R_2=R(Z_2,S_2)$ in graph $G\backslash R(Z_1,S)$. We repeat the process until no more minimum $(S_i,T)$-cut $Z_i$ of size $p$ can be found. We denote by $k$ the number of cuts $Z_i$ produced and fix $R_{k+1} = R(Z_k,T)$. Cuts $Z_i$ form the \textit{minimum drainage cuts} of $\mcali$.

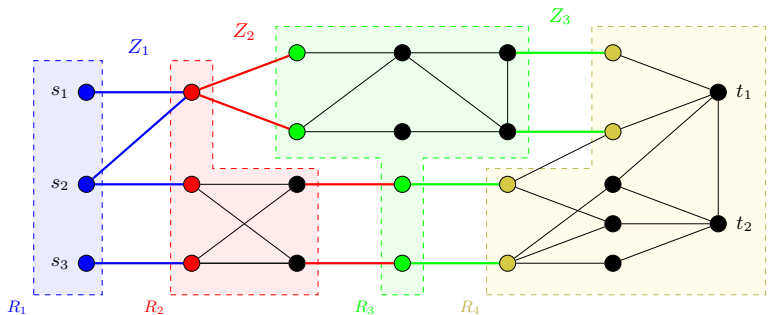
\begin{figure}[t]
\centering
\scalebox{.70}{\begin{tikzpicture}


\draw [dashed, color = blue, fill = white!92!blue] (1.0,0.4) -- (2.3,0.4) -- (2.3,4.85) -- (1.0,4.85) -- (1.0,0.4) node[below left] {$R_1$};
\draw [dashed, color = red, fill = white!92!red] (3.6,0.4) -- (6.4,0.4) -- (6.4,2.8) -- (4.4,2.8) -- (4.4,4.85) -- (3.6,4.85) -- (3.6,0.4) node[below left] {$R_2$};
\draw [dashed, color = green, fill = white!92!green] (7.6,0.4) -- (8.4,0.4) -- (8.4,3) -- (10.4,3) -- (10.4,5.5) -- (5.6,5.5) -- (5.6,3) -- (7.6,3) -- (7.6,0.4) node[below left] {$R_3$};
\draw [dashed, color = black!30!yellow, fill = white!90!yellow] (9.6,0.4) -- (14.9,0.4) -- (14.9,5.5) -- (11.6,5.5) -- (11.6,2.8) -- (9.6,2.8) -- (9.6,0.4) node[below left] {$R_4$};


\node[draw, circle, minimum height=0.2cm, minimum width=0.2cm, fill=blue] (P11) at (2*1,4.25) {};
\node[draw, circle, minimum height=0.2cm, minimum width=0.2cm, fill=blue] (P12) at (2*1,2.5) {};
\node[draw, circle, minimum height=0.2cm, minimum width=0.2cm, fill=blue] (P13) at (2*1,1) {};

\node[draw, circle, minimum height=0.2cm, minimum width=0.2cm, fill=red] (P21) at (2*2,4.25) {};
\node[draw, circle, minimum height=0.2cm, minimum width=0.2cm, fill=red] (P22) at (2*2,2.5) {};
\node[draw, circle, minimum height=0.2cm, minimum width=0.2cm, fill=red] (P23) at (2*2,1) {};

\node[draw, circle, minimum height=0.2cm, minimum width=0.2cm, fill=green] (P31) at (2*3,5) {};
\node[draw, circle, minimum height=0.2cm, minimum width=0.2cm, fill=green] (P32) at (2*3,3.5) {};
\node[draw, circle, minimum height=0.2cm, minimum width=0.2cm, fill=black] (P33) at (2*3,2.5) {};
\node[draw, circle, minimum height=0.2cm, minimum width=0.2cm, fill=black] (P34) at (2*3,1) {};

\node[draw, circle, minimum height=0.2cm, minimum width=0.2cm, fill=black] (P41) at (2*4,5) {};
\node[draw, circle, minimum height=0.2cm, minimum width=0.2cm, fill=black] (P42) at (2*4,3.5) {};
\node[draw, circle, minimum height=0.2cm, minimum width=0.2cm, fill=green] (P43) at (2*4,2.5) {};
\node[draw, circle, minimum height=0.2cm, minimum width=0.2cm, fill=green] (P44) at (2*4,1) {};

\node[draw, circle, minimum height=0.2cm, minimum width=0.2cm, fill=black] (P51) at (2*5,5) {};
\node[draw, circle, minimum height=0.2cm, minimum width=0.2cm, fill=black] (P52) at (2*5,3.5) {};
\node[draw, circle, minimum height=0.2cm, minimum width=0.2cm, fill=black!20!yellow] (P53) at (2*5,2.5) {};
\node[draw, circle, minimum height=0.2cm, minimum width=0.2cm, fill=black!20!yellow] (P54) at (2*5,1) {};

\node[draw, circle, minimum height=0.2cm, minimum width=0.2cm, fill=black!20!yellow] (P61) at (2*6,5) {};
\node[draw, circle, minimum height=0.2cm, minimum width=0.2cm, fill=black!20!yellow] (P62) at (2*6,3.5) {};
\node[draw, circle, minimum height=0.2cm, minimum width=0.2cm, fill=black] (P63) at (2*6,2.5) {};
\node[draw, circle, minimum height=0.2cm, minimum width=0.2cm, fill=black] (P64) at (2*6,1.75) {};
\node[draw, circle, minimum height=0.2cm, minimum width=0.2cm, fill=black] (P65) at (2*6,1) {};

\node[draw, circle, minimum height=0.2cm, minimum width=0.2cm, fill=black] (P71) at (2*7,4.25) {};
\node[draw, circle, minimum height=0.2cm, minimum width=0.2cm, fill=black] (P72) at (2*7,1.75) {};


\draw[line width = 1.2pt, color = blue] (P11) -- (P21);
\draw[line width = 1.2pt, color = red] (P21) -- (P31);
\draw (P31) -- (P41);
\draw (P41) -- (P51);
\draw[line width = 1.2pt, color = green] (P51) -- (P61);
\draw (P61) -- (P71);
\draw[line width = 1.2pt, color = blue] (P12) -- (P21);
\draw[line width = 1.2pt, color = red] (P21) -- (P32);
\draw (P32) -- (P42);
\draw (P42) -- (P52);
\draw[line width = 1.2pt, color = green] (P52) -- (P62);
\draw (P62) -- (P71);
\draw[line width = 1.2pt, color = blue] (P12) -- (P22);
\draw (P22) -- (P33);
\draw (P23) -- (P33);
\draw[line width = 1.2pt, color = red] (P33) -- (P43);
\draw[line width = 1.2pt, color = green] (P43) -- (P53);
\draw (P53) -- (P64);
\draw (P64) -- (P54);
\draw (P54) -- (P65);
\draw (P65) -- (P72);
\draw[line width = 1.2pt, color = blue] (P13) -- (P23);
\draw (P23) -- (P34);
\draw[line width = 1.2pt, color = red] (P34) -- (P44);
\draw[line width = 1.2pt, color = green] (P44) -- (P54);
\draw (P54) -- (P63);
\draw (P63) -- (P72);

\draw (P22) -- (P34);
\draw (P32) -- (P41);
\draw (P41) -- (P52);
\draw (P52) -- (P51);
\draw (P53) -- (P62);
\draw (P23) -- (P34);
\draw (P64) -- (P72);
\draw (P63) -- (P71);
\draw (P71) -- (P72);


\node[scale=1.2] at (1.5,4.25) {$s_1$};
\node[scale=1.2] at (1.5,2.5) {$s_2$};
\node[scale=1.2] at (1.5,1) {$s_3$};

\node[scale=1.2] at (14.5,4.25) {$t_1$};
\node[scale=1.2] at (14.5,1.75) {$t_2$};

\node[scale=1.2, color = blue] at (3.0,5.1) {$Z_1$};
\node[scale=1.2, color = red] at (5.0,5.4) {$Z_2$};
\node[scale=1.2, color = green] at (11.0,5.7) {$Z_3$};

\end{tikzpicture}}
\caption{The drainage (cuts $Z_i$, sets $R_i$ and $S_i$) for an instance containing graph $G$, sources $S = \set{s_1,s_2,s_3}$ and targets $T = \set{t_1,t_2}$. Here, $R_1 = S_1$ (in general, $R_1 \supseteq S_1$).}
\label{fig:skeleton}
\end{figure}

Fig.~\ref{fig:skeleton} provides us with an example of graph $G$ with $S = \set{s_1,s_2,s_3}$ and $T = \set{t_1,t_2}$ and indicates its drainage. The size of minimum $(S,T)$-cuts is $p=4$. Blue, red, and green edges represent minimum drainage cuts $Z_1$, $Z_2$, and $Z_3$, respectively. Similarly, blue, red, green, and yellow vertices represent sets $S_1=S$, $S_2$, $S_3$, and $S_4$. Reachable sets $R_1$, $R_2$, $R_3$, and $R_4$ are also appropriately colored. As the size of the minimum cut between $S_4$ (yellow vertices) and $T$ in graph $G\backslash R(Z_3,S)$ is greater than $p$, we have $k=3$. 

We emphasize that set $R_i$, which is $R(Z_i,S_i)$ taken in $G\backslash R(Z_{i-1},S)$, and set $R(Z_i,S)$ are different for $i \neq 1$. On the one hand, set $R(Z_i,S) = \bigcup_{\ell = 1}^{i} R_{\ell}$ contains the vertices reachable from $S$ in graph $G$ deprived of $Z_i$. On the other hand, set $R_i$ can be written $R_i = R(Z_i,S) \backslash R(Z_{i-1},S)$. Sets $R_i$ and $R_{i+1}$ are disjoint and nonempty, as $S_i \subseteq R_i$ and $S_{i+1} \subseteq R_{i+1}$. Moreover, the minimum drainage cuts are disjoint: $Z_i \cap Z_j = \emptyset$. The number $k$ of minimum drainage cuts is less than $n$ and the running time needed to construct the drainage is in $O(mnp)$. The reachable vertex sets of cuts $Z_i$ are included one into another: $R(Z_i,S) \subsetneq R(Z_{i+1},S)$. The following theorem shows that, for any minimum $(S,T)$-cut $X$, there is a cut $Z_i$ containing edges of $X$. Among cuts $Z_i$ sharing edges with $X$, we are interested in the one with the smallest index.

\begin{definition}[Front of $X$]
Front of $X$, $i(X)$, $1\le i(X) \le k$ is the smallest index $i$ such that $Z_i \cap X \neq \emptyset$.
\label{def:front}
\end{definition}

The next theorem states the properties of $i(X)$ for any minimum $(S,T)$-cut.

\begin{theorem}
Any minimum $(S,T)$-cut $X$ admits a front $i(X)$ and\\ $X \cap E\left[R(Z_{i(X)},S)\right] = \emptyset$.
\label{th:skeleton}
\end{theorem}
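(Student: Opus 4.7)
The plan is to prove both conclusions together via an induction on $j$, using submodularity of the edge-boundary function as the main technical tool. First I record an uncrossing principle: if $X$ is a minimum $(S,T)$-cut of size $p$ and $Z$ is the unique minimum closest $(S,T)$-cut (Lemma~\ref{le:impcutsenum}) in the same graph, then $R(Z,S) \subseteq R(X,S)$. Writing $A = R(X,S)$, $B = R(Z,S)$, and denoting by $\partial U$ the set of edges with exactly one endpoint in $U$, the four sets $\partial A, \partial B, \partial(A\cap B), \partial(A\cup B)$ are all $(S,T)$-cuts (since $S \subseteq A \cap B$ and $T$ is disjoint from $A \cup B$) and therefore of size at least $p$. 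Submodularity gives $|\partial A| + |\partial B| \geq |\partial(A\cap B)| + |\partial(A\cup B)|$ with left-hand side exactly $2p$, so $\partial(A\cap B)$ is a minimum $(S,T)$-cut. The closeness of $Z$ then rules out $A\cap B \subsetneq B$, hence $B \subseteq A$. Note also that since $X$ is minimum, $X = \partial R(X,S)$.

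Next I prove by induction on $j \geq 0$: \emph{if a minimum $(S,T)$-cut $X$ satisfies $X \cap Z_\ell = \emptyset$ for every $\ell \leq j$, then $R(Z_j,S) \subseteq R(X,S)$ and $X$ is a minimum $(S_{j+1},T)$-cut of size $p$ in $G \setminus R(Z_j,S)$.} The base $j=0$ is immediate ($R(Z_0,S) = \emptyset$, $S_1 = S$). For the step, first observe $S_{j+1} = V^T(Z_j) \subseteq R(X,S)$: every $Z_j$-edge lies in $G \setminus X$ by hypothesis and has one endpoint in $R(Z_j,S) \subseteq R(X,S)$, forcing its other endpoint into $R(X,S)$. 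Combining $R(Z_j,S) \subseteq R(X,S)$ with $X = \partial R(X,S)$ shows that no edge of $X$ has an endpoint in $R(Z_j,S)$, so $X$ lives entirely inside $G \setminus R(Z_j,S)$ and forms an $(S_{j+1},T)$-cut there (a path avoiding $X$ from $S_{j+1}$ to $T$ in the subgraph would extend via a $Z_j$-edge to an $(S,T)$-path avoiding $X$ in $G$). Applying the uncrossing principle inside $G \setminus R(Z_j,S)$ to $X$ and $Z_{j+1}$, both minimum $(S_{j+1},T)$-cuts of size $p$, yields $R_{j+1} \subseteq R(X,S_{j+1})$ (taken in the subgraph) and hence $R(Z_{j+1},S) \subseteq R(X,S)$, completing the step.

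Both conclusions of the theorem follow. \textbf{Existence of $i(X)$:} if $X \cap Z_i = \emptyset$ for every $i \in \set{1,\ldots,k}$, applying the induction at $j = k$ would make $X$ an $(S_{k+1},T)$-cut of size $p$ in $G \setminus R(Z_k,S)$, contradicting the termination condition of the drainage (the minimum $(S_{k+1},T)$-cut in this subgraph has size strictly greater than $p$). Hence $i(X)$ is well-defined. \textbf{Emptiness:} setting $i = i(X)$, the inductive statement at $j = i-1$ applies (by minimality of $i$), and one final uncrossing of $X$ with $Z_i$ inside $G \setminus R(Z_{i-1},S)$ gives $R(Z_i,S) \subseteq R(X,S)$. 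Since $X = \partial R(X,S)$, any edge with both endpoints in $R(Z_i,S)$ lies strictly inside $R(X,S)$ and therefore cannot belong to $X$, which is precisely $X \cap E\left[R(Z_i,S)\right] = \emptyset$.

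The main obstacle is the bookkeeping in the inductive step: one must confirm that $X$'s structural properties (being a minimum cut, being the edge boundary of its source-side, avoiding the previously used $Z_\ell$) transfer correctly to the subgraph $G \setminus R(Z_j,S)$, and that $X$ remains a minimum cut there rather than merely a feasible one (this uses the existence of $Z_{j+1}$ as a minimum cut of size $p$ in the subgraph). The hypothesis $X \cap Z_\ell = \emptyset$ for $\ell \leq j$ is used crucially to keep every $Z_\ell$-edge available in $G \setminus X$, which is what aligns reachability in the subgraph with reachability in $G$ and permits iterating the uncrossing argument.
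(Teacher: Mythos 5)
Your proof is correct, and it takes a more systematic route than the paper's. The paper argues via a critical index: it first asserts that no edge of $X$ lies in $E\left[R(Z_1,S)\right]$ ``because $Z_1$ is the minimum closest cut'', picks an index $i$ where $X$ has no edge inside $E\left[R(Z_i,S)\right]$ but touches $R_{i+1}$, and, in the case $X \cap Z_i = \emptyset$, derives a contradiction from the uniqueness of the minimum closest $(S_{i+1},T)$-cut, relying on the unproved claim that the minimum closest cut $Z$ satisfies $R(Z,S_{i+1}) \subsetneq R(X,S_{i+1})$ for any other minimum cut $X$. Your submodular uncrossing lemma is precisely the justification those two assertions need: it establishes that the source side of the minimum closest cut is contained in the source side of every minimum cut, and your level-by-level induction propagates this to $R(Z_{i(X)},S) \subseteq R(X,S)$, from which both conclusions follow cleanly (the identity $X = \partial R(X,S)$ excludes any edge of $X$ interior to $R(Z_{i(X)},S)$, and the drainage termination condition forces the front to exist). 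What your version buys is a self-contained argument whose only external ingredient is submodularity of the edge boundary; what the paper's buys is brevity at the cost of leaning on properties of closest cuts it never proves. Two small points to tighten: the sentence deducing that no edge of $X$ has an endpoint in $R(Z_j,S)$ also needs $X \cap Z_j = \emptyset$ (an edge of $X$ with exactly one endpoint in $R(Z_j,S)$ would lie in $\partial R(Z_j,S) = Z_j$), which your hypothesis supplies but which you should invoke explicitly; and at $j=k$ the second half of your inductive statement should be weakened to ``$X$ is an $(S_{k+1},T)$-cut of size $p$'', which is what you in fact use to reach the contradiction.
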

\begin{proof}
First, cut $X$ cannot be entirely included in $E\left[R_{k+1}\right]$. If it was, it would be a minimum $(S_{k+1},T)$-cut of size $p$, which contradicts the drainage definition. So, some edges of $X$ are incident to $R(Z_k,S)$. 

Second, no edge of $X$ belongs to $E\left[R(Z_1,S)\right]$ as cut $Z_1$ is the minimum closest $(S,T)$-cut. Therefore, there is an index $i \geq 1$ such that no edge of $X$ belongs to $E\left[R(Z_{i},S)\right]$ but at least one has an endpoint in $R_{i+1}$. 

Obviously, if an edge of $X$ belongs to $Z_{i}$, the theorem holds: $i = i(X)$.
We study the case where no edge of $X$ belongs to $Z_{i}$ and there is an edge $e$ of $X$, $e \in E\left[R_{i+1}\right]$.
According to the definition of index $i$, no edge of $X$ belongs to $E\left[R(Z_{i},S)\right]$, and therefore all edges of $X$ have to be on the target side $E\left[R(Z_{i},T)\right]$. Therefore, $X$ is a minimum $(S_{i+1},T)$-cut in graph $G\backslash R(Z_{i},S)$. Either cut $X$ is a minimum closest $(S_{i+1},T)$-cut (and then we fix $Z=X$) or the minimum closest $(S_{i+1},T)$-cut $Z$ is different than $X$ and it fulfils $R(Z,S_{i+1}) \subsetneq R(X,S_{i+1})$. Since there is an edge $e\in X\cap E\left[R_{i+1}\right]$, then one of its endpoint $v\in R_{i+1}$ belongs to $R(X,T)$. Consequently, $v \notin R(Z,S_{i+1})$. This brings a contradiction: cutset $Z_{i+1}$ is the unique minimum closest $(S_{i+1},T)$-cut and $R(Z_{i+1},S_{i+1}) = R_{i+1}$. As vertex $v$ can be reached from $S_{i+1}$ after the removal of $Z_{i+1}$ but not after the removal of $Z$, cuts $Z$ and $Z_{i+1}$ differ. Thus, cut $Z$ cannot be the minimum closest $(S_{i+1},T)$-cut.

In summary, there is an index $i$ such that no edge of $X$ belongs to $E\left[R(Z_{i},S)\right]$ and, moreover, $X \cap Z_i \neq \emptyset$. This means that there is no index $\ell < i$ such that $X \cap Z_{\ell} \neq \emptyset$. Consequently, index $i$ is the front of $X$: $i = i(X)$.
\end{proof}

The reader can verify that any minimum $(S,T)$-cut of $G$ contains some edges of at least one cut $Z_1$, $Z_2$, or $Z_3$ in Fig.~\ref{fig:skeleton}.

\subsection{Menger's paths} \label{subsec:menger}

Menger's theorem states that the size of the minimum edge $(S,T)$-cuts is equal to the maximum number of edge-disjoint $(S,T)$-paths~\cite{Me27}. One of these largest sets of edge-disjoint $(S,T)$-paths can be found in polynomial time using flow techniques~\cite{FoFu56}. We denote by $\mcalq = \set{Q_1,\ldots,Q_p}$ such a set of $p$ edge-disjoint $(S,T)$-paths, taken arbitrarily. We call paths from $\mcalq$ \textit{Menger's paths}, to distinguish them from other paths in graph $G$.

Set $\mcalq$ is used to identify minimum $(S,T)$-cuts. It is fixed throughout the course of the proofs in this article. The observation that edges of all minimum $(S,T)$-cuts belong to the paths from $\mcalq$ is formulated in:

\begin{lemma}
For any minimum $(S,T)$-cut $X$, each Menger's path $Q_j$ contains one edge of $X$. If $Q_j : v_1^{(j)} \cdot v_2^{(j)} \cdots v_{\ell}^{(j)} \cdot v_{\ell+1}^{(j)} \cdots$  and $(v_{\ell}^{(j)},v_{\ell +1}^{(j)}) \in X$, then $v_{\ell}^{(j)} \in R(X,S)$ and $v_{\ell+1}^{(j)} \in R(X,T)$.
\label{le:menger}
\end{lemma}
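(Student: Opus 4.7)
The plan is to rely on two basic counting facts and then track the path as it crosses the cut.

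First, I would argue that each Menger's path $Q_j$ contains \emph{at least} one edge of $X$. This is immediate: $Q_j$ connects a vertex in $S$ to a vertex in $T$, and $X$ is an $(S,T)$-cut, so $Q_j$ must use at least one edge from $X$; otherwise $Q_j$ would survive in $G\setminus X$ and still connect $S$ to $T$. Then, by Menger's theorem, $|\mcalq| = p = |X|$, and since the paths in $\mcalq$ are pairwise edge-disjoint, each path sharing at least one edge with $X$ forces the total count $\sum_j |Q_j \cap X|$ to be at least $p$. As the union $\bigcup_j (Q_j \cap X)$ is a subset of $X$ of size at most $p$, equality must hold: each Menger's path contains \emph{exactly} one edge of $X$.

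For the second part, fix $j$ and let $(v_\ell^{(j)}, v_{\ell+1}^{(j)})$ be the unique edge of $Q_j$ that lies in $X$. I would walk along $Q_j$ from its $S$-endpoint $v_1^{(j)}$. Every edge of the prefix $v_1^{(j)} \cdot v_2^{(j)} \cdots v_\ell^{(j)}$ lies outside $X$ (by uniqueness of the cut-edge on this path), so this prefix is a walk in $G \setminus X$ starting at a vertex of $S$. Hence $v_\ell^{(j)}$ is reachable from $S$ in $G\setminus X$, i.e. $v_\ell^{(j)} \in R(X,S)$. Symmetrically, the suffix $v_{\ell+1}^{(j)} \cdot v_{\ell+2}^{(j)} \cdots$ ends in $T$ and uses no edge of $X$, so $v_{\ell+1}^{(j)}$ is reachable from $T$ in $G\setminus X$, placing it in $R(X,T)$.

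The only subtlety worth double-checking is that $R(X,S)$ and $R(X,T)$ are disjoint, so that the endpoints of the cut-edge really lie on opposite sides rather than both being, say, in $R(X,S)$. This follows from $X$ being an $(S,T)$-cut: if some vertex were simultaneously reachable from $S$ and from $T$ in $G\setminus X$, concatenating the two witnessing walks would produce an $(S,T)$-walk in $G\setminus X$, contradicting the cut property. No other step looks delicate; the main idea is simply the pigeonhole combination of Menger's theorem with edge-disjointness, and then a one-line traversal argument on each $Q_j$.
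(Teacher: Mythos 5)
Your proof is correct and follows essentially the same route as the paper: the pigeonhole count combining $\card{\mcalq}=\card{X}$ with edge-disjointness to get exactly one cut-edge per path, and then the observation that the prefix (resp.\ suffix) of $Q_j$ survives in $G\backslash X$. Your direct reachability argument for the second part is in fact a cleaner rendering of the paper's contradiction-based version of the same idea, and the closing remark on the disjointness of $R(X,S)$ and $R(X,T)$ is a harmless extra.
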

\begin{proof}[Proof of Lemma~\ref{le:menger}]
If a Menger's path $Q_j$ did not contain any edge of cut $X$, then set $X$ would not be a $(S,T)$-cut. Similarly, if path $Q_j$ contained  at least two edges of cut $X$, then another Menger's path would not contain any edge of $X$ as $\card{\mcalq} = \card{X}$.

We suppose that $(v_{\ell}^{(j)},v_{\ell +1}^{(j)})$ is the edge of $X$ in path $Q_j$. To prove that $v_{\ell}^{(j)} \in R(X,S)$ and $v_{\ell+1}^{(j)} \in R(X,T)$, we admit the contrary. 
As $Q_j$ contains one edge of $X$, the segment $v_1^{(j)}\cdot v_2^{(j)}\cdot \ldots\cdot v_{\ell}^{(j)}$ is open in $G \backslash X$ between $v_1^{(j)} \in S$ and $v_{\ell}^{(j)} \in R(X,T)$. Therefore, $X$ is not an $(S,T)$-cut as at least one vertex of $R(X,S)$ is connected to $R(X,T)$, a contradiction.
\end{proof}

Its consequence is that each edge of a cut $Z_i$ belongs to a Menger's path.
Fig.~\ref{fig:mengerspaths} illustrates the Menger's paths on the instance $(G,S,T)$ already introduced in Fig.~\ref{fig:skeleton}. As the minimum $(S,T)$-cut size is four, there are four edge-disjoint $(S,T)$-paths distinguished with colors. Menger's paths are naturally oriented from sources to targets.

\section{Counting minimum edge $(S,T)$-cuts in undirected graphs} \label{sec:algo}

We describe our algorithm which counts all minimum $(S,T)$-cuts in an undirected graph $G$. Based on the concepts introduced in Section~\ref{sec:skeleton}, we prove not only that it achieves its objective but also that its time complexity is FPT.

\begin{figure}[t]
\centering
\scalebox{.7}{\begin{tikzpicture}


\draw [dashed, color = white!40!gray, fill = white!92!gray] (1.0,0.4) -- (2.3,0.4) -- (2.3,4.85) -- (1.0,4.85) -- (1.0,0.4) node[below left] {$R_1$};
\draw [dashed, color = white!40!gray, fill = white!92!gray] (3.6,0.4) -- (6.4,0.4) -- (6.4,2.8) -- (4.4,2.8) -- (4.4,4.85) -- (3.6,4.85) -- (3.6,0.4) node[below left] {$R_2$};
\draw [dashed, color = white!40!gray, fill = white!92!gray] (7.6,0.4) -- (8.4,0.4) -- (8.4,3) -- (10.4,3) -- (10.4,5.5) -- (5.6,5.5) -- (5.6,3) -- (7.6,3) -- (7.6,0.4) node[below left] {$R_3$};
\draw [dashed, color = white!40!gray, fill = white!92!gray] (9.6,0.4) -- (14.9,0.4) -- (14.9,5.5) -- (11.6,5.5) -- (11.6,2.8) -- (9.6,2.8) -- (9.6,0.4) node[below left] {$R_4$};


\node[draw, circle, minimum height=0.2cm, minimum width=0.2cm, fill=black] (P11) at (2*1,4.25) {};
\node[draw, circle, minimum height=0.2cm, minimum width=0.2cm, fill=black] (P12) at (2*1,2.5) {};
\node[draw, circle, minimum height=0.2cm, minimum width=0.2cm, fill=black] (P13) at (2*1,1) {};

\node[draw, circle, minimum height=0.2cm, minimum width=0.2cm, fill=black] (P21) at (2*2,4.25) {};
\node[draw, circle, minimum height=0.2cm, minimum width=0.2cm, fill=black] (P22) at (2*2,2.5) {};
\node[draw, circle, minimum height=0.2cm, minimum width=0.2cm, fill=black] (P23) at (2*2,1) {};

\node[draw, circle, minimum height=0.2cm, minimum width=0.2cm, fill=black] (P31) at (2*3,5) {};
\node[draw, circle, minimum height=0.2cm, minimum width=0.2cm, fill=black] (P32) at (2*3,3.5) {};
\node[draw, circle, minimum height=0.2cm, minimum width=0.2cm, fill=black] (P33) at (2*3,2.5) {};
\node[draw, circle, minimum height=0.2cm, minimum width=0.2cm, fill=black] (P34) at (2*3,1) {};

\node[draw, circle, minimum height=0.2cm, minimum width=0.2cm, fill=black] (P41) at (2*4,5) {};
\node[draw, circle, minimum height=0.2cm, minimum width=0.2cm, fill=black] (P42) at (2*4,3.5) {};
\node[draw, circle, minimum height=0.2cm, minimum width=0.2cm, fill=black] (P43) at (2*4,2.5) {};
\node[draw, circle, minimum height=0.2cm, minimum width=0.2cm, fill=black] (P44) at (2*4,1) {};

\node[draw, circle, minimum height=0.2cm, minimum width=0.2cm, fill=black] (P51) at (2*5,5) {};
\node[draw, circle, minimum height=0.2cm, minimum width=0.2cm, fill=black] (P52) at (2*5,3.5) {};
\node[draw, circle, minimum height=0.2cm, minimum width=0.2cm, fill=black] (P53) at (2*5,2.5) {};
\node[draw, circle, minimum height=0.2cm, minimum width=0.2cm, fill=black] (P54) at (2*5,1) {};

\node[draw, circle, minimum height=0.2cm, minimum width=0.2cm, fill=black] (P61) at (2*6,5) {};
\node[draw, circle, minimum height=0.2cm, minimum width=0.2cm, fill=black] (P62) at (2*6,3.5) {};
\node[draw, circle, minimum height=0.2cm, minimum width=0.2cm, fill=black] (P63) at (2*6,2.5) {};
\node[draw, circle, minimum height=0.2cm, minimum width=0.2cm, fill=black] (P64) at (2*6,1.75) {};
\node[draw, circle, minimum height=0.2cm, minimum width=0.2cm, fill=black] (P65) at (2*6,1) {};

\node[draw, circle, minimum height=0.2cm, minimum width=0.2cm, fill=black] (P71) at (2*7,4.25) {};
\node[draw, circle, minimum height=0.2cm, minimum width=0.2cm, fill=black] (P72) at (2*7,1.75) {};


\draw[->,line width = 1.4pt, color = brown] (P11) -- (P21);
\draw[->,line width = 1.4pt, color = brown] (P21) -- (P31);
\draw[->,line width = 1.4pt, color = brown] (P31) -- (P41);
\draw[->,line width = 1.4pt, color = brown] (P41) -- (P51);
\draw[->,line width = 1.4pt, color = brown] (P51) -- (P61);
\draw[->,line width = 1.4pt, color = brown] (P61) -- (P71);
\draw[->,line width = 1.4pt, color = purple] (P12) -- (P21);
\draw[->,line width = 1.4pt, color = purple] (P21) -- (P32);
\draw (P32) -- (P42);
\draw (P42) -- (P52);
\draw[->,line width = 1.4pt, color = purple] (P52) -- (P62);
\draw[->,line width = 1.4pt, color = purple] (P62) -- (P71);
\draw[->,line width = 1.4pt, color = orange] (P12) -- (P22);
\draw[->,line width = 1.4pt, color = orange] (P22) -- (P33);
\draw[->,line width = 1.4pt, color = orange] (P33) -- (P43);
\draw[->,line width = 1.4pt, color = orange] (P43) -- (P53);
\draw[->,line width = 1.4pt, color = orange] (P53) -- (P64);
\draw[->,line width = 1.4pt, color = orange] (P64) -- (P54);
\draw[->,line width = 1.4pt, color = orange] (P54) -- (P65);
\draw[->,line width = 1.4pt, color = orange] (P65) -- (P72);
\draw[->,line width = 1.4pt, color = black!50!green]  (P13) -- (P23);
\draw[->,line width = 1.4pt, color = black!50!green] (P23) -- (P34);
\draw[->,line width = 1.4pt, color = black!50!green] (P34) -- (P44);
\draw[->,line width = 1.4pt, color = black!50!green] (P44) -- (P54);
\draw[->,line width = 1.4pt, color = black!50!green] (P54) -- (P63);
\draw[->,line width = 1.4pt, color = black!50!green] (P63) -- (P72);

\draw (P23) -- (P33);
\draw (P22) -- (P34);
\draw[->,line width = 1.4pt, color = purple] (P32) -- (P41);
\draw[->,line width = 1.4pt, color = purple] (P41) -- (P52);
\draw (P52) -- (P51);
\draw (P53) -- (P62);
\draw (P64) -- (P72);
\draw (P63) -- (P71);
\draw (P71) -- (P72);


\node[scale=1.2] at (1.5,4.25) {$s_1$};
\node[scale=1.2] at (1.5,2.5) {$s_2$};
\node[scale=1.2] at (1.5,1) {$s_3$};

\node[scale=1.2] at (14.5,4.25) {$t_1$};
\node[scale=1.2] at (14.5,1.75) {$t_2$};


\end{tikzpicture}}
\caption{Menger's paths in graph $G$ with sources $S = \set{s_1,s_2,s_3}$, targets $T = \set{t_1,t_2}$.} 
\label{fig:mengerspaths}
\end{figure}
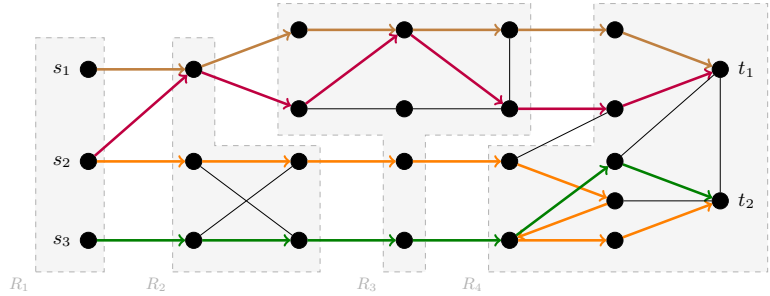

\subsection{Dams and dry areas}

We begin by the definition of \textit{dams} which are subsets of cuts $Z_i$ of the drainage of $G$.

\begin{definition}[Dam]
A dam $B_i$ is a nonempty subset of a minimum drainage cut $Z_i$, {\em i.e.} $B_i \subseteq Z_i$, $B_i \neq \emptyset$.
\label{def:dams}
\end{definition}

Thanks to this definition, Theorem~\ref{th:skeleton} together with the concept of the front makes us observe that any minimum $(S,T)$-cut $X$ contains a \textit{front dam}:
\begin{definition}[Front dam]
The \textit{front dam} of a minimum $(S,T)$-cut $X$ is $B_{i(X)} = X\cap Z_{i(X)}$. 
\label{def:front_dam}
\end{definition}
We know that all edges in $X \backslash B_{i(X)}$ belong to the target side of $Z_{i(X)}$, $E\left[R(Z_{i(X)},T)\right]$, and the source side of $Z_{i(X)}$ is empty, $X \cap E\left[R(Z_{i(X)},S)\right] = \emptyset$. 
If $X\backslash B_{i(X)} = \emptyset$, then $X = Z_{i(X)}$. 
A dam $B_i$ is characterized by:
\begin{itemize}
\item its \textit{level}, {\em i.e.} the index $i$ of the cut $Z_i$ it belongs to,
\item its \textit{signature} $\sigma(B_i) = \set{Q_j : B_i \cap Q_j \neq \emptyset}$, {\em i.e.} the set of Menger's paths passing through it.
\end{itemize}

Choking graph $G$ with dam $B_{i(X)}$ puts in evidence a subgraph which still connects $S$ and $T$ through $X\backslash B_{i(X)}$. Our idea is to dam a graph gradually in order to dry it completely.

The description of the method we devised to reach this goal requires a transformation of $G$ into $G_{D}$ which is actually $G$ with certain edges directed ($G_{D}$ is a mixed graph). 
If edge $e$ does not belong to a Menger's path, it stays undirected. For path $Q_j : v_1^{(j)}\cdot v_2^{(j)}\cdot v_3^{(j)}\cdots$
, edges $(v_i^{(j)},v_{i+1}^{(j)})$ become arcs $(v_i^{(j)},v_{i+1}^{(j)})$, respecting the natural flow from sources to targets.

Fig.~\ref{fig:mengerspaths} illustrates graph $G_{D}$. Arrows indicate the arcs while bare segments represent its edges. According to Lemma~\ref{le:menger}, any minimum $(S,T)$-cut of $G$ is made up of arcs in $G_{D}$. Minimum drainage cuts $Z_i$ are thus composed of arcs, directed from $R_i$ to $R_{i+1}$. We insist on the fact that graph $G_{D}$ is only used to define the notion of dry area, we do not count minimum cuts in it.

\begin{definition}[Dry area]
The dry area of $B_i$ is the set $A^*(B_i)$ which contains the vertices of $G$ which are not reachable from $S$ in graph $G_{D}$ deprived of $B_i$, {\em i.e.} $G_{D}\backslash B_i$. 
\label{def:dry_area}
\end{definition}

In a less formal way, set $A^*(B_i)$ keeps vertices which are dried as $B_i$ is the only means to irrigate them. The definition of the dry instance follows.

\begin{figure}[h]
\centering
\scalebox{.70}{\begin{tikzpicture}


\draw [dashed, color = white!40!gray, fill = white!92!gray] (1.0,0.4) -- (2.3,0.4) -- (2.3,4.85) -- (1.0,4.85) -- (1.0,0.4) node[below left] {$R_1$};
\draw [dashed, color = white!40!gray, fill = white!92!gray] (3.6,0.4) -- (6.4,0.4) -- (6.4,2.8) -- (4.4,2.8) -- (4.4,4.85) -- (3.6,4.85) -- (3.6,0.4) node[below left] {$R_2$};
\draw [dashed, color = white!40!gray, fill = white!92!gray] (7.6,0.4) -- (8.4,0.4) -- (8.4,3) -- (10.4,3) -- (10.4,5.5) -- (5.6,5.5) -- (5.6,3) -- (7.6,3) -- (7.6,0.4) node[below left] {$R_3$};
\draw [dashed, color = white!40!gray, fill = white!92!gray] (9.6,0.4) -- (14.9,0.4) -- (14.9,5.5) -- (11.6,5.5) -- (11.6,2.8) -- (9.6,2.8) -- (9.6,0.4) node[below left] {$R_4$};


\node[draw, circle, minimum height=0.2cm, minimum width=0.2cm, fill=black] (P11) at (2*1,4.25) {};
\node[draw, circle, minimum height=0.2cm, minimum width=0.2cm, fill=black] (P12) at (2*1,2.5) {};
\node[draw, circle, minimum height=0.2cm, minimum width=0.2cm, fill=black] (P13) at (2*1,1) {};

\node[draw, circle, minimum height=0.2cm, minimum width=0.2cm, fill=green] (P21) at (2*2,4.25) {};
\node[draw, circle, minimum height=0.2cm, minimum width=0.2cm, fill=black] (P22) at (2*2,2.5) {};
\node[draw, circle, minimum height=0.2cm, minimum width=0.2cm, fill=black] (P23) at (2*2,1) {};

\node[draw, circle, minimum height=0.2cm, minimum width=0.2cm, fill=blue] (P31) at (2*3,5) {};
\node[draw, circle, minimum height=0.2cm, minimum width=0.2cm, fill=blue] (P32) at (2*3,3.5) {};
\node[draw, circle, minimum height=0.2cm, minimum width=0.2cm, fill=black] (P33) at (2*3,2.5) {};
\node[draw, circle, minimum height=0.2cm, minimum width=0.2cm, fill=black] (P34) at (2*3,1) {};

\node[draw, circle, minimum height=0.2cm, minimum width=0.2cm, fill=blue] (P41) at (2*4,5) {};
\node[draw, circle, minimum height=0.2cm, minimum width=0.2cm, fill=blue] (P42) at (2*4,3.5) {};
\node[draw, circle, minimum height=0.2cm, minimum width=0.2cm, fill=black] (P43) at (2*4,2.5) {};
\node[draw, circle, minimum height=0.2cm, minimum width=0.2cm, fill=black] (P44) at (2*4,1) {};

\node[draw, circle, minimum height=0.2cm, minimum width=0.2cm, fill=blue] (P51) at (2*5,5) {};
\node[draw, circle, minimum height=0.2cm, minimum width=0.2cm, fill=blue] (P52) at (2*5,3.5) {};
\node[draw, circle, minimum height=0.2cm, minimum width=0.2cm, fill=black] (P53) at (2*5,2.5) {};
\node[draw, circle, minimum height=0.2cm, minimum width=0.2cm, fill=black] (P54) at (2*5,1) {};

\node[draw, circle, minimum height=0.2cm, minimum width=0.2cm, fill=blue] (P61) at (2*6,5) {};
\node[draw, circle, minimum height=0.2cm, minimum width=0.2cm, fill=black!30!purple] (P62) at (2*6,3.5) {};
\node[draw, circle, minimum height=0.2cm, minimum width=0.2cm, fill=black] (P63) at (2*6,2.5) {};
\node[draw, circle, minimum height=0.2cm, minimum width=0.2cm, fill=black] (P64) at (2*6,1.75) {};
\node[draw, circle, minimum height=0.2cm, minimum width=0.2cm, fill=black] (P65) at (2*6,1) {};

\node[draw, circle, minimum height=0.2cm, minimum width=0.2cm, fill=black!30!purple] (P71) at (2*7,4.25) {};
\node[draw, circle, minimum height=0.2cm, minimum width=0.2cm, fill=black] (P72) at (2*7,1.75) {};


\draw[->,line width = 1.4pt, color = black] (P11) -- (P21);
\draw[->,line width = 1.4pt, color = red] (P21) -- (P31);
\draw[->,line width = 1.4pt, color = blue] (P31) -- (P41);
\draw[->,line width = 1.4pt, color = blue] (P41) -- (P51);
\draw[->,line width = 1.4pt, color = blue] (P51) -- (P61);
\draw[->,line width = 1.4pt, color = blue] (P61) -- (P71);
\draw[->,line width = 1.4pt, color = black] (P12) -- (P21);
\draw[->,line width = 1.4pt, color = red] (P21) -- (P32);
\draw[color=blue] (P32) -- (P42);
\draw[color=blue] (P42) -- (P52);
\draw[->,line width = 1.4pt, color = blue] (P52) -- (P62);
\draw[->,line width = 1.4pt, color = black] (P62) -- (P71);
\draw[->,line width = 1.4pt, color = black] (P12) -- (P22);
\draw[->,line width = 1.4pt, color = black] (P22) -- (P33);
\draw[->,line width = 1.4pt, color = red, dashed] (P33) -- (P43);
\draw[->,line width = 1.4pt, color = black] (P43) -- (P53);
\draw[->,line width = 1.4pt, color = black] (P53) -- (P64);
\draw[->,line width = 1.4pt, color = black] (P64) -- (P54);
\draw[->,line width = 1.4pt, color = black] (P54) -- (P65);
\draw[->,line width = 1.4pt, color = black] (P65) -- (P72);
\draw[->,line width = 1.4pt, color = black]  (P13) -- (P23);
\draw[->,line width = 1.4pt, color = black] (P23) -- (P34);
\draw[->,line width = 1.4pt, color = red, dashed] (P34) -- (P44);
\draw[->,line width = 1.4pt, color = black] (P44) -- (P54);
\draw[->,line width = 1.4pt, color = black] (P54) -- (P63);
\draw[->,line width = 1.4pt, color = black] (P63) -- (P72);

\draw (P23) -- (P33);
\draw (P22) -- (P34);
\draw[->,line width = 1.4pt, color = blue] (P32) -- (P41);
\draw[->,line width = 1.4pt, color = blue] (P41) -- (P52);
\draw (P52) -- (P51);
\draw (P53) -- (P62);
\draw (P23) -- (P34);
\draw (P64) -- (P72);
\draw (P63) -- (P71);
\draw (P71) -- (P72);


\node[scale=1.2] at (1.5,4.25) {$s_1$};
\node[scale=1.2] at (1.5,2.5) {$s_2$};
\node[scale=1.2] at (1.5,1) {$s_3$};

\node[scale=1.2] at (14.5,4.25) {$t_1$};
\node[scale=1.2] at (14.5,1.75) {$t_2$};

\node[scale=1.2, color = red] at (5.0,5.4) {$B_2$};
\node[scale=1.2, color = green] at (3.3,5.2) {$S^*(B_2)$};
\node[scale=1.2, color = blue] at (8.2,5.9) {$A^*(B_2)$};
\node[scale=1.2, color = black!30!purple] at (13.5,5.1) {$T^*(B_2)$};

\end{tikzpicture}}
\caption{An example of dam $B_2$ and its dry instance $\mcald\left(\mcali,B_2\right) = \left(G^*(B_2),S^*(B_2),T^*(B_2)\right)$}
\label{fig:dry_area}
\end{figure}

\begin{definition}[Dry instance] The dry instance induced by a dam $B_i$ is an instance $\mcald\left(\mcali,B_i\right) =$ $\left(G^*(B_i),S^*(B_i),T^*(B_i)\right)$ with graph $G^*(B_i)= \left(V^*(B_i),E^*(B_i)\right)$. In particular,
\begin{itemize}
\item set $S^*(B_i)$ keeps vertices reachable from $S$ ``just before'' dam $B_i$. Formally, it contains the tails of arcs in $B_i$: $S^*(B_i) = \set{u : (u,v) \in B_i}$,
\item set $T^*(B_i)$ keeps vertices placed ``after'' dam $B_i$ which become irrigated in $G_{D}\backslash B_i$. Formally, it contains the heads of arcs which have their tail either inside $S^*(B_i)$ or inside $A^*(B_i)$ and their head outside: $T^*(B_i) = \set{v \notin A^*(B_i): (u,v) \in E, u \in S^*(B_i) \cup A^*(B_i)}$,
\item set $V^*(B_i)$ is the union: $V^*(B_i) = S^*(B_i) \cup A^*(B_i) \cup T^*(B_i)$,
\item set $E^*(B_i)$ stores edges of $G$ which lie inside the dry area of $B_i$ or on its border (one endpoint is outside) in $G_D$. Formally, it is composed of edges with two endpoints in $V^*(B_i)$ and at least one of them in $A^*(B_i)$: $E^*(B_i) = \set{(u,v) \in E: u \in A^*(B_i), v \in V^*(B_i)}$.
\end{itemize}
\label{def:dry_instance}
\end{definition}

Fig.~\ref{fig:dry_area} gives an example of dam $B_2 \subseteq Z_2$ and the dry instance it induces in $G$. Its arcs are drawn in red, arcs of $Z_2\backslash B_2$ are red and dashed. Blue vertices represent the vertices unreachable from $S$ in $G_{D} \backslash B_2$, {\em i.e.} set $A^*(B_2)$. Sets $S^*(B_2)$ and $T^*(B_2)$ are drawn in green and purple, respectively. Set $E^*(B_2)$ is composed of dam $B_2$ (red arcs) and blue edges/arcs.

An important property of dry areas is that there is no arc $(u,v)$ of $G_D$ ``entering'' in the dry area $A^*(B_i)$, except for arcs in $B_i$.

\begin{lemma}
For any dam $B_i$, there is no arc $(u,v)$ in $G_D$ such that $u \notin A^*(B_i)$ and $v \in A^*(B_i)$, except for arcs in $B_i$. Moreover, there is no undirected edge with exactly one endpoint in $A^*(B_i)$.
\label{le:enter_dry}
\end{lemma}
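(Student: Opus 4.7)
The plan is to unfold the definition of the dry area and argue by contradiction, in both cases using that reachability from $S$ in $G_D \setminus B_i$ propagates along any available arc (for the first part) or undirected edge (for the second).

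First I would recall that, by Definition~\ref{def:dry_area}, the dry area $A^*(B_i)$ consists precisely of the vertices that are \emph{not} reachable from $S$ in the mixed graph $G_D \setminus B_i$. Equivalently, $u \notin A^*(B_i)$ if and only if there is an $S$-to-$u$ walk in $G_D$ that avoids every arc of $B_i$. With this reformulation, both statements reduce to showing that such walks can be extended across the edge in question.

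For the first part, suppose there exists an arc $(u,v)$ of $G_D$, not belonging to $B_i$, such that $u \notin A^*(B_i)$ and $v \in A^*(B_i)$. Since $u \notin A^*(B_i)$, there is an $S$-to-$u$ walk $W$ in $G_D \setminus B_i$. Because $(u,v)$ is an arc of $G_D$ and does not lie in $B_i$, the walk $W$ followed by the arc $(u,v)$ is a valid $S$-to-$v$ walk in $G_D \setminus B_i$, so $v \notin A^*(B_i)$, contradicting the assumption. For the second part, assume an undirected edge $\{u,v\}$ of $G_D$ has, say, $u \in A^*(B_i)$ and $v \notin A^*(B_i)$. An undirected edge may be traversed in either direction in $G_D$ and is not in $B_i$ (only arcs belong to $B_i \subseteq Z_i$, since by Lemma~\ref{le:menger} every edge of a minimum cut lies on a Menger's path and thus is oriented in $G_D$). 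Concatenating an $S$-to-$v$ walk in $G_D \setminus B_i$ with the step from $v$ to $u$ produces an $S$-to-$u$ walk in $G_D \setminus B_i$, contradicting $u \in A^*(B_i)$.

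There is no real obstacle here: the lemma is essentially a one-step reachability closure argument, and the only point deserving a line of care is the remark that dams consist entirely of arcs (so ``except for arcs in $B_i$'' in the first statement has content, while in the second statement no exception is needed because undirected edges are never in $B_i$).
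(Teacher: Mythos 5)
Your proof is correct and follows essentially the same one-step reachability argument as the paper: if $u$ is reachable from $S$ in $G_D\backslash B_i$, then traversing the arc or undirected edge $(u,v)$ makes $v$ reachable too, contradicting $v\in A^*(B_i)$. Your added remark that dams contain only arcs (so undirected edges never need an exception) is a useful clarification the paper leaves implicit.
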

\begin{proof}
Suppose that such an arc $(u,v) \notin B_i$ exists. As $u \notin A^*(B_i)$, it is reachable from $S$ in $G_D\backslash B_i$. Therefore, $v$ can be reached too: this contradicts $v \in A^*(B_i)$. For the same reason, there is no undirected edge $(u,v)$ with only endpoint in $A^*(B_i)$ as the existence of this edge makes both its endpoints be reachable from $S$ in $G_D\backslash B_i$.
\end{proof}

In Theorem~\ref{th:keystone}, we provide a characterization of any minimum $(S,T)$-cut which is based on dry instances and on \textit{closest dams}. We start by:

\begin{definition}
A dam $B_h$ is \textit{closer} than dam $B_i$ if (i) $h < i$, (ii) $\sigma(B_h) = \sigma(B_i)$, and (iii) edges in $B_i$ are the only edges of level $i$ inside the dry instance of $B_h$: $E^*(B_h) \cap Z_i = B_i$.
\label{def:closer_dam}
\end{definition}

As a consequence, the dry area of $B_i$ is included in the dry area of $B_h$ when $B_h$ is closer than $B_i$: $A^*(B_i) \subsetneq A^*(B_h)$. Indeed, if a vertex is unreachable from $S$ in $G_{D}\backslash B_i$,  then it also is unreachable from $S$ in $G_{D}\backslash B_h$ as arcs of $B_i$ cannot be attained from $S$ in $G_{D}\backslash B_h$ according to Def.~\ref{def:closer_dam}.

\begin{definition}[Closest dam]
Dam $B_i$ is a closest dam if no dam $B_h$, $h < i$ is closer than $B_i$.
\label{def:closest_dam}
\end{definition}

For any dam $B_i$, either $B_i$ is a closest dam or there is a closest dam $B_h\neq B_i$, closer than $B_i$. Each dam $B_i$ admits a closest dam (itself or $B_h$) which is unique.

\begin{lemma}
Any dam $B_i$ has a unique closest dam.
\label{le:unique_closest}
\end{lemma}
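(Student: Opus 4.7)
My plan is to prove the lemma by strong induction on the level $i$, reducing everything to a transitivity property of the ``closer'' relation. The cornerstone observation, an immediate consequence of Lemma~\ref{le:menger}, is that a dam is uniquely determined by its level and its signature: a dam at level $\ell$ with signature $\sigma \subseteq \mcalq$ is forced to consist of exactly the $Z_\ell$-edges lying on the Menger's paths of $\sigma$, because each Menger's path crosses $Z_\ell$ in precisely one edge. In particular, two dams sharing the same level and the same signature must coincide. The base case $i=1$ is immediate, as no dam has level smaller than $1$, and $B_1$ is its own unique closest dam.

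For the inductive step, if $B_i$ is itself a closest dam, then any other closest dam of $B_i$ would have to be strictly closer than $B_i$, contradicting the definition; hence $B_i$ is the unique closest dam of itself. Otherwise, pick any $B_h$, $h<i$, closer than $B_i$. By the induction hypothesis $B_h$ admits a unique closest dam $B^*$, and I claim $B^*$ is also a closest dam of $B_i$. For uniqueness, suppose two distinct closest dams $B^*$ and $B^{**}$ of $B_i$ exist, necessarily both closer than $B_i$. The cornerstone observation forbids them from sharing a level (their signatures both equal $\sigma(B_i)$), so without loss of generality the level of $B^*$ is strictly smaller than that of $B^{**}$, and the plan is to derive a contradiction by showing that $B^*$ is closer than $B^{**}$, violating the closeness of $B^{**}$.

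Both the existence and the uniqueness arguments hinge on a single transitivity lemma: if $B_g$ is closer than $B_h$ and $B_h$ is closer than $B_i$ (with $g<h<i$), then $B_g$ is closer than $B_i$. The level and signature conditions in Def.~\ref{def:closer_dam} are immediate; the technical heart is establishing $E^*(B_g) \cap Z_i = B_i$. The inclusion $B_i \subseteq E^*(B_g)$ should follow from the nested dry areas $A^*(B_i) \subsetneq A^*(B_h) \subsetneq A^*(B_g)$, since each edge of $B_i$ already has an endpoint in $A^*(B_i) \subseteq A^*(B_g)$ and remains inside $V^*(B_g)$. For the reverse inclusion, any $e \in Z_i \cap E^*(B_g)$ is the unique $Z_i$-edge on some Menger's path $Q_j$ (Lemma~\ref{le:menger}); I would trace $Q_j$ from $S$ to show that its $Z_h$-edge must lie in $B_h$, forcing $Q_j \in \sigma(B_h) = \sigma(B_i)$ and hence $e \in B_i$. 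This path-by-path argument along $\mcalq$, exploiting the nested dry-area structure of Lemma~\ref{le:enter_dry} and the unique-by-signature observation, is where I anticipate the main obstacle: reachability in $G_D \backslash B_g$ has to be compared against reachability in $G_D \backslash B_h$ at each crossing of $Z_h$, which requires care because removing $B_g$ blocks certain but not all flow through level~$g$.
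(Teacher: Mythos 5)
Your ``cornerstone observation'' (a dam is determined by its level and its signature) is correct and is exactly what the paper uses to rule out two distinct closest dams at the same level, and your existence half --- pull back along one ``closer'' witness to the closest dam of $B_h$ supplied by the induction hypothesis, then apply transitivity --- is sound; it even fills in an existence claim that the paper only asserts in the surrounding text. The problem is the uniqueness half. The statement you need there is: if $B_{h_1}$ and $B_{h_2}$ are \emph{both} closer than $B_i$ and $h_1<h_2$, then $B_{h_1}$ is closer than $B_{h_2}$. That is not transitivity. Transitivity presupposes a chain ($B_g$ closer than $B_h$, $B_h$ closer than $B_i$), whereas here the two dams sit side by side below a common $B_i$ with no a priori relation between them --- in particular no a priori inclusion between $A^*(B_{h_1})$ and $A^*(B_{h_2})$. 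Every tool in your sketch (the nested dry areas $A^*(B_i)\subsetneq A^*(B_h)\subsetneq A^*(B_g)$, tracing $Q_j$ from $S$ while comparing reachability in $G_D\backslash B_g$ against $G_D\backslash B_h$) depends on such a chain, so the plan as written never establishes the comparability on which your intended contradiction (``$B^*$ is closer than $B^{**}$'') rests. This comparability is precisely the technical content of the paper's proof, and it is missing from yours.

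For the record, the paper obtains it by arguing \emph{forward} along Menger's paths toward the common dam $B_i$ rather than backward from $S$: if some arc $(u_2,v_2)\in B_{h_2}$ had $u_2\notin A^*(B_{h_1})$, take the Menger's path $Q_j$ through it; since $\sigma(B_{h_2})=\sigma(B_i)$, the segment of $Q_j$ from $u_2$ onward reaches the tail $u_3$ of an arc of $B_i$, and concatenating it with a path from $S$ to $u_2$ in $G_D\backslash B_{h_1}$ makes $u_3$ reachable there, contradicting $E^*(B_{h_1})\cap Z_i=B_i$. This yields $B_{h_2}\subseteq E^*(B_{h_1})\cap Z_{h_2}$, and the equality of signature cardinalities $\card{\sigma(B_{h_1})}=\card{\sigma(B_{h_2})}$ upgrades the inclusion to $B_{h_2}=E^*(B_{h_1})\cap Z_{h_2}$, so $B_{h_1}$ is closer than $B_{h_2}$. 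Once you add this lemma, your induction scaffolding becomes superfluous: applied to two putative closest dams of $B_i$, it contradicts the closeness of the higher-level one directly, which is essentially the paper's proof.
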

\begin{proof}
If $B_i$ is already a closest dam, then it is its own unique closest dam. 

Now, we suppose that there are two closest dams of $B_i$, denoted by $B_{h_1}$ and $B_{h_2}$. Necessarily, $h_1 \neq h_2$, otherwise $B_{h_1} = B_{h_2}$ as, according to Def.~\ref{def:closer_dam}, they have the same signature. 

We prove that under the hypothesis $h_1 < h_2$, dam $B_{h_1}$ is closer than $B_{h_2}$.
Suppose, towards a contradiction, that there is an arc $e_2=(u_2,v_2) \in B_{h_2}$ which does not belong to set $E^*(B_{h_1})$. As $e_2 \notin E^*(B_{h_1})$,  vertex $u_2$ is not inside $A^*(B_{h_1})$, otherwise, according to Def.~\ref{def:closer_dam}, $e_2$ would belong to $E^*(B_{h_1})$. So, there is a path $\widehat{Q}$ connecting sources from $S$ with $u_2$, which avoids arcs in $B_{h_1}$. Arc $e_2$ belongs to a Menger's path $Q_j$. A section of this path, denoted by $\widehat{Q}_j'$, connects $u_2$ with a tail $u_3$ of an arc $(u_3,v_3)$ in $B_i$, because $Q_j$ passes through $B_i$: $ \sigma(B_i) = \sigma(B_{h_2})$. The concatenated path $\widehat{Q} \cdot \widehat{Q}_j'$ connects $S$ with $B_i$ while avoiding $B_{h_1}$: this is a contradiction as $B_i \subseteq E^*(B_{h_1})$. Consequently, $B_{h_2} \subseteq E^*(B_{h_1}) \cap Z_{h_2}$.

The equality $B_{h_2} = E^*(B_{h_1}) \cap Z_{h_2}$ comes from the fact that $B_{h_1}$ and $B_{h_2}$ have the same signature $\sigma(B_i)$. Arcs in dam $E^*(B_{h_1}) \cap Z_{h_2}$ belong to different Menger's paths. If $B_{h_2} \neq E^*(B_{h_1}) \cap Z_{h_2}$, then we have:
$\card{\sigma(B_{h_2})} < \card{\sigma\left(E^*(B_{h_1}) \cap Z_{h_2}\right)} \leq \card{\sigma(B_{h_1})}$. Therefore, $B_{h_2} = E^*(B_{h_1}) \cap Z_{h_2}$, so $B_{h_1}$ is closer than $B_{h_2}$ which is contradictory to our assumption that $B_{h_2}$ is a closest dam.
\end{proof}

Moreover, if $B_h$ is a closest dam then its complement $\overline{B}_h = Z_h\backslash B_h$ is also a closest dam. This property will be used to prove the fixed-parameter tractability of our algorithm.

\begin{lemma}
If $B_h$ is a closest dam, then $\overline{B}_h = Z_h\backslash B_h$ is also a closest dam.
\label{le:complementary}
\end{lemma}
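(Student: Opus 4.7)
The plan is to argue by contradiction. Suppose $B_h$ is a closest dam but $\overline{B}_h$ is not. By Definition~\ref{def:closest_dam}, there is a dam $C \subseteq Z_{h^*}$ at some level $h^* < h$ that is closer than $\overline{B}_h$, i.e.\ $\sigma(C) = \sigma(\overline{B}_h)$ and $E^*(C) \cap Z_h = \overline{B}_h$. The natural candidate witness against $B_h$ being closest is the complementary dam $\overline{C} = Z_{h^*} \setminus C$: I would verify that $\overline{C}$ is closer than $B_h$, yielding the desired contradiction.

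Two of the three conditions of Definition~\ref{def:closer_dam} for $\overline{C}$ versus $B_h$ are essentially immediate. The level inequality is $h^* < h$, given. For the signature, Lemma~\ref{le:menger} says each Menger path contains exactly one arc of $Z_{h^*}$ and one of $Z_h$; hence $\set{\sigma(C),\sigma(\overline{C})}$ partitions $\set{Q_1,\ldots,Q_p}$ (and similarly $\set{\sigma(B_h),\sigma(\overline{B}_h)}$), so $\sigma(\overline{C}) = \set{Q_1,\ldots,Q_p}\setminus\sigma(C) = \set{Q_1,\ldots,Q_p}\setminus\sigma(\overline{B}_h) = \sigma(B_h)$. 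The remaining edge condition $E^*(\overline{C}) \cap Z_h = B_h$ splits into two inclusions. The easier one, $\overline{B}_h \cap E^*(\overline{C}) = \emptyset$, is as follows: for $(u,v) \in \overline{B}_h$ on a Menger path $Q_j$ with $j \in \sigma(C)$, the $Z_{h^*}$-crossing arc of $Q_j$ lies in $C$ and is therefore retained in $G_D \setminus \overline{C}$, so $Q_j$ still connects $S$ to both $u$ and $v$ in this graph, placing neither endpoint in $A^*(\overline{C})$.

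The main obstacle is the reverse inclusion $B_h \subseteq E^*(\overline{C})$. For $(u,v) \in B_h$ on a Menger path $Q_j$ with $j \in \sigma(\overline{C}) = \sigma(B_h)$, I would aim to prove $u \in A^*(\overline{C})$, i.e.\ every $S$-to-$u$ path in $G_D$ uses an arc of $\overline{C}$. Since $u \in R_h$ lies strictly downstream of $Z_{h^*}$, any such path crosses $Z_{h^*}$; a hypothetical path $P_2 : S \to u$ avoiding $\overline{C}$ would cross $Z_{h^*}$ via $C$. On the other hand, the hypothesis $E^*(C)\cap Z_h = \overline{B}_h$ gives $B_h \cap E^*(C) = \emptyset$, hence $u \notin A^*(C)$, so a complementary path $P_1 : S \to u$ avoiding $C$ already exists, crossing $Z_{h^*}$ via $\overline{C}$. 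The delicate step is to derive a contradiction from the simultaneous existence of $P_1$ and $P_2$. My intended route is to concatenate $P_2$ with the Menger segment $Q_j[u \to T]$, which lies entirely below $Z_{h^*}$ and therefore avoids $\overline{C}$; the resulting $(S,T)$-walk reaches vertices in the dry area $A^*(\overline{B}_h)$ from $S$ without ever using $C$, which contradicts Lemma~\ref{le:enter_dry} applied to $A^*(C)$ together with the strict inclusion $A^*(\overline{B}_h) \subsetneq A^*(C)$ implied by $C$ being closer than $\overline{B}_h$. Establishing that concatenation cleanly is the hard part of the argument.
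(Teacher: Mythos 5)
Your overall strategy is the same as the paper's: assume a dam $C\subseteq Z_{h^*}$, $h^*<h$, is closer than $\overline{B}_h$ and derive a contradiction by showing that its complement $\overline{C}=Z_{h^*}\backslash C$ is closer than $B_h$. The signature computation $\sigma(\overline{C})=\sigma(B_h)$ (each Menger's path meets each of $Z_{h^*}$ and $Z_h$ in exactly one arc, so complements have complementary signatures) and the inclusion $\overline{B}_h\cap E^*(\overline{C})=\emptyset$ are correct. The gap is exactly where you place it: the inclusion $B_h\subseteq E^*(\overline{C})$, i.e.\ that every tail $u$ of an arc of $B_h$ is unreachable from $S$ in $G_D\backslash\overline{C}$. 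Your sketched route for this step does not work, for two concrete reasons. First, the walk $P_2\cdot Q_j[u\to T]$ \emph{does} use an arc of $C$ — you note yourself that $P_2$ must cross $Z_{h^*}$ through $C$ — so the fact that vertices of $A^*(C)\supseteq A^*(\overline{B}_h)$ are unreachable in $G_D\backslash C$ yields no contradiction. Second, the suffix $Q_j[u\to T]$ need not meet $A^*(\overline{B}_h)$ at all: $Q_j\in\sigma(B_h)$ crosses $Z_h$ through $B_h$, not $\overline{B}_h$, so every vertex on it remains reachable from $S$ in $G_D\backslash\overline{B}_h$ along $Q_j$ itself.

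The argument the paper uses for this step is genuinely different and is the crux of the lemma. From $C$ being closer than $\overline{B}_h$ one first gets that the heads $V^T(C)$ lie in $A^*(C)$ (otherwise a Menger's path of $\sigma(C)=\sigma(\overline{B}_h)$ would make a tail of $\overline{B}_h$ reachable in $G_D\backslash C$, contradicting $\overline{B}_h\subseteq E^*(C)$), while $u\notin A^*(C)$ because $B_h\cap E^*(C)=\emptyset$. Hence your $P_2$, after entering $A^*(C)$ through its $C$-arc, must \emph{leave} $A^*(C)$ through some arc $(\tilde u,\tilde v)$ strictly before level $h$. The Menger's path carrying $(\tilde u,\tilde v)$ belongs to $\sigma(C)=\sigma(\overline{B}_h)$, so downstream of $\tilde v$ it must reach the tail of its $\overline{B}_h$-arc, which lies in $A^*(C)$; re-entering $A^*(C)$ past level $h^*$ contradicts Lemma~\ref{le:enter_dry}. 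Without this "no arc exits the dry area of $C$ before level $h$" argument (the paper's Case~2), the key inclusion remains unproved, so the proposal as it stands is incomplete.
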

\begin{proof}
Suppose that $B_h$ is closest and $\overline{B}_h$ is not: let $\overline{B}_{\alpha}$ denote the closest dam of $\overline{B}_h$, $\alpha < h$.

First, we focus on the dry instance of dam $\overline{B}_{\alpha}$ between levels $\alpha$ and $h$. We prove that no edge/arc $(u,v)$, with the exception of arcs from dam $\overline{B}_{\alpha}$, has one endpoint inside the dry instance of $\overline{B}_{\alpha}$ before level $h$ ({\em i.e.} in the source side of cut $Z_h$) and one outside. We distinguish two cases: 
\begin{itemize}
\item Case 1: If arc $(u,v)$ is such that $u \notin A^*\left(\overline{B}_{\alpha}\right)$ and $v \in A^*\left(\overline{B}_{\alpha}\right)$, then Lemma~\ref{le:enter_dry} brings the contradiction. This argument also holds when $(u,v)$ is undirected.
\item Case 2: If arc $(u,v)$ is such that $u \in A^*\left(\overline{B}_{\alpha}\right)$ is before level $h$ and $v \notin A^*\left(\overline{B}_{\alpha}\right)$, then a Menger's path $Q_j$ leaves the dry instance of $\overline{B}_{\alpha}$ through this arc. However, dams $\overline{B}_{\alpha}$ and $\overline{B}_{h}$ have the same signature, so path $Q_j$ also contains an arc of $\overline{B}_{h}$ placed after arc $(u,v)$. Consequently, there exists an arc $(u',v')$, $u' \notin A^*\left(\overline{B}_{\alpha}\right)$ and $v' \in A^*\left(\overline{B}_{\alpha}\right)$, to make path $Q_j$ go back inside $\mcald(\mcali,\overline{B}_{\alpha})$. This contradicts Case 1. Fig.~\ref{fig:proof_complementary} illustrates the explanations given in Case 2 on a graph $G$ with dams $B_h$, $\overline{B}_h$, $B_{\alpha}$, and $\overline{B}_{\alpha}$. In this example, vertices $u'$ and $v$ are identical.
\end{itemize}
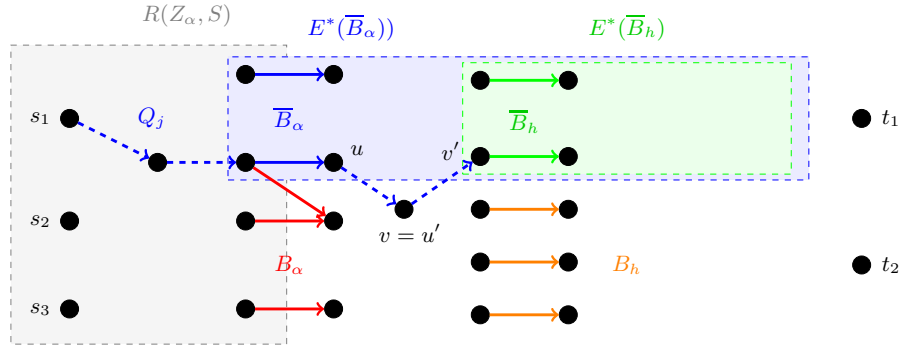
\begin{figure}[h]
\centering
\scalebox{.78}{\begin{tikzpicture}


\draw [dashed, color = gray, fill = white!92!gray] (5.7,0.4) -- (5.7,5.5) -- (1.0,5.5) -- (1.0,0.4) --  (5.7,0.4);
\draw [dashed, color = blue, fill = white!92!blue] (4.7,3.2) -- (14.6,3.2) -- (14.6,5.3) -- (4.7,5.3) -- (4.7,3.2);
\draw [dashed, color = green, fill = white!92!green] (8.7,3.3) -- (14.3,3.3) -- (14.3,5.2) -- (8.7,5.2) -- (8.7,3.3);

\node[draw, circle, minimum height=0.2cm, minimum width=0.2cm, fill=black] (P11) at (2*1,4.25) {};
\node[draw, circle, minimum height=0.2cm, minimum width=0.2cm, fill=black] (P12) at (2*1,2.5) {};
\node[draw, circle, minimum height=0.2cm, minimum width=0.2cm, fill=black] (P13) at (2*1,1) {};

\node[draw, circle, minimum height=0.2cm, minimum width=0.2cm, fill=black] (P21) at (3.5,3.5) {};

\node[draw, circle, minimum height=0.2cm, minimum width=0.2cm, fill=black] (P31) at (5,5) {};
\node[draw, circle, minimum height=0.2cm, minimum width=0.2cm, fill=black] (P32) at (5,3.5) {};
\node[draw, circle, minimum height=0.2cm, minimum width=0.2cm, fill=black] (P33) at (5,2.5) {};
\node[draw, circle, minimum height=0.2cm, minimum width=0.2cm, fill=black] (P34) at (5,1) {};

\node[draw, circle, minimum height=0.2cm, minimum width=0.2cm, fill=black] (P41) at (6.5,5) {};
\node[draw, circle, minimum height=0.2cm, minimum width=0.2cm, fill=black] (P42) at (6.5,3.5) {};
\node[draw, circle, minimum height=0.2cm, minimum width=0.2cm, fill=black] (P43) at (6.5,2.5) {};
\node[draw, circle, minimum height=0.2cm, minimum width=0.2cm, fill=black] (P44) at (6.5,1) {};

\node[draw, circle, minimum height=0.2cm, minimum width=0.2cm, fill=black] (P45) at (7.7,2.7) {};

\node[draw, circle, minimum height=0.2cm, minimum width=0.2cm, fill=black] (P51) at (9,4.9) {};
\node[draw, circle, minimum height=0.2cm, minimum width=0.2cm, fill=black] (P52) at (9,3.6) {};
\node[draw, circle, minimum height=0.2cm, minimum width=0.2cm, fill=black] (P53) at (9,2.7) {};
\node[draw, circle, minimum height=0.2cm, minimum width=0.2cm, fill=black] (P54) at (9,1.8) {};
\node[draw, circle, minimum height=0.2cm, minimum width=0.2cm, fill=black] (P55) at (9,0.9) {};

\node[draw, circle, minimum height=0.2cm, minimum width=0.2cm, fill=black] (P61) at (10.5,4.9) {};
\node[draw, circle, minimum height=0.2cm, minimum width=0.2cm, fill=black] (P62) at (10.5,3.6) {};
\node[draw, circle, minimum height=0.2cm, minimum width=0.2cm, fill=black] (P63) at (10.5,2.7) {};
\node[draw, circle, minimum height=0.2cm, minimum width=0.2cm, fill=black] (P64) at (10.5,1.8) {};
\node[draw, circle, minimum height=0.2cm, minimum width=0.2cm, fill=black] (P65) at (10.5,0.9) {};
%
%

\node[draw, circle, minimum height=0.2cm, minimum width=0.2cm, fill=black] (P91) at (15.5,4.25) {};
\node[draw, circle, minimum height=0.2cm, minimum width=0.2cm, fill=black] (P92) at (15.5,1.75) {};


\draw[->,line width = 1.4pt, color = blue, dashed] (P11) -- (P21);
\draw[->,line width = 1.4pt, color = blue, dashed] (P21) -- (P32);
\draw[->,line width = 1.4pt, color = blue, dashed] (P42) -- (P45);
\draw[->,line width = 1.4pt, color = blue, dashed] (P45) -- (P52);
\draw[->,line width = 1.4pt, color = blue] (P31) -- (P41);
\draw[->,line width = 1.4pt, color = blue] (P32) -- (P42);
\draw[->,line width = 1.4pt, color = red] (P32) -- (P43);
\draw[->,line width = 1.4pt, color = red] (P33) -- (P43);
\draw[->,line width = 1.4pt, color = green] (P51) -- (P61);
\draw[->,line width = 1.4pt, color = green] (P52) -- (P62);
\draw[->,line width = 1.4pt, color = orange] (P53) -- (P63);
\draw[->,line width = 1.4pt, color = orange] (P54) -- (P64);
\draw[->,line width = 1.4pt, color = orange] (P55) -- (P65);
\draw[->,line width = 1.4pt, color = red] (P34) -- (P44);


\node[scale=1.2] at (1.5,4.25) {$s_1$};
\node[scale=1.2] at (1.5,2.5) {$s_2$};
\node[scale=1.2] at (1.5,1) {$s_3$};

\node[scale=1.2] at (6.9,3.7) {$u$};
\node[scale=1.2] at (7.8,2.3) {$v=u'$};
\node[scale=1.2] at (8.5,3.7) {$v'$};

\node[scale=1.2] at (16,4.25) {$t_1$};
\node[scale=1.2] at (16,1.75) {$t_2$};

\node[scale=1.2, color = red] at (5.75,1.75) {$B_{\alpha}$};
\node[scale=1.2, color = orange] at (11.5,1.75) {$B_{h}$};
\node[scale=1.2, color = blue] at (5.75,4.25) {$\overline{B}_{\alpha}$};
\node[scale=1.2, color = blue] at (3.4,4.25) {$Q_j$};
\node[scale=1.2, color = black!20!green] at (9.75,4.2) {$\overline{B}_{h}$};

\node[scale=1.2, color = black!20!green] at (11.5,5.8) {$E^*(\overline{B}_{h})$};
\node[scale=1.2, color = blue] at (6.8,5.8) {$E^*(\overline{B}_{\alpha}))$};
\node[scale=1.2, color = gray] at (4.0,6.0) {$R(Z_{\alpha},S)$};

\end{tikzpicture}}
\caption{Illustration of the contradiction we arose for Case 2 in the proof of Lemma~\ref{le:complementary}.}
\label{fig:proof_complementary}
\end{figure}
Second, we show that any vertex of $V^S(B_h)$ is unreachable from $S$ in $G_{D}\backslash B_{\alpha}$, where $B_{\alpha} = Z_{\alpha} \backslash \overline{B}_{\alpha}$. We suppose that a path $Q$ inside graph $G_{D}\backslash B_{\alpha}$ connects a source $s \in S$ with a vertex $w \in V^S(B_h)$. Path $Q$ necessarily traverses level $\alpha$, so it contains an arc $(u,v)$ of $\overline{B}_{\alpha}$, as the complement dam $B_{\alpha}$ has been removed. As dam $\overline{B}_{\alpha}$ is closer than $\overline{B}_{h}$, vertices of $V^T(\overline{B}_{\alpha})$ form a subset of $A^*\left(\overline{B}_{\alpha}\right)$, otherwise one vertex of $V^T(\overline{B}_{\alpha})$ is reachable in $G_{D}\backslash B_{\alpha}$ and Menger's paths make a vertex in $V^S(\overline{B}_{h})$ be reachable too, which is impossible. For this reason, path $Q$ must contain a vertex $v \in A^*\left(\overline{B}_{\alpha}\right)$. Therefore, it connects a vertex $v$ inside the dry instance of $\overline{B}_{\alpha}$ with vertex $w$ which is outside. As a consequence, there is an edge/arc leaving the dry instance of $\overline{B}_{\alpha}$ on path $Q$, which is a contradiction with our reasoning in Case 2.

Eventually, all vertices of $V^S(B_h)$ are unreachable from $S$ in $G_{D}\backslash B_{\alpha}$, so arcs of $B_h$ belong to the dry instance of $B_{\alpha}$. Conversely, arcs of $\overline{B}_h$ does not belong to $E^*\left(B_{\alpha}\right)$, as the Menger's paths containing arcs of $\overline{B}_{\alpha}$ connect $S$ with $\overline{B}_h$ despite the removal of $B_{\alpha}$. Therefore, $E^*\left(B_{\alpha}\right) \cap Z_h = B_h$. Moreover, $\sigma\left(\overline{B}_{\alpha}\right) = \sigma\left(\overline{B}_{h}\right)$ as $\overline{B}_{\alpha}$ is closer than $\overline{B}_{h}$, so their complement dams have also the same signature: $\sigma\left(B_{\alpha}\right) = \sigma\left(B_h\right)$. Dam $B_{\alpha}$ is thus closer than $B_h$, which is a contradiction because $B_h$ is supposed to be a closest dam.
\end{proof}

Observe that the dry areas of a dam $B_i$ and its complement, $A^*(B_i)$ and $A^*(\overline{B_i})$ respectively, are disjoint because any vertex is reachable from $S$ either in $G\backslash B_i$ or in $G\backslash \overline{B}_i$ or in both of them.

\subsection{A characterization of minimum cuts with dry instances} \label{subsec:keystone}

Theorem~\ref{th:keystone} provides us with the keystone to build our FPT$\langle p \rangle$ algorithm. It combines the concepts of dry instance and closest dam: given a minimum $(S,T)$-cut $X$ and its front dam $B_{i(X)}$, either $X\backslash B_{i(X)} = \emptyset$ and $X=Z_{i(X)}$ or edges in $X\backslash B_{i(X)} \neq \emptyset$ belong to the dry instance of the dam $\overline{B}_{h(X)} = Z_{h(X)}\backslash B_{h(X)}$, where $B_{h(X)}$ is the closest dam of $B_{i(X)}$.

\begin{theorem}
If $X\neq Z_{i(X)}$ is a minimum cut for $\mcali$, $B_{i(X)}$ its front dam, and $B_{h(X)}$ the closest dam of $B_{i(X)}$, then set $X \backslash B_{i(X)}$ is a minimum cut for the dry instance of $\overline{B}_{h(X)} = Z_{h(X)}\backslash B_{h(X)}$, {\em i.e.} $\mcald\left(\mcali, \overline{B}_{h(X)}\right)$.
\label{th:keystone}
\end{theorem}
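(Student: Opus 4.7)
The plan is to show that $X \backslash B_{i(X)}$ is both a valid cut and a cut of minimum size in $\mcald(\mcali, \overline{B}_{h(X)})$ via a counting-plus-Menger argument. First I would pin down which Menger's paths carry the edges of $X \backslash B_{i(X)}$: by Lemma~\ref{le:menger} each $Q_j \in \mcalq$ contains exactly one edge of $X$, and whenever $Q_j \in \sigma(B_{i(X)})$ this edge must already be the $B_{i(X)}$-arc on $Q_j$; since $\sigma(B_{h(X)}) = \sigma(B_{i(X)})$ by Def.~\ref{def:closer_dam}, the edges of $X \backslash B_{i(X)}$ sit, one per path, on the paths of $\sigma(\overline{B}_{h(X)}) = \mcalq \backslash \sigma(B_{i(X)})$, giving $\card{X \backslash B_{i(X)}} = \card{\sigma(\overline{B}_{h(X)})}$.

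Next I would locate the edges of $X \backslash B_{i(X)}$ inside the dry instance. For $e = (u_\ell^{(j)}, u_{\ell+1}^{(j)}) \in X \backslash B_{i(X)}$ on $Q_j \in \sigma(\overline{B}_{h(X)})$, Theorem~\ref{th:skeleton} places both endpoints in $R(Z_{i(X)}, T) \subseteq R(Z_{h(X)}, T)$, strictly downstream of the $\overline{B}_{h(X)}$-arc on $Q_j$. Because $\overline{B}_{h(X)}$ is itself a closest dam (Lemma~\ref{le:complementary}) and Lemma~\ref{le:enter_dry} forbids re-entering $A^*(\overline{B}_{h(X)})$ other than through a $\overline{B}_{h(X)}$-arc, I would verify that the segment of $Q_j$ just after its $\overline{B}_{h(X)}$-arc stays in $A^*(\overline{B}_{h(X)})$ until its first exit and that this segment necessarily contains $e$; otherwise a premature exit spliced with the post-exit tail of $Q_j$ would build a path from $S$ to $T$ in $G \backslash X$. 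Dually, $B_{i(X)}$-arcs sit on Menger's paths in $\sigma(B_{h(X)})$ that remain $S$-reachable in $G_D \backslash \overline{B}_{h(X)}$ via the preserved $B_{h(X)}$-arcs, so $B_{i(X)} \cap E^*(\overline{B}_{h(X)}) = \emptyset$ and therefore $X \cap E^*(\overline{B}_{h(X)}) = X \backslash B_{i(X)} \subseteq E^*(\overline{B}_{h(X)})$.

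The main obstacle will be proving that $X \backslash B_{i(X)}$ actually separates $S^*(\overline{B}_{h(X)})$ from $T^*(\overline{B}_{h(X)})$ inside $G^*(\overline{B}_{h(X)})$. Assuming for contradiction a path $P$ in $G^*(\overline{B}_{h(X)}) \backslash (X \backslash B_{i(X)})$ from some $s^* \in S^*$ to some $t^* \in T^*$, the previous step guarantees $P$ meets no edge of $X$ whatsoever. I would prepend to $P$ a path from $S$ to $s^*$ drawn inside $R(Z_{h(X)}, S) \subseteq R(X,S)$, which avoids $X$ thanks to Theorem~\ref{th:skeleton}, and append a suffix from $t^*$ to $T$ by exploiting that $t^* \notin A^*(\overline{B}_{h(X)})$ together with the Menger's path through $t^*$ to continue to $T$ without crossing $X$. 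The delicate point is ruling out $t^* \in R(X,S)$: here the closest-dam property of $\overline{B}_{h(X)}$ is essential, as otherwise one could extract from the joint reachability of $s^*$ and $t^*$ in $G \backslash X$ a strictly smaller dam at level $h(X)$ with the same signature, contradicting Lemma~\ref{le:unique_closest}. The concatenated prefix, $P$, and suffix would then produce an $(S,T)$-walk in $G \backslash X$, contradicting $X$ being an $(S,T)$-cut.

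Finally, to settle minimality I would invoke Menger's theorem inside the dry instance. For each $Q_j \in \sigma(\overline{B}_{h(X)})$, the section of $Q_j$ from its $\overline{B}_{h(X)}$-tail $s_j^* \in S^*$ to its first exit $t_j^* \in T^*$ is a path in $G^*(\overline{B}_{h(X)})$, well defined because Lemma~\ref{le:enter_dry} prevents re-entering $A^*(\overline{B}_{h(X)})$ without a $\overline{B}_{h(X)}$-arc, which $Q_j$ uses only once. These $\card{\sigma(\overline{B}_{h(X)})}$ sections inherit pairwise edge-disjointness from $\mcalq$, so every $(S^*, T^*)$-cut in $\mcald(\mcali, \overline{B}_{h(X)})$ has size at least $\card{\sigma(\overline{B}_{h(X)})} = \card{X \backslash B_{i(X)}}$, and $X \backslash B_{i(X)}$ is a minimum cut of $\mcald(\mcali, \overline{B}_{h(X)})$.
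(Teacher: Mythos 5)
Your overall skeleton matches the paper's (show $X\backslash B_{i(X)}$ lies in $E^*(\overline{B}_{h(X)})$, show it is a cut of the dry instance, then use the $\card{\sigma(\overline{B}_{h(X)})}$ edge-disjoint Menger sections for minimality), and your first and last paragraphs are sound. The gap is in the second paragraph, which is the technical heart of the theorem. To show that the tail of $e\in X\backslash B_{i(X)}$ belongs to $A^*(\overline{B}_{h(X)})$, you must rule out the existence of \emph{any} path from $S$ to that tail in $G_D\backslash\overline{B}_{h(X)}$ --- not just a premature exit of the Menger's path $Q_j$ itself. Your stated contradiction, ``a premature exit spliced with the post-exit tail of $Q_j$ would build a path from $S$ to $T$ in $G\backslash X$,'' does not hold: the post-exit tail of $Q_j$ contains $e\in X$, and the path witnessing that the exit vertex is outside $A^*(\overline{B}_{h(X)})$ is only guaranteed to avoid $\overline{B}_{h(X)}$; it may freely cross $B_{h(X)}$, $B_{i(X)}$, or other edges of $X$. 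So the spliced walk does not avoid $X$ and no contradiction follows.

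The paper closes exactly this hole in two steps that are absent from your proposal. Given a hypothetical path $P_e$ from $S$ to $e$ avoiding $\overline{B}_{h(X)}$, it first proves (using $\sigma(B_{i(X)})=\sigma(B_{h(X)})$ and a dry-area argument between levels $h(X)$ and $i(X)$) that $P_e$ must traverse an arc $e_i=(u_i,v_i)$ of $B_{i(X)}$. It then applies the orientation statement of Lemma~\ref{le:menger}: $v_i\in R(X,T)$ because $e_i\in X$, while the tail $\widehat{u}$ of the \emph{next} arc of $X$ along $P_e$ lies in $R(X,S)$, and the segment of $P_e$ between $v_i$ and $\widehat{u}$ avoids $X$; this connects $R(X,T)$ to $R(X,S)$ in $G\backslash X$, which is the actual contradiction. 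Nothing in your sketch reproduces this mechanism. A secondary issue: in your third paragraph the ``delicate point'' is resolved by invoking ``a strictly smaller dam at level $h(X)$ with the same signature,'' which cannot exist even as a hypothetical object, since a dam of $Z_{h(X)}$ has exactly one arc per Menger's path in its signature and hence its size is determined by its signature; that step needs a different argument.
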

\begin{proof}
We prove that all edges in $X\backslash B_{i(X)}$ belong to the dry instance of $\overline{B}_{h(X)}$. Let us suppose {\em ad absurdum} that an edge $e=(u,v) \in X\backslash B_{i(X)}$ is reachable from $S$ in graph $G_{D}$ deprived of the dam $\overline{B}_{h(X)}$. Edge $e$ is an arc $(u,v)$ in $G_{D}$.
We denote by $P_e$ a path in $G_{D}$ starting from a source $s$ $\in S$ and terminating with $(u,v)$, deprived of arcs of $\overline{B}_{h(X)}$, $P_e: s\cdots u \cdot v$.
  
Due to the iterative construction of cuts $Z_i$, path $P_e$ necessarily contains one edge $e_{h} = (u_{h},v_{h})$ of level $h(X)$ when it arrives at a level greater than $i(X) \ge h(X)$. We know that this edge $e_{h}$ does not belong to $\overline{B}_{h(X)}$, so $e_{h} \in B_{h(X)}$ and $P_e: s \cdots u_{h}\cdot v_{h}\cdots u\cdot v$. From now on, we focus on the segment of path $P_e$, denoted by $P_e^{(h)}$, which contains all edges between $e_{h}$ and $e$, {\em i.e.} $P_e^{(h)}: u_{h}\cdot v_{h}\cdots u\cdot v$. 
The proof goes in two steps:\\
Step 1: Path $P_e^{(h)}$ contains an arc of $B_{i(X)}$.\\
Step 2: The existence of a path in $G_{D}$ containing both an arc of $B_{i(X)}$ (proven in Step 1) and arc $e$ contradicts the definition of cut $X$.

For Step 1, let us suppose that the path $P_e^{(h)}$ does not contain arcs of $B_{i(X)}$. This means that path $P_e^{(h)}$ passes by the dry instance of $B_{h(X)}$ between levels $h(X)$ and $i(X)$, otherwise it would necessarily contain an arc of $B_{i(X)}$. So, there is an edge/arc of this path, $\tilde{e} = (\tilde{u},\tilde{v})$, where $\tilde{u}$ is in the dry area of $B_{h(X)}$ but not in the dry area of $B_{i(X)}$, and $\tilde{v}$ lies outside the dry area of $B_{h(X)}$. In brief, $\tilde{u} \in A^*(B_{h(X)}) \backslash A^*(B_{i(X)})$ and $\tilde{v} \notin A^*(B_{h(X)})$. Edge $\tilde{e}$ must be an arc $(\tilde{u},\tilde{v})$ in $G_D$ according to Lemma~\ref{le:enter_dry}. Arc $\tilde{e}$ belongs to a Menger's path $Q_{j(\tilde{e})}$ passing through $B_{h(X)}$, {\em i.e.} $Q_{j(\tilde{e})} \in \sigma(B_{h(X)})$. Path $Q_{j(\tilde{e})}$ must traverse an arc of $B_{i(X)}$ as $\sigma(B_{i(X)}) = \sigma(B_{h(X)})$. As a consequence, path $Q_{j(\tilde{e})}$ connects a vertex $\tilde{v}$ outside the dry area of $B_{h(X)}$ with the tail $u_i'$ of an arc $e_i' = (u_i',v_i')$ of $B_{i(X)}$. This is a contradiction, as vertex $u_i'$ is supposed not to be reachable from $S$ in $G_{D}$ deprived of cut $B_{h(X)}$. Path $P_e^{(h)}$ thus contains an arc of $B_{i(X)}$ denoted by $e_i = (u_i,v_i)$.

For Step 2, let $\widehat{e} = \left(\widehat{u},\widehat{v}\right) \neq e_i$ be the first arc of cut $X$ in path $P_e^{(h)}$ which arrives after $e_i$ on this path (Fig.~\ref{fig:proof_keystone}). By this way, we ensure that no edge of $X$ lies on path $P_e^{(h)}$ between vertices $v_i$ and $\widehat{u}$.
Arc $\widehat{e}$ exists as $e$ is a potential candidate to be one.

\begin{figure}[h]
\centering
\scalebox{.78}{\begin{tikzpicture}


\draw [dashed, color = gray, fill = white!92!gray] (5.7,0.4) -- (5.7,5.5) -- (1.0,5.5) -- (1.0,0.4) --  (5.7,0.4);
\draw [dashed, color = red, fill = white!92!red] (4.7,0.5) -- (14.6,0.5) -- (14.6,3.0) -- (4.7,3.0) -- (4.7,0.5);
\draw [dashed, color = green, fill = white!92!green] (7.7,0.6) -- (14.3,0.6) -- (14.3,2.9) -- (7.7,2.9) -- (7.7,0.6);
\draw [dashed, color = blue, fill = white!92!blue] (4.7,5.3) -- (10.6,5.3) -- (10.6,3.2) -- (4.7,3.2) -- (4.7,5.3);

\node[draw, circle, minimum height=0.2cm, minimum width=0.2cm, fill=black] (P11) at (2*1,4.25) {};
\node[draw, circle, minimum height=0.2cm, minimum width=0.2cm, fill=black] (P12) at (2*1,2.5) {};
\node[draw, circle, minimum height=0.2cm, minimum width=0.2cm, fill=black] (P13) at (2*1,1) {};

\node[draw, circle, minimum height=0.2cm, minimum width=0.2cm, fill=black] (P21) at (3.5,2.2) {};

\node[draw, circle, minimum height=0.2cm, minimum width=0.2cm, fill=black] (P31) at (5,5) {};
\node[draw, circle, minimum height=0.2cm, minimum width=0.2cm, fill=black] (P32) at (5,3.5) {};
\node[draw, circle, minimum height=0.2cm, minimum width=0.2cm, fill=black] (P33) at (5,2.5) {};
\node[draw, circle, minimum height=0.2cm, minimum width=0.2cm, fill=black] (P34) at (5,1) {};

\node[draw, circle, minimum height=0.2cm, minimum width=0.2cm, fill=black] (P41) at (6.5,5) {};
\node[draw, circle, minimum height=0.2cm, minimum width=0.2cm, fill=black] (P42) at (6.5,3.5) {};
\node[draw, circle, minimum height=0.2cm, minimum width=0.2cm, fill=black] (P43) at (6.5,2.5) {};
\node[draw, circle, minimum height=0.2cm, minimum width=0.2cm, fill=black] (P44) at (6.5,1) {};

\node[draw, circle, minimum height=0.2cm, minimum width=0.2cm, fill=black] (P53) at (8,2.5) {};
\node[draw, circle, minimum height=0.2cm, minimum width=0.2cm, fill=black] (P54) at (8,1) {};

\node[draw, circle, minimum height=0.2cm, minimum width=0.2cm, fill=black] (P63) at (9.5,2.5) {};
\node[draw, circle, minimum height=0.2cm, minimum width=0.2cm, fill=black] (P65) at (9.5,1) {};
\node[draw, circle, minimum height=0.2cm, minimum width=0.2cm, fill=black] (P66) at (11,2.5) {};

\node[draw, circle, minimum height=0.2cm, minimum width=0.2cm, fill=black] (P71) at (12.5,3.75) {};

\node[draw, circle, minimum height=0.2cm, minimum width=0.2cm, fill=black] (P81) at (14,4.5) {};

\node[draw, circle, minimum height=0.2cm, minimum width=0.2cm, fill=black] (P91) at (15.5,4.25) {};
\node[draw, circle, minimum height=0.2cm, minimum width=0.2cm, fill=black] (P92) at (15.5,1.75) {};


\draw[->,line width = 1.4pt, color = black] (P12) -- (P21);
\draw[->,line width = 1.4pt, color = black] (P21) -- (P33);
\draw[->,line width = 1.4pt, color = blue] (P31) -- (P41);
\draw[->,line width = 1.4pt, color = blue] (P32) -- (P42);
\draw[->,line width = 1.4pt, color = red] (P33) -- (P43);
\draw[->,line width = 1.4pt, color = black] (P43) -- (P54);
\draw[->,line width = 1.4pt, color = green] (P53) -- (P63);
\draw[->,line width = 1.4pt, color = green] (P54) -- (P65);
\draw[->,line width = 1.4pt, color = black] (P65) -- (P66);
\draw[->,line width = 1.4pt, color = red] (P34) -- (P44);

\draw[->,line width = 1.4pt, color = black] (P66) -- (P71);
\draw[->,line width = 1.4pt, color = black] (P71) -- (P81);


\node[scale=1.2] at (1.5,4.25) {$s_1$};
\node[scale=1.2] at (1.5,2.5) {$s_2$};
\node[scale=1.2] at (1.5,1) {$s_3$};

\node[scale=1.2] at (12,3.9) {$\widehat{u}$};
\node[scale=1.2] at (14.5,4.6) {$\widehat{v}$};
\node[scale=1.2] at (4.4,2.7) {$u_h$};
\node[scale=1.2] at (7.0,2.6) {$v_h$};
\node[scale=1.2] at (7.4,1.0) {$u_i$};
\node[scale=1.2] at (10.0,1.0) {$v_i$};
\node[scale=1.2] at (8.75,1.2) {$e_i$};
\node[scale=1.2] at (13.1,4.4) {$\widehat{e}$};

\node[scale=1.2] at (16,4.25) {$t_1$};
\node[scale=1.2] at (16,1.75) {$t_2$};

\node[scale=1.2, color = red] at (5.75,1.75) {$B_{h(X)}$};
\node[scale=1.2, color = blue] at (5.75,4.25) {$\overline{B}_{h(X)}$};
\node[scale=1.2, color = black!20!green] at (8.75,1.75) {$B_{i(X)}$};

\node[scale=1.2, color = red] at (6.7,0.1) {$E^*(B_{h(X)})$};
\node[scale=1.2, color = black!20!green] at (13.0,1.0) {$E^*(B_{i(X)})$};
\node[scale=1.2, color = blue] at (8.2,5.8) {$E^*(\overline{B}_{h(X)}))$};
\node[scale=1.2, color = gray] at (4.0,6.0) {$R(Z_{h(X)},S)$};

\end{tikzpicture}}
\caption{Illustration of the segment of path $P_e$ between $S$ and $\widehat{e}$, traversing dams $B_h$ and $B_i$.}
\label{fig:proof_keystone}
\end{figure}
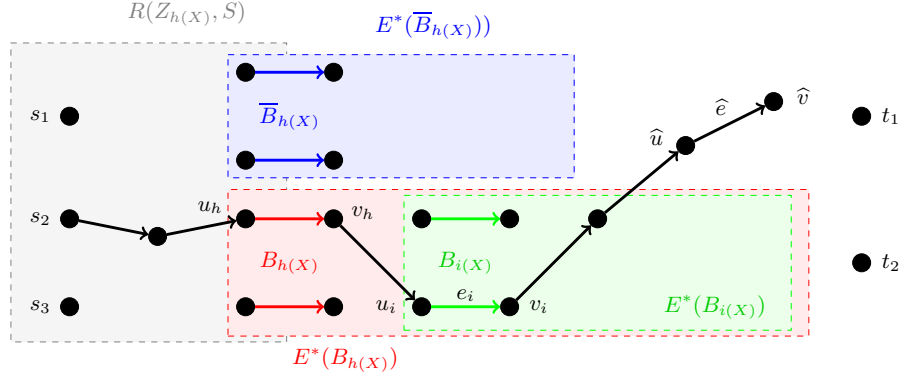
As vertex $v_i$ is the head of $e_i$ in graph $G_{D}$, $v_i \in R(X,T)$ according to Lemma~\ref{le:menger}. For the same reason, vertex $\widehat{u} \in R(X,S)$ as it is the tail of arc $\widehat{e} \in X$. We know that path $P_e^{(h)}$ in $G_{D}$ connects these two vertices and there is no arc of $X$ on the segment of $P_e^{(h)}$ connecting them. Let us go back now to the initial undirected graph $G$. In graph $G\backslash X$, vertices $v_i \in R(X,T)$ and $\widehat{u} \in R(X,S)$ are connected whereas they must be separated by $X$. The presence of an edge $(u,v)$ of $X$ outside $E^*(\overline{B}_{h(X)})$ yields a contradiction.

In summary, all edges in $X\backslash B_{i(X)}$ belong to the dry instance of $\overline{B}_{h(X)}$. They necessarily form a cut in instance $\mcald\left(\mcali, \overline{B}_{h(X)}\right)$, otherwise $X$ would not separate $S$ and $T$ in $\mcali$. The $p-\card{B_{h(X)}}$ Menger's paths in signature $\sigma(\overline{B}_{h(X)})$ are edge-disjoint inside $\mcald\left(\mcali, \overline{B}_{h(X)}\right)$, so the minimum cut size of this instance is greater than $p-\card{B_{h(X)}}$. As $X\backslash B_{i(X)}$ contains $p-\card{B_{i(X)}}=p-\card{B_{h(X)}}$ edges, we conclude that it is a minimum cut of $\mcald\left(\mcali, \overline{B}_{h(X)}\right)$.
\end{proof}

Therefore, any minimum $(S,T)$-cut $X$, which is not a minimum drainage cut $Z_{i(X)}$ itself, can be partitioned into two sets, $B_{i(X)}$ and $X\backslash B_{i(X)}$, such that:
\begin{itemize}
\item $B_{i(X)}$ is a minimum cut of instance $\mcald\left(\mcali, B_{h(X)}\right)$ and a dam of $\mcali$,
\item $X\backslash B_{i(X)}$ is a minimum cut of instance $\mcald\left(\mcali, \overline{B}_{h(X)}\right)$ and all its edges belong to the target side of $Z_{i(X)}$, $E\left[R(Z_{i(X)},T)\right]$.
\end{itemize}

Conversely, given a closest dam $B_h$ and its complement $\overline{B}_h = Z_h\backslash B_h$, the union $B_i \cup X_{\overline{B}_h}$, where the closest dam of $B_i$ is $B_h$ and $X_{\overline{B}_h}$ is a minimum cut of $\mcald\left(\mcali, \overline{B}_{h}\right)$, separates $S$ from $T$.
\begin{theorem}
Let $B_h$ be a closest dam of $\mcalz(\mcali)$ and $\overline{B}_h = Z_h\backslash B_h$. Let $B_i$ be a dam such that $B_h$ is closer than $B_i$ and $X_{\overline{B}_h}$ a minimum cut of $\mcald\left(\mcali, \overline{B}_{h}\right)$. Then, $B_i \cup X_{\overline{B}_h}$  is a minimum $(S,T)$-cut for instance~$\mcali$.
\label{th:converse_keystone}
\end{theorem}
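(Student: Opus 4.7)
The plan has two parts: (a) show $\card{B_i \cup X_{\overline{B}_h}} = p$, matching the minimum $(S,T)$-cut size of $\mcali$, and (b) show that $B_i \cup X_{\overline{B}_h}$ is an $(S,T)$-cut of $\mcali$.

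For (a), I would first check disjointness: any arc of $B_i$ lies on a Menger path in $\sigma(B_i) = \sigma(B_h)$, which crosses $B_h$ but not $\overline{B}_h$, so both its endpoints remain reachable from $S$ in $G_D \setminus \overline{B}_h$ and lie outside $A^*(\overline{B}_h)$; this puts the arc outside $E^*(\overline{B}_h)$ and hence outside $X_{\overline{B}_h}$. Next, Lemma~\ref{le:menger} gives $\card{B_i} = \card{\sigma(B_i)} = \card{\sigma(B_h)} = \card{B_h}$, and the minimum cut of $\mcald(\mcali, \overline{B}_h)$ has size exactly $p - \card{B_h}$: at most this because $\overline{B}_h$ itself is a cut of that size in the dry instance, at least this because the $p - \card{B_h}$ Menger paths in $\sigma(\overline{B}_h)$ remain edge-disjoint inside it. Summing, $\card{B_i \cup X_{\overline{B}_h}} = p$.

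For (b), I would argue by contradiction: assume an $(S,T)$-path $P$ exists in $G \setminus Y$, where $Y = B_i \cup X_{\overline{B}_h}$. Since $P$ avoids $B_i$ but must cross the cut $Z_i$, it uses some arc $(u_i, v_i) \in \overline{B}_i$. By Lemma~\ref{le:complementary} the complement $\overline{B}_h$ is itself a closest dam and, by a symmetric reading of Def.~\ref{def:closer_dam}, closer than $\overline{B}_i$; hence $\overline{B}_i \subseteq E^*(\overline{B}_h)$ and, using Lemma~\ref{le:enter_dry}, $u_i \in A^*(\overline{B}_h)$ (otherwise $(u_i,v_i)$ would be an arc entering $A^*(\overline{B}_h)$ from outside that is not in $\overline{B}_h$). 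Since $S \cap A^*(\overline{B}_h) = \emptyset$, the path $P$ must enter $A^*(\overline{B}_h)$ somewhere; Lemma~\ref{le:enter_dry} restricts such entries to either arcs of $\overline{B}_h$ or backward traversals (in $G$) of arcs leaving $A^*(\overline{B}_h)$. Tracing the appropriate crossing of $P$ into $V^*(\overline{B}_h)$ before the arc $(u_i,v_i)$, the resulting subpath of $P$ produces an $(S^*(\overline{B}_h), T^*(\overline{B}_h))$-path in the dry instance $G^*(\overline{B}_h)$, which the cut $X_{\overline{B}_h}$ must intersect, contradicting $P \cap Y = \emptyset$.

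The main obstacle is the last step. Backward traversals of arcs leaving $A^*(\overline{B}_h)$ create an apparent loophole whereby $P$ could enter $A^*(\overline{B}_h)$ without ever visiting $S^*(\overline{B}_h)$. Resolving this requires a careful counting argument on the boundary crossings of $A^*(\overline{B}_h)$, combined with the facts that $\overline{B}_h$ is a closest dam (Lemma~\ref{le:complementary}) and that any vertex outside $A^*(\overline{B}_h)$ reachable from $S$ in $G_D \setminus \overline{B}_h$ belongs to the ``$B_h$-side'' of the graph; I expect the argument to mirror Case~2 in the proof of Lemma~\ref{le:complementary}, where a similar side-path is ruled out by chasing the associated Menger path back to an obligatory $\overline{B}_h$-crossing.
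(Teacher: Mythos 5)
Your part (a) is fine and matches what the paper uses implicitly: $B_i$ and $X_{\overline{B}_h}$ are disjoint (via the Menger's paths in $\sigma(B_h)$, which never meet $\overline{B}_h$) and together contain $p$ edges. The problem is part (b), and you have diagnosed it yourself: the step where the surviving path $P$ is forced to yield an $\left(S^*(\overline{B}_h), T^*(\overline{B}_h)\right)$-path inside $\mcald\left(\mcali,\overline{B}_h\right)$ is not proved, only conjectured, and the ``backward entry'' loophole you flag is real. A path can reach $u_i \in A^*(\overline{B}_h)$ by first crossing $Z_h$ through an arc of $B_h$ (no such arc lies in $B_i\cup X_{\overline{B}_h}$, so $P$ may use them freely), travelling through $A^*(B_h)$, emerging in $T^*(B_h)$, and only then slipping into $A^*(\overline{B}_h)$ backwards through a leaving arc from $T^*(\overline{B}_h)$. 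Nothing in your proposal blocks this route, because the only separator you ever invoke is $X_{\overline{B}_h}$ acting inside the dry instance of $\overline{B}_h$.

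The ingredient you are missing --- and the first thing the paper proves --- is that $B_i$ is itself a minimum $\left(S^*(B_h), T^*(B_h)\right)$-cut of the \emph{complementary} dry instance $\mcald\left(\mcali,B_h\right)$, precisely because $B_h$ is closer than $B_i$, so $E^*(B_h)\cap Z_i = B_i$. With that in hand the paper's argument is short and avoids your loophole entirely: it examines where $P$ crosses $Z_h$ (not $Z_i$); since $Z_h = B_h\cup\overline{B}_h$ and, by Lemma~\ref{le:enter_dry}, no edge of $G$ joins $A^*(B_h)$ to $A^*(\overline{B}_h)$, the crossing commits $P$ to exactly one of the two dry areas, which it must then traverse completely from its $S^*$ to its $T^*$; the $B_h$-side traversal is blocked by $B_i$ and the $\overline{B}_h$-side traversal by $X_{\overline{B}_h}$. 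Your planned ``counting argument on boundary crossings'' would essentially have to rediscover this two-sided dichotomy, so as written the proposal has a genuine gap at its decisive step. (A smaller issue: you assert that $\overline{B}_h$ is closer than $\overline{B}_i$ by ``a symmetric reading'' of Def.~\ref{def:closer_dam}; this is a real claim, proved in the paper only inside Theorem~\ref{th:recursive} and using additional work, not a formal symmetry of the definition.)
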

\begin{proof}
As dam $B_h$ is closer than $B_i$, the edges of $B_i$ form a minimum cut of $\mcald\left(\mcali,B_h\right)$. Indeed, they are the edges of level $i$ inside $\mcald\left(\mcali,B_h\right)$, so they separate $S^*(B_h)$ from $T^*(B_h)$. Moreover, we know there is a set of $\card{\sigma(B_h)}$ edge-disjoint paths from $S^*(B_h)$ to $T^*(B_h)$ in $\mcald\left(\mcali,B_h\right)$. As $\card{\sigma(B_h)} = \card{\sigma(B_i)} = \card{B_i}$, set $B_i$ is a minimum cut of instance $\mcald\left(\mcali,B_h\right)$.

We suppose that there is an open $(S,T)$-path $Q$ in undirected graph $G$ deprived of edges $B_i \cup X_{\overline{B}_h}$. Path $Q$ cannot avoid level $h$ of the drainage and passes through one edge of $Z_h$. As $B_h \cup \overline{B}_h = Z_h$, some edges of path $Q$ belong either to the dry instance of $B_h$ or to the dry instance of $\overline{B}_h$, or to both of them.

First, from Lemma~\ref{le:enter_dry} we know that no edge of graph $G$ has one endpoint in the dry area $A^*(B_h)$ of $B_h$ and the another one in the dry area $A^*(\overline{B}_h)$ of $\overline{B}_h$. 

Second, we show that the existence of path $Q$ yields a contradiction with the definition of the dry instance. As sets $A^*(B_h)$ and $A^*(\overline{B}_h)$ cannot be connected by an edge of $G$, path $Q$ ``traverses'' completely at least one of the dry instances $\mcald\left(\mcali,B_h\right)$ or $\mcald\left(\mcali,\overline{B}_h\right)$, with no loss of generality we say $\mcald\left(\mcali,B_h\right)$. In other words, a segment of $Q$ connects $S^*(B_h)$ and $T^*(B_h)$. This is not possible because dam $B_i$ separates these two sets of vertices. With the dry instance $\mcald\left(\mcali,\overline{B}_h\right)$, we obtain the same contradiction as $X_{\overline{B}_h}$ separates $S^*(\overline{B}_h)$ and $T^*(\overline{B}_h)$. Therefore, $B_i \cup X_{\overline{B}_h}$ separates $S$ from $T$ with $p$ edges, as $\card{X_{\overline{B}_h}} = \card{\sigma(\overline{B}_h)} = p-\card{B_i}$.
\end{proof}

We now prove a stronger result for set $X\backslash B_{i(X)}$. In fact, edges of set $X\backslash B_{i(X)}$ lie in the target side of level $i(X)-h(X)+1$ in the drainage of instance $\mcald\left(\mcali,\overline{B}_{h(X)}\right)$. This statement is formulated in the theorem below. 

\begin{theorem}
Let $X$ be a minimum $(S,T)$-cut of $G$ and let $(Z_1',\ldots,Z_{k'}')$ be the drainage of instance $\mcald\left(\mcali,\overline{B}_{h(X)}\right)$. Then, set $Z_{i(X)-h(X)+1}'$ is equal to $\overline{B}_{i(X)} = Z_{i(X)}\backslash B_{i(X)}$ and edges $X\backslash B_{i(X)}$ belong to the target side of $Z_{i(X)-h(X)+1}'$ inside instance $\mcald\left(\mcali,\overline{B}_{h(X)}\right)$.
\label{th:recursive}
\end{theorem}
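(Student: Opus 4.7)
The plan is to proceed in two stages: first show that the $(i(X)-h(X)+1)$-th drainage cut of the dry instance $\mcald(\mcali,\overline{B}_{h(X)})$ coincides with $\overline{B}_{i(X)}$, then establish that $X\backslash B_{i(X)}$ sits on the target side of this cut inside that dry instance.

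For the first stage, I would induct on the drainage step $\ell$ of the dry instance. The preliminary groundwork is: since $B_{h(X)}$ is the closest dam of $B_{i(X)}$ (Def.~\ref{def:closer_dam}), we have $\sigma(B_{h(X)})=\sigma(B_{i(X)})$ and $E^*(B_{h(X)})\cap Z_{i(X)}=B_{i(X)}$, and by Lemma~\ref{le:complementary} analogous relations hold for the complements, namely $\sigma(\overline{B}_{h(X)})=\sigma(\overline{B}_{i(X)})$ together with $E^*(\overline{B}_{h(X)})\cap Z_{i(X)}=\overline{B}_{i(X)}$. Via Lemma~\ref{le:menger}, the Menger's paths in $\sigma(\overline{B}_{h(X)})$ yield $p-\card{B_{h(X)}}$ edge-disjoint paths from $S^*(\overline{B}_{h(X)})$ to $T^*(\overline{B}_{h(X)})$ inside the dry instance, so the minimum cut size there equals $p-\card{B_{h(X)}}$. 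The inductive step then identifies each $Z_\ell'$ by applying the uniqueness of minimum closest cuts (Lemma~\ref{le:impcutsenum}) to the sub-instance obtained after drying by $Z_1',\ldots,Z_{\ell-1}'$: the minimum closest cut there is the restriction to the dry area of the appropriate $\overline{B}_j$, because any putative strictly closer candidate of size $p-\card{B_{h(X)}}$ pulled back through the Menger's paths of $\sigma(\overline{B}_{h(X)})$ to the original graph would contradict $\overline{B}_{h(X)}$ being a closest dam. At $\ell=i(X)-h(X)+1$ the identified cut is $\overline{B}_{i(X)}$.

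For the second stage, Theorem~\ref{th:keystone} already provides that $X\backslash B_{i(X)}$ is a minimum cut of $\mcald(\mcali,\overline{B}_{h(X)})$, of size $p-\card{B_{i(X)}}=p-\card{B_{h(X)}}$. Applying Theorem~\ref{th:skeleton} inside this dry instance yields a front $\ell^*$ for $X\backslash B_{i(X)}$ in the dry drainage, satisfying $(X\backslash B_{i(X)})\cap E\left[R(Z_{\ell^*}',S^*(\overline{B}_{h(X)}))\right]=\emptyset$. The equality $\ell^*=i(X)-h(X)+1$ is forced from both sides: Theorem~\ref{th:skeleton} applied to $X$ in the original instance rules out edges of $X$ in $E[R(Z_{i(X)},S)]$, which via the first-stage identification forbids any edge of $X\backslash B_{i(X)}$ from lying strictly before $\overline{B}_{i(X)}$ in the dry drainage; conversely, $X\backslash B_{i(X)}$ must intersect some drainage cut, and its signature (contained in $\sigma(\overline{B}_{h(X)})=\sigma(\overline{B}_{i(X)})$) combined with the minimum-cut size $p-\card{B_{h(X)}}$ pins this intersection at $Z_{i(X)-h(X)+1}'=\overline{B}_{i(X)}$. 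Hence all edges of $X\backslash B_{i(X)}$ lie on the target side of $Z_{i(X)-h(X)+1}'$.

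The principal difficulty I foresee is Stage~1: excluding any hypothetical minimum closest cut of size $p-\card{B_{h(X)}}$ that could interpose itself between the successive restrictions $\overline{B}_j$ in the dry-instance drainage. Ruling this out requires lifting any such candidate back to the original graph along the Menger's paths in $\sigma(\overline{B}_{h(X)})$ and invoking the closest-dam property of $\overline{B}_{h(X)}$ via Lemma~\ref{le:complementary} and Lemma~\ref{le:enter_dry}, along the lines of the case analysis developed in the proof of Lemma~\ref{le:complementary}.
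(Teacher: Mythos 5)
Your overall outline matches the paper's: identify the successive drainage cuts of $\mcald\left(\mcali,\overline{B}_{h(X)}\right)$ with the complements $\overline{B}_{h(X)}, \overline{B}_{h(X)+1},\ldots,\overline{B}_{i(X)}$, then place $X\backslash B_{i(X)}$ on the target side. However, two of your key justifications do not hold as stated. First, you derive $E^*(\overline{B}_{h(X)})\cap Z_{i(X)}=\overline{B}_{i(X)}$ (i.e.\ that $\overline{B}_{h(X)}$ is closer than $\overline{B}_{i(X)}$) from Lemma~\ref{le:complementary}. That lemma only asserts that the complement of a \emph{closest} dam is again a closest dam; it says nothing about complementation preserving the binary ``closer than'' relation between a specific pair of dams. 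The paper has to prove this containment directly, and the proof uses the cut $X$ itself: if an arc of $\overline{B}_{i(X)}$ escaped $E^*(\overline{B}_{h(X)})$, the Menger's path through it would contain no edge of $X$ at all (it avoids $B_{i(X)}$ because the signatures $\sigma(B_{i(X)})$ and $\sigma(\overline{B}_{i(X)})$ are disjoint, and it avoids $X\backslash B_{i(X)}$ by Theorem~\ref{th:keystone}), contradicting Lemma~\ref{le:menger}. Without $X$ in hand this step is not available, so citing a general lemma cannot replace it.

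Second, and more seriously, your mechanism for the inductive step --- excluding a minimum closest cut that interposes itself between consecutive $\overline{B}_j$ in the dry drainage --- is to ``pull it back through the Menger's paths'' and contradict the closest-dam property of $\overline{B}_{h(X)}$. That is not the right contradiction target: a spurious closest cut at an intermediate level \emph{downstream} of $\overline{B}_{h(X)}$ does not conflict with no dam at a \emph{lower} level being closer than $\overline{B}_{h(X)}$. The paper's argument instead takes a hypothetical minimum closest cut $Z_{\overline{B}_{h(X)}}\neq \overline{B}_{h(X)+1}$ in the dry instance, glues it to $B_{h(X)+1}$ via Theorem~\ref{th:converse_keystone} to obtain a minimum $(S_{h(X)+1},T)$-cut of the original instance whose reachable set is strictly contained in that of $Z_{h(X)+1}$, and contradicts the uniqueness of the minimum closest cut used at each level of the drainage construction of $\mcali$. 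Theorem~\ref{th:converse_keystone} is the essential lifting tool here and does not appear anywhere in your proposal; you correctly identified this stage as the principal difficulty, but the tools you name (Lemmas~\ref{le:complementary} and~\ref{le:enter_dry}) do not suffice to close it. Your second stage is fine in substance, though more roundabout than necessary: once the identification $Z'_{i(X)-h(X)+1}=\overline{B}_{i(X)}$ is established, the conclusion follows immediately from the facts that $X\backslash B_{i(X)}\subseteq E^*(\overline{B}_{h(X)})$ and that these edges lie in $E\left[R(Z_{i(X)},T)\right]$, without re-running Theorem~\ref{th:skeleton} inside the dry instance.
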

\begin{proof}
According to Theorem~\ref{th:keystone}, we know that edges of $X\backslash B_{i(X)}$ belong to the dry instance of $\overline{B}_{h(X)}$, {\em i.e.} $E^*(\overline{B}_{h(X)})$. Moreover, they are also in the target side of $Z_{i(X)}$, as $B_{i(X)} \subsetneq Z_{i(X)}$ is the front dam of $X$. We denote by $\overline{B}_{i(X)}$ the complement of $B_{i(X)}$ in $Z_{i(X)}$, $\overline{B}_{i(X)} = Z_{i(X)}\backslash B_{i(X)}$. We want to prove that dam $\overline{B}_{i(X)}$ is the minimum drainage cut of level ${i(X)} - {h(X)} + 1$ in instance $\mcald\left(\mcali,\overline{B}_{h(X)}\right)$. For this purpose, we first prove that $\overline{B}_{h(X)}$ is closer than $\overline{B}_{i(X)}$. 

Dam $\overline{B}_{i(X)}$ has the same signature as $\overline{B}_{h(X)}$ because their respective complement fulfil $\sigma(B_{i(X)}) = \sigma(B_{h(X)})$. Moreover, we prove that arcs of $\overline{B}_{i(X)}$ are in the dry instance of $\overline{B}_{h(X)}$. Suppose an arc $e_i = (u_i,v_i)$ of $\overline{B}_{i(X)}$ does not belong to $E^*\left(\overline{B}_{h(X)}\right)$. Let $Q_j$ be the Menger's path  containing arc $e_i$. Then, no arc of $Q_j$ after $e_i$ is inside instance $\mcald\left(\mcali,\overline{B}_{h(X)}\right)$. This is a contradiction as path $Q_j$ must contain an edge of cut $X$. Indeed, path $Q_j$ contains neither an arc of $B_{i(X)}$ as $\sigma(B_{i(X)}) \cap \sigma(\overline{B}_{i(X)}) = \emptyset$ nor an arc of $X\backslash B_{i(X)}$ as all its arcs after level ${i(X)}$ do not belong to $\mcald\left(\mcali,\overline{B}_{h(X)}\right)$ (contradiction with Theorem~\ref{th:keystone}). In summary, dam $\overline{B}_{h(X)}$ is closer than $\overline{B}_{i(X)}$.

All dams $\overline{B}_{\ell}$ such that $\sigma(\overline{B}_{h(X)}) = \sigma(\overline{B}_{\ell})$ and ${h(X)} < \ell < {i(X)}$ have a common closest dam: $\overline{B}_{h(X)}$. Indeed, if it is not the case for a dam $\overline{B}_{\ell}$, there is an arc $(u_{\ell},v_{\ell}) \in \overline{B}_{\ell}$ where $u_{\ell} \notin A^*\left(\overline{B}_{h(X)}\right)$. As a consequence, dam $\overline{B}_{i(X)}$ is not contained in $E^*(\overline{B}_{h(X)})$, as arc $(u_{\ell},v_{\ell})$ belongs to a Menger's path containing an arc of $\overline{B}_{i(X)}$.

\begin{figure}[t]
\centering
\scalebox{.78}{\begin{tikzpicture}


\draw [dashed, color = gray] (16.8,0.2) -- (16.8,5.7) -- (1.0,5.7) -- (1.0,0.2) --  (16.8,0.2);

\draw [dashed, color = black, fill = white!90!gray] (1.3,0.5) -- (2.0,0.5) -- (2.0,5.3) -- (1.3,5.3) --  (1.3,0.5);
\draw [dashed, color = black, fill = white!90!gray] (15.8,0.5) -- (16.5,0.5) -- (16.5,5.3) -- (15.8,5.3) --  (15.8,0.5);

\draw [dashed, color = red, fill = white!92!red] (5.1,0.5) -- (13.6,0.5) -- (13.6,3.0) -- (5.1,3.0) -- (5.1,0.5);
\draw [dashed, color = red, fill = white!92!red] (3.7,0.5) -- (4.3,0.5) -- (4.3,3.0) --(3.7,3.0) -- (3.7,0.5);
\draw [dashed, color = red, fill = white!92!red] (14.2,0.5) -- (14.9,0.5) -- (14.9,3.0) -- (14.2,3.0) -- (14.2,0.5);


\draw [dashed, color = blue, fill = white!92!blue] (3.7,5.3) -- (4.3,5.3) -- (4.3,3.2) --(3.7,3.2) -- (3.7,5.3);
\draw [dashed, color = blue, fill = white!92!blue] (5.1,5.3) -- (12.6,5.3) -- (12.6,4.2) -- (10.6,4.2) -- (10.6,3.2) --(5.1,3.2) -- (5.1,5.3);
\draw [dashed, color = blue, fill = white!92!blue] (11.2,3.2) -- (12.0,3.2) -- (12.0,3.9) -- (14.0,3.9) -- (14.0,5.3) --(13.2,5.3) -- (13.2,4.0) -- (11.2,4.0) -- (11.2,3.2);

\draw [dashed, color = yellow, fill = white!92!yellow] (8.1,3.3) -- (10.4,3.3) -- (10.4,4.3) -- (12.4,4.3) -- (12.4,5.2) -- (8.1,5.2) -- (8.1,3.3);

\node[draw, circle, minimum height=0.2cm, minimum width=0.2cm, fill=black] (P31) at (4,5) {};
\node[draw, circle, minimum height=0.2cm, minimum width=0.2cm, fill=black] (P32) at (4,3.5) {};
\node[draw, circle, minimum height=0.2cm, minimum width=0.2cm, fill=black] (P33) at (4,2.5) {};
\node[draw, circle, minimum height=0.2cm, minimum width=0.2cm, fill=black] (P34) at (4,1) {};

\node[draw, circle, minimum height=0.2cm, minimum width=0.2cm, fill=black] (P41) at (5.5,5) {};
\node[draw, circle, minimum height=0.2cm, minimum width=0.2cm, fill=black] (P42) at (5.5,3.5) {};
\node[draw, circle, minimum height=0.2cm, minimum width=0.2cm, fill=black] (P43) at (5.5,2.5) {};
\node[draw, circle, minimum height=0.2cm, minimum width=0.2cm, fill=black] (P44) at (5.5,1) {};

\node[draw, circle, minimum height=0.2cm, minimum width=0.2cm, fill=black] (P51) at (7,5) {};
\node[draw, circle, minimum height=0.2cm, minimum width=0.2cm, fill=black] (P52) at (7,3.5) {};
\node[draw, circle, minimum height=0.2cm, minimum width=0.2cm, fill=black] (P53) at (7,2.5) {};
\node[draw, circle, minimum height=0.2cm, minimum width=0.2cm, fill=black] (P54) at (7,1) {};

\node[draw, circle, minimum height=0.2cm, minimum width=0.2cm, fill=black] (P61) at (8.5,5) {};
\node[draw, circle, minimum height=0.2cm, minimum width=0.2cm, fill=black] (P62) at (8.5,3.5) {};
\node[draw, circle, minimum height=0.2cm, minimum width=0.2cm, fill=black] (P63) at (8.5,2.5) {};
\node[draw, circle, minimum height=0.2cm, minimum width=0.2cm, fill=black] (P64) at (8.5,1) {};

\node[draw, circle, minimum height=0.2cm, minimum width=0.2cm, fill=black] (P71) at (10.1,3.7) {};
\node[draw, circle, minimum height=0.2cm, minimum width=0.2cm, fill=black] (P72) at (12.1,4.75) {};
\node[draw, circle, minimum height=0.2cm, minimum width=0.2cm, fill=black] (P73) at (13.1,2.5) {};
\node[draw, circle, minimum height=0.2cm, minimum width=0.2cm, fill=black] (P74) at (13.1,1.0) {};

\node[draw, circle, minimum height=0.2cm, minimum width=0.2cm, fill=black] (P81) at (11.6,3.7) {};
\node[draw, circle, minimum height=0.2cm, minimum width=0.2cm, fill=black] (P82) at (13.6,4.75) {};
\node[draw, circle, minimum height=0.2cm, minimum width=0.2cm, fill=black] (P83) at (14.6,2.5) {};
\node[draw, circle, minimum height=0.2cm, minimum width=0.2cm, fill=black] (P84) at (14.6,1.0) {};


\draw[->,line width = 1.4pt, color = blue] (P31) -- (P41);
\draw[->,line width = 1.4pt, color = blue] (P32) -- (P42);
\draw[->,line width = 1.4pt, color = red] (P33) -- (P43);
\draw[->,line width = 1.4pt, color = black!20!yellow] (P51) -- (P61);
\draw[->,line width = 1.4pt, color = black!20!yellow] (P52) -- (P62);
\draw[->,line width = 1.4pt, color = green] (P53) -- (P63);
\draw[->,line width = 1.4pt, color = green] (P54) -- (P64);
\draw[->,line width = 1.4pt, color = red] (P34) -- (P44);
\draw[->,line width = 1.4pt, color = black] (P71) -- (P81);
\draw[->,line width = 1.4pt, color = black] (P72) -- (P82);
\draw[->,line width = 1.4pt, color = black] (P73) -- (P83);
\draw[->,line width = 1.4pt, color = black] (P74) -- (P84);


\node[scale=1.2] at (1.65,3.0) {$S$};
\node[scale=1.2, color=blue] at (4.0,6.0) {$S^*(\overline{B}_h)$};
\node[scale=1.2, color=red] at (4.0,-0.3) {$S^*(B_h)$};
\node[scale=1.2, color=blue] at (9.0,6.0) {$A^*(\overline{B}_h)$};
\node[scale=1.2, color=red] at (9.0,-0.3) {$A^*(B_h)$};
\node[scale=1.2, color = black!20!yellow] at (9.5,4.5) {$A^*(\overline{B}_i)$};

\draw[->,line width = 1.4pt, color = black] (4.75,6.7) -- (14.3,6.7);
\draw[line width = 1.4pt, color = black] (4.75,6.5) -- (4.75,6.9);
\draw[line width = 1.4pt, color = black] (7.75,6.5) -- (7.75,6.9);
\node[scale=1.2, text width = 2.0cm] at (15.5,6.7) {Levels of $\mcald(\mcali,\overline{B}_h)$};
\node at (7.75,7.1) {$i-h+1$};
\node at (4.75,7.1) {$1$};

\node[scale=1.2] at (16.15,3.0) {$T$};
\node[scale=1.2, color=blue] at (13.6,6.0) {$T^*(\overline{B}_h)$};
\node[scale=1.2, color=red] at (14.6,-0.3) {$T^*(B_h)$};

\node[scale=1.2, color = red] at (4.75,1.75) {$B_h$};
\node[scale=1.2, color = blue] at (4.75,4.25) {$\overline{B}_h$};
\node[scale=1.2, color = black!20!green] at (7.75,1.75) {$B_i$};
\node[scale=1.2, color = black!20!yellow] at (7.75,4.25) {$\overline{B}_i$};


\end{tikzpicture}}
\caption{Illustration of Theorem~\ref{th:recursive}: for any minimum cut with the front dam $B_i$, the tails of arcs in $X\backslash B_{i}$ belong to the yellow zone.}
\label{fig:equation}
\end{figure}
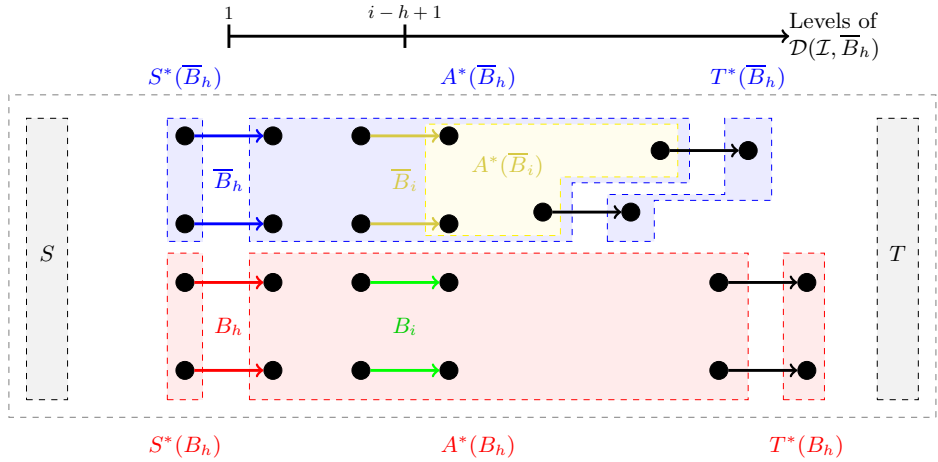

We focus now on the drainage of instance $\mcald\left(\mcali,\overline{B}_{h(X)}\right)$. The minimum drainage cut of level one in this instance is $\overline{B}_{h(X)}$ itself, as it is the minimum closest $\left(S^*(\overline{B}_{h(X)}),T^*(\overline{B}_{h(X)})\right)$-cut. 

Let $\overline{B}_{{h(X)}+1}$ be the dam of level ${h(X)}+1$ in $\mcali$ fulfilling $\sigma(\overline{B}_{h(X)}) = \sigma(\overline{B}_{{h(X)}+1})$ and $B_{{h(X)}+1}$ its complement. We prove that $\overline{B}_{{h(X)}+1}$ is the minimum drainage cut of level two in $\mcald\left(\mcali,\overline{B}_{h(X)}\right)$, {\em i.e.} it is the minimum closest cut between $V^T(\overline{B}_{h(X)})$ and $T^*(\overline{B}_{h(X)})$ in graph $G^*(\overline{B}_{h(X)})$ deprived of $R\left(\overline{B}_{h(X)},S^*(\overline{B}_{h(X)})\right)$. Suppose that there is another minimum closest cut $Z_{\overline{B}_{h(X)}} \neq \overline{B}_{{h(X)}+1}$. Set $X_{\overline{B}_{h(X)}} = Z_{\overline{B}_{h(X)}} \cup B_{{h(X)}+1}$ is a minimum $(S,T)$-cut of instance $\mcali$ according to Theorem~\ref{th:converse_keystone}. As its edges belong to the target side of $Z_{h(X)}$, it is a minimum $(S_{{h(X)}+1},T)$-cut. Based on the definition of $Z_{\overline{B}_{h(X)}}$, the reachable set of $X_{\overline{B}_{h(X)}}$ is necessarily included into the reachable set of $Z_{{h(X)}+1} = B_{{h(X)}+1} \cup \overline{B}_{{h(X)}+1}$ in graph $G\backslash R(Z_{h(X)},S)$. This is a contradiction to the construction of the drainage, as $Z_{{h(X)}+1}$ is the unique minimum closest $(S_{{h(X)}+1},T)$-cut in graph $G\backslash R(Z_{h(X)},S)$. Consequently, $\overline{B}_{{h(X)}+1}$ is the minimum drainage cut of level two inside $\mcald\left(\mcali,\overline{B}_{h(X)}\right)$.

We can iterate these arguments on dams $\overline{B}_{{h(X)}+2}$, $\overline{B}_{{h(X)}+3}$, etc. For example, dam $\overline{B}_{{h(X)}+2}$, $\sigma(\overline{B}_{{h(X)}+2}) = \sigma(\overline{B}_{h(X)})$, is the minimum closest $\left(V^T(\overline{B}_{{h(X)}+1}),T\right)$-cut when graph $G^*(\overline{B}_{h(X)})$ is deprived of $R\left(\overline{B}_{{h(X)}+1},S^*(\overline{B}_{h(X)})\right)$. Otherwise, it would imply that $Z_{{h(X)}+2}$ is not the minimum closest $(S_{{h(X)}+2},T)$-cut in graph $G\backslash R(Z_{{h(X)}+1},S)$, which contradicts the construction of the drainage. Eventually, dam $\overline{B}_{{h(X)}+2}$ is the minimum drainage cut of level three inside $\mcald\left(\mcali,\overline{B}_{h(X)}\right)$, dam $\overline{B}_{{h(X)}+3}$ of level four, etc. Then, dam $\overline{B}_{i(X)}$ is the minimum drainage cut of level ${i(X)}-{h(X)}+1$ in instance $\mcald\left(\mcali,\overline{B}_{h(X)}\right)$.

Coming back to Theorem~\ref{th:keystone}, the edges of $X\backslash B_{i(X)}$ belong to both $E^*(\overline{B}_{h(X)})$ and $E\left[R(Z_{i(X)},T)\right]$. So, they are in the target side of dam $\overline{B}_{i(X)}$ inside instance $\mcald\left(\mcali,\overline{B}_{h(X)}\right)$.
\end{proof}

\subsection{Description of the algorithm} \label{subsec:algorithm}

Our algorithm starts by computing the drainage $\mcalz(\mcali)$ and the Menger's paths of input instance $\mcali$.
For all dams $B_i$, it counts the minimum cuts of size $p$ in $\mcali$ which admit the front dam $B_i$. If $B_i \neq Z_i$, it does this recursively by counting the minimum cuts in instance $\mcald\left(\mcali,\overline{B}_h\right)$ which only contains edges from the target side of the internal level $i-h+1$ of $\mcald\left(\mcali,\overline{B}_h\right)$, where $B_h$ is the closest dam of $B_i$. The minimum cut size in $\mcald(\mcali,\overline{B}_h)$ is at most $p-1$.

We denote by $C_0(\mcali)=C(\mcali)$ the total number of minimum $(S,T)$-cuts of instance $\mcali$. We define $C_{\ell}(\mcali)$ as the number of minimum cuts of instance $\mcali$ which are composed of edges from $E\left[R(Z_{\ell},T)\right]$ only. For example, $C_{2}(\mcali)$ gives the number of minimum $(S,T)$-cuts in instance $\mcali$ without edges of $Z_1 \cup Z_2$. Value $C_{\ell}(\mcali)$, $0\le \ell \le k-1$, can be written:
\begin{equation}
C_{\ell}(\mcali) = k-\ell + \sum_{\substack{\mbox{\scriptsize{Closest}}\\ {\mbox{\scriptsize{dam }}} B_h\subsetneq Z_h}} \sum_{\substack{i ~:~ i > \ell, \\\exists B_i : B_h \\ \mbox{\scriptsize{closer than }} B_{i}}}  C_{i-h+1}\left(\mcald\left(\mcali,\overline{B}_h\right)\right).
\label{eq:counting}
\end{equation}

The first $k-\ell$ cuts are the minimum drainage cuts of $\mcali$ with level greater than $\ell$, {\em i.e.} cuts $Z_{\ell+1},\ldots,Z_k$. The second term counts cuts taking edges only from $E\left[R(Z_{\ell},T)\right]$ and admitting a front dam $B_{i(X)} \neq Z_{i(X)}$. Theorems~\ref{th:keystone} and~\ref{th:recursive} guarantee that any of these minimum $(S,T)$-cuts is counted at least once. Indeed, for any front dam $B_i$ and its closest dam $B_h$, we compute the number of cuts in instance $\mcald\left(\mcali, \overline{B}_{h}\right)$ such that all their edges belong to the target side of $\overline{B}_i$, which is the internal level $i-h+1$ in $\mcald\left(\mcali, \overline{B}_{h}\right)$. In the event that the drainage of $\mcald\left(\mcali, \overline{B}_{h}\right)$ has less than $i-h+1$ levels, then $C_{i-h+1}\left(\mcald\left(\mcali,\overline{B}_h\right)\right) = 0$, as it means no minimum cut of $\mcali$ has the front dam $B_i$.

Conversely, the unicity of a closest dam ensures us that each minimum cut is counted exactly once. A minimum $(S,T)$-cut $X\neq Z_{i(X)}$ has a unique front dam $B_{i(X)}$ and the closest dam $B_{h(X)}$ of $B_{i(X)}$ is unique (Lemma~\ref{le:unique_closest}). Finally, Theorem~\ref{th:converse_keystone} secures that all cuts counted with Eq.~\eqref{eq:counting} are minimum $(S,T)$-cuts.

Value $C_0(\mcali)$ is computed thanks to recursive calls on multiple instances $\mcald\left(\mcali, \overline{B}_{h}\right)$. From now on, we distinguish the input instance $\mcali$ (for which we want to compute $C_0(\mcali)$) with other instances (denoted by $\mcalj$ later on) of the recursive tree. The base cases of the recursion, {\em i.e.} the leaves of the recursive tree, are the computation of values $C_{\ell}(\mcalj)$ either in instances $\mcalj$ where the minimum cut size is one or in instances where no minimum cut admits a front dam $B_i \neq Z_i$, $i > \ell$. In both cases, the only minimum cuts of $\mcalj$ are its minimum drainage cuts.
Each recursive call of the algorithm makes the minimum cut size decrease: for example, if the minimum cut size of $\mcalj$ is $q$, then it is $\card{\overline{B}_h} < q$ for an instance $\mcald\left(\mcalj, \overline{B}_{h}\right)$. Therefore, the recursive tree is not deeper than $p-1$.


Fig.~\ref{fig:recurrence} illustrates the recursive scheme of our algorithm with a tree describing the relationship between the instances. Three instances $\mcali$, $\mcalj$, and $\mcalj'$ of the recursive tree are represented. For example, instance $\mcalj = \mcald\left(\mcali,\overline{B}_h\right)$ is the son of instance $\mcali$ in the tree as it is one of its dry instances. 

In parallel, another graph (black dashed arcs in Fig.~\ref{fig:recurrence}) contains arcs with endpoints $C_{\ell}\left(\mcalj\right)$. An arc connects two ``compartments'' $C_{\ell}\left(\mcalj\right)$ and $C_{\ell'}\left(\mcalj'\right)$ when the computation of $C_{\ell}\left(\mcalj\right)$ depends on $C_{\ell'}\left(\mcalj'\right)$. The minimum $(S,T)$-cut size of $\mcalj'$ is smaller than the one of $\mcalj$. This is why the graph made of arcs between compartments is a DAG. 

\begin{figure}[h]
\centering
\scalebox{.62}{\begin{tikzpicture}


\draw [color = black, fill = white] (6.0,10.0) -- (6.0,12.0) -- (13.0,12.0) -- (13.0,10.0) --  (6.0,10.0);
\draw [color = black, fill = white] (3.0,6.0) -- (3.0,8.0) -- (10.0,8.0) -- (10.0,6.0) -- (3.0,6.0);
\draw [color = black, fill = white] (1.0,2.0) -- (1.0,4.0) -- (8.0,4.0) -- (8.0,2.0) -- (1.0,2.0);

\node[draw, minimum height=0.2cm, minimum width=0.2cm] (P11) at (7.0,10.5) {87};
\node[draw, minimum height=0.2cm, minimum width=0.2cm] (P12) at (8.0,10.5) {81};
\node[scale=1.2] (P13) at (9.0,10.4) {$\cdots$};
\node[draw, minimum height=0.2cm, minimum width=0.2cm] (P14) at (10.0,10.5) {67};
\node[scale=1.2] (P15) at (11.0,10.4) {$\cdots$};
\node[draw, minimum height=0.2cm, minimum width=0.2cm] (P16) at (12.0,10.5) {2};

\node[draw, minimum height=0.2cm, minimum width=0.2cm] (P21) at (4.0,6.5) {16};
\node[draw, minimum height=0.2cm, minimum width=0.2cm] (P22) at (5.0,6.5) {15};
\node[scale=1.2] (P23) at (6.0,6.4) {$\cdots$};
\node[draw, minimum height=0.2cm, minimum width=0.2cm] (P24) at (7.0,6.5) {11};
\node[scale=1.2] (P25) at (8.0,6.4) {$\cdots$};
\node[draw, minimum height=0.2cm, minimum width=0.2cm] (P26) at (9.0,6.5) {1};

\node[draw, minimum height=0.2cm, minimum width=0.2cm] (P31) at (2.0,2.5) {4};
\node[draw, minimum height=0.2cm, minimum width=0.2cm] (P32) at (3.0,2.5) {2};
\node[scale=1.2] (P33) at (4.0,2.4) {$\cdots$};
\node[draw, minimum height=0.2cm, minimum width=0.2cm] (P34) at (5.0,2.5) {1};
\node[scale=1.2] (P35) at (6.0,2.4) {$\cdots$};
\node[draw, minimum height=0.2cm, minimum width=0.2cm] (P36) at (7.0,2.5) {0};


\node[scale=0.8] at (6.9,11.0) {$C_0\left(\mcali\right)$};
\node[scale=0.8] at (8.1,11.0) {$C_1\left(\mcali\right)$};
\node[scale=0.8] at (10.0,11.0) {$C_{\ell}\left(\mcali\right)$};
\node[scale=0.8] at (12.0,11.0) {$C_{k-1}\left(\mcali\right)$};

\node[scale=0.8] (E21) at (3.9,7.1) {$C_0\left(\mcalj\right)$};
\node[scale=0.8] (E22) at (5.1,7.1) {$C_1\left(\mcalj\right)$};
\node[scale=0.8] (E23) at (7.0,7.1) {$C_{\ell}\left(\mcalj\right)$};
\node[scale=0.8] (E24) at (9.0,7.1) {$C_{k-1}\left(\mcalj\right)$};

\node[scale=0.8] (E31) at (1.9,3.1) {$C_0\left(\mcalj'\right)$};
\node[scale=0.8] (E32) at (3.1,3.1) {$C_1\left(\mcalj'\right)$};
\node[scale=0.8] (E33) at (5.0,3.1) {$C_{\ell}\left(\mcalj'\right)$};
\node[scale=0.8] (E34) at (7.0,3.1) {$C_{k-1}\left(\mcalj'\right)$};

\node[scale=1.1] at (4.8,11.2) {Instance $\mcali$};
\node[scale=1.1] at (1.4,7.5) {Instance};
\node[scale=1.1] at (1.4,6.9) {$\mcalj = \mcald\left(\mcali,\overline{B}_h\right)$};
\node[scale=1.1] at (-0.8,3.5) {Instance};
\node[scale=1.1] at (-0.8,2.9) {$\mcalj' = \mcald\left(\mcalj,\overline{B}_{h'}\right)$};

\node[scale=1.1, color = blue] at (14.8,10.6) {Depth $0$};
\node[scale=1.1, color = blue] at (14.8,6.6) {Depth $1$};
\node[scale=1.1, color = blue] at (14.8,2.6) {Depth $2$};

\draw[line width = 2pt, color=blue, dashed] (13.5,11.0)--(15.8,11.0);
\draw[line width = 2pt, color=blue, dashed] (10.5,7.0)--(15.8,7.0);
\draw[line width = 2pt, color=blue, dashed] (8.5,3.0)--(15.5,3.0);
\draw[->,>=latex,line width = 2pt, color=blue] (16.2,12.0)--(16.2,2.0);

\draw[->,>=latex,rounded corners=5pt,line width = 1.4pt] (6.0,10.3) -| (3.5,8.0);
\draw[->,>=latex,rounded corners=5pt,line width = 1.4pt] (3.0,6.3) -| (1.5,4.0);

\draw[->,>=latex,line width = 1.4pt] (11.3,10.0) -- (11.3,8.0);
\draw[->,>=latex,line width = 1.4pt] (11.7,10.0) -- (11.7,8.0);
\draw[->,>=latex,line width = 1.4pt] (12.1,10.0) -- (12.1,8.0);
\draw[->,>=latex,rounded corners=5pt,line width = 1.4pt] (13.0,10.3) -| (14.5,8.0);
\node[scale=1.8] at (13.3,9.0) {$\cdots$};

\draw[->,>=latex,line width = 1.4pt] (8.6,6.0) -- (8.6,4.0);
\draw[->,>=latex,line width = 1.4pt] (9.0,6.0) -- (9.0,4.0);
\draw[->,>=latex,line width = 1.4pt] (9.4,6.0) -- (9.4,4.0);
\draw[->,>=latex,rounded corners=5pt,line width = 1.4pt] (10.0,6.3) -| (11.7,4.0);
\node[scale=1.8] at (10.5,5.0) {$\cdots$};

\draw[->, dashed] (P11) -- (E21);
\draw[->, dashed] (P11) -- (E22);
\draw[->, dashed] (P11) -- (E23);
\draw[->, dashed] (P11) -- (E24);
\draw[->, dashed] (P11) -- (11.3,7.4);
\draw[->, dashed] (P11) -- (11.7,7.4);
\draw[->, dashed] (P11) -- (12.1,7.4);
\draw[->, dashed] (P11) -- (14.5,7.4);

\draw[->, dashed] (P24) -- (E31);
\draw[->, dashed] (P24) -- (E32);
\draw[->, dashed] (P24) -- (E33);
\draw[->, dashed] (P24) -- (E34);
\draw[->, dashed] (P24) -- (8.6,3.4);
\draw[->, dashed] (P24) -- (9.0,3.4);
\draw[->, dashed] (P24) -- (9.4,3.4);
\draw[->, dashed] (P24) -- (11.7,3.4);

\end{tikzpicture}}
\caption{Recursive calls used to compute values $C_{\ell}\left(\mcali\right)$.}
\label{fig:recurrence}
\end{figure}

Then, we present the proof of Theorem~\ref{th:complexity} which allows us to declare the fixed-parameter tractability of \cmincuts .

\begin{theorem}
There are at most $2^{p^2}m$ instances in the recursive tree.
\label{th:complexity}
\end{theorem}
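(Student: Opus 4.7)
The plan is to bound the size of the recursion tree by separately controlling its depth and the branching at each of its nodes. Each node corresponds to an instance $\mcalj$, and each child to the choice of a closest dam $B_h \subsetneq Z_h$ of $\mcalj$, whose complement $\overline{B}_h$ produces the sub-instance $\mcald(\mcalj,\overline{B}_h)$. For the depth, observe that each recursive step strictly decreases the minimum cut size: the child $\mcald(\mcalj,\overline{B}_h)$ has minimum cut size $|\overline{B}_h| = q - |B_h|$, and $|B_h| \geq 1$ because $B_h$ is a (non-empty) dam. Since the root has minimum cut $p$, the recursion terminates after at most $p-1$ levels.

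For the branching, the plan is to prove that, in any instance of the tree, the signature map $B_h \mapsto \sigma(B_h)$ is injective on the set of closest dams satisfying $B_h \subsetneq Z_h$. This will be done by adapting the proof of Lemma~\ref{le:unique_closest}: if $B_{h_1}$ and $B_{h_2}$ are two distinct closest dams sharing the same signature $\sigma$, with $h_1 < h_2$, one forces condition (iii) of Def.~\ref{def:closer_dam}, namely $E^*(B_{h_1}) \cap Z_{h_2} = B_{h_2}$. The argument will trace the Menger's paths of $\sigma$ through the dry instance of $B_{h_1}$ and show that any would-be $Z_{h_2}$-arc lying outside $E^*(B_{h_1})$ yields a path in $G_D$ from $S$ to $B_{h_1}$'s frontier avoiding $B_{h_1}$, just as in the proof of Lemma~\ref{le:unique_closest}. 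This makes $B_{h_1}$ closer than $B_{h_2}$, contradicting $B_{h_2}$ being closest. Because signatures are non-empty proper subsets of the at-most-$p$ Menger's paths of the current instance, the branching is thus bounded by $2^p - 2$.

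Combining depth $\leq p - 1$ with branching $\leq 2^p$, the total number of nodes in the recursion tree is at most $\sum_{d=0}^{p-1} (2^p)^d \leq 2^{p^2}$, and the extra factor of $m$ in the statement is slack absorbing the coarse accounting at the root and the non-tightness of the per-level count. The main obstacle is the signature-injectivity claim: Lemma~\ref{le:unique_closest} only yields uniqueness of the closest dam \emph{of a fixed parent dam} $B_i$, whereas here no such $B_i$ is available. The resolution will be to re-derive condition (iii) of Def.~\ref{def:closer_dam} directly from the existence of the Menger's paths in $\sigma$ that must traverse both $B_{h_1}$ and $B_{h_2}$, reusing the Case~1 / Case~2 style dichotomy from the proof of Lemma~\ref{le:complementary} to rule out arcs leaving the dry instance of $B_{h_1}$ between levels $h_1$ and $h_2$.
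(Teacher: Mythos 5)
Your depth bound is fine, but your branching bound rests on a claim that is false: two distinct closest dams of the same instance \emph{can} share a signature, so the map $B_h \mapsto \sigma(B_h)$ is not injective on closest dams and the number of children of a node is not bounded by $2^p$. Concretely, take $p=2$ with Menger's paths $Q_1 : s\cdot u_1\cdot u_2\cdot u_3\cdot u_4\cdot t$ and $Q_2 : s\cdot w_1\cdot w_2\cdot w_3\cdot w_4\cdot t$, plus one undirected cross edge between $w_1$ and $u_2$. The drainage starts with $Z_1=\set{(s,u_1),(s,w_1)}$ and $Z_2=\set{(u_2,u_3),(w_1,w_2)}$ (the cut $\set{(u_1,u_2),(w_1,w_2)}$ is not even a cut because of the cross edge). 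For $B_1=\set{(s,u_1)}$ one gets $A^*(B_1)=\set{u_1}$, since $u_2$ stays reachable through $w_1$ and the cross edge; hence $E^*(B_1)\cap Z_2=\emptyset\neq\set{(u_2,u_3)}$, condition (iii) of Def.~\ref{def:closer_dam} fails, and $B_2=\set{(u_2,u_3)}$ is a closest dam even though $B_1$ is also a closest dam with the same signature $\set{Q_1}$. Chaining such cross edges yields $\Omega(n)$ closest dams of signature $\set{Q_1}$, so the branching factor is $\Omega(n)$ already for $p=2$, and your product bound degenerates to roughly $(2^p n)^{p-1}$, which is not $2^{p^2}m$. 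This is precisely why Lemma~\ref{le:unique_closest} is stated relative to a fixed parent dam $B_i$: the contradiction there comes from $B_i \subseteq E^*(B_{h_1})$, and with no parent dam available there is nothing to contradict — your planned ``re-derivation'' of condition (iii) cannot succeed because the statement it aims at is untrue.

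Relatedly, the factor $m$ is not slack. The paper never bounds the raw branching factor at all; it fixes an edge $e$ and proves by induction that at most $2^{pd}$ instances of depth $d$ contain $e$. In the inductive step, for a fixed parent instance $\mcalj$ and fixed $e$, the map sending a closest dam $\overline{B}_h$ with $e\in E^*(\overline{B}_h)$ to the dam $E^*(\overline{B}_h)\cap Z_i$ — where $i$ is the level of $e$, or the last level before $e$ — is injective by Lemma~\ref{le:unique_closest}, because here a parent dam containing (or determined by) $e$ \emph{is} available; this gives at most $2^p$ relevant children per parent, and summing $2^{pd}$ over the $m$ edges and the at most $p$ depths gives $2^{p^2}m$. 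To repair your proof you must switch to this per-edge accounting; a per-node branching bound depending only on $p$ does not exist.
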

\begin{proof}[Proof of Theorem~\ref{th:complexity}]
The depth of instance $\mcali$ in the recursive tree is zero, we say $\Delta(\mcali) = 0$. 
For any closest dam $B_h$ of $\mcalz(\mcali)$, the depth of the dry instance of $B_h$ is one:  $\Delta\left(\mcald\left(\mcali, \overline{B}_{h}\right)\right) = 1$. 
More generally, if $\mcalj$ is an instance of depth $d \ge 0$ and $B_h$ a closest dam of $\mcalz(\mcalj)$, then instance $\mcald\left(\mcalj,B_h\right)$ is at depth $d + 1$.

We prove that, for any edge $e$ in graph $G$, there are at most $2^{pd}$ instances $\mcalj$ of depth $d$ such that edge $e$ belongs to the graph of $\mcalj$. This fact makes the total number of instances in the recursive tree be upper-bounded by $2^{p^2}m$.

We proceed by induction. There is one instance defined for depth $d=0$: it is $\mcali$ and it obviously contains edge $e$, so the number of instances with depth $d=0$ containing $e$ is thus $2^{pd} = 1$.
Let $d \ge 1$ and $\mcalj'$ be an instance of depth $d$ containing edge $e$: there is an instance $\mcalj$ of depth $d-1$ and one of its closest dam $B_h$ such that $\mcalj' = \mcald\left(\mcalj,\overline{B}_h\right)$. As the graph of instance $\mcalj'$ is a subgraph of those of $\mcalj$, the latter contains edge $e$.

Using the induction hypothesis, there are at most $2^{p(d-1)}$ instances $\mcalj$ of depth $d-1$ containing edge $e$. Now, given an instance $\mcalj$ with $\Delta(\mcalj) = d-1$, we bound the number of dams $\overline{B}_h$ (they are closest dams according to Lemma~\ref{le:complementary}) of $\mcalj$ such that $\mcald\left(\mcalj,\overline{B}_h\right)$ contains edge $e$. We distinguish two cases:
\begin{itemize}
\item Case 1: edge $e$ belongs to a minimum drainage cut $Z_i$ of instance $\mcalj$. We focus on the dams $B_i$ of level $i$ containing edge $e$. Their cardinality is bounded by $2^p$. The edges of level $i$ belonging to the dry instance of $\overline{B}_h$, $\mcald\left(\mcalj,\overline{B}_h\right)$, form one of these dams $B_i$. As each dam $B_i$ admits a unique closest dam (Lemma~\ref{le:unique_closest}), there cannot be more than $2^p$ closest dams $\overline{B}_h$ such that $\mcald\left(\mcalj,\overline{B}_h\right)$ contains edge $e$.
\item Case 2: edge $e$ is located between two minimum drainage cuts $Z_{i}$ and $Z_{i+1}$, $e \in R_{i+1}$. Consequently, the level of any closest dam $\overline{B}_h$ such that $\mcald\left(\mcalj,\overline{B}_h\right)$ contains $e$ is less than $i$: $i\ge h$. Therefore, the edges of level $i$ belonging to the dry instance of $\overline{B}_h$ form a dam $B_i$. Thus, the argument used in Case 1 arises the same conclusion: there cannot be more than $2^p$ closest dams such that $\mcald\left(\mcalj,\overline{B}_h\right)$ contains edge $e$.
\end{itemize}
Finally, the number of instances written as $\mcalj' = \mcald\left(\mcalj,\overline{B}_h\right)$ where $\Delta(\mcalj)=d-1$ and $\mcalj'$ contains $e$, is upper-bounded by $2^{p(d-1)}2^p = 2^{pd}$. We conclude that there are less than $2^{pd}$ instances of depth $d$ containing edge $e$. The total number of instances is thus smaller than $\sum_{d=0}^{p-1} 2^{pd}m \le 2^{p^2}m$.
\end{proof}

For any instance $\mcalj$ of the recursive tree, the algorithm computes its drainage $\mcalz\left(\mcalj\right)$, its Menger's paths and all instances $\mcald(\mcalj,B_h)$ where $B_h$ is a closest dam of $\mcalz\left(\mcalj\right)$. This third operation is done by enumerating all dams $B_i$ of $\mcalz(\mcalj)$, verifying whether there is another dam $B_h$ which is closer than $B_i$, and (if $B_i$ is a closest dam) identifying the vertices/edges of $\mcald(\mcalj,B_i)$ thanks to a depth-first search in $G_D\backslash B_i$. As there are at most $2^pn$ dams in $\mcalz\left(\mcalj\right)$, its execution time is $O(2^{2p}n^3)$. The overall complexity is $O\left(2^{p^2}m(mnp + 2^{2p}n^3)\right) = O\left(2^{p(p+2)}pmn^3\right)$.

\section{Sampling minimum edge $(S,T)$-cuts in undirected graphs} \label{sec:sampling}

We sketch the algorithm which produces one of the minimum $(S,T)$-cuts according to the uniform distribution over all minimum $(S,T)$-cuts. We run our counting algorithm (Section~\ref{subsec:algorithm}) and execute a post-processing, described below.

As in Section~\ref{subsec:algorithm}, we distinguish the input instance $\mcali$ from the other instances $\mcalj$ of the recursive tree. 
Our method to sample minimum cuts consists in searching in the recursive tree, already filled out with values $C_{\ell}(\mcalj)$ during the counting. A minimum cut of $\mcali$ is extracted thanks to a randomly driven descent in the recursive tree.

We start at root $C_{0}(\mcali)$. With probability $\frac{k}{C_{0}(\mcali)}$, the sampling algorithm returns one of the minimum drainage cuts of $\mcalz(\mcali)$ taken uniformly over them. Said differently, each cut $Z_i$ has probability $\frac{1}{C_{0}(\mcali)}$ to be produced. With probability $1-\frac{k}{C_{0}(\mcali)}$, we will go one step down the tree. Concretely, for any dam $B_i$ of $\mcalz(\mcali)$ and its closest dam $B_h$, we visit node $C_{i-h+1}\left(\mcald(\mcali,\overline{B}_h)\right)$ of depth 1 with probability $\frac{C_{i-h+1}\left(\mcald(\mcali,\overline{B}_h)\right)}{C_{0}(\mcali)}$. 
The sampling algorithm returns the union of $B_i$ with the cut obtained by a recursive call on $C_{i-h+1}(\mcald(\mcali,\overline{B}_h))$. The algorithm applied on $C_{i-h+1}(\mcald(\mcali,\overline{B}_h))$ either selects a minimum drainage cut of level greater than $i-h+1$ in $\mcald(\mcali,\overline{B}_h)$ (uniform selection among these cuts) or visits a node at depth 2, etc.

In this way, we ensure that the minimum $(S,T)$-cuts are sampled uniformly. Indeed, a cut with front dam $B_i$ is chosen with probability $\frac{C_{i-h+1}\left(\mcald(\mcali,\overline{B}_h)\right)}{C_{0}(\mcali)}$ which is the ratio of the number $C_{i-h+1}\left(\mcald(\mcali,\overline{B}_h)\right)$ of minimum cuts with front dam $B_i$ by the total number $C_{0}(\mcali)$ of minimum cuts in instance $\mcali$.

\section{Conclusion} \label{sec:conclusion}

In this study, we were interested in the parameterized complexity of counting the minimum $(S,T)$-cuts in undirected graphs. The conclusion is that an algorithm running in $O(2^{p(p+2)}pmn^3)$ was devised.
Our algorithm starts by ``draining'' the graph: the drainage is made of $k < n$ minimum cuts $Z_i$. For any minimum cut of the instance, at least one of the minimum drainage cuts $Z_i$ contains edges of $X$. For this reason, we believe that the drainage could be used on other cut problems. 
We already used it to sample minimum edge $(S,T)$-cuts.

Our work gives rise to questions concerning the counting of minimum edge $(S,T)$-cuts in undirected graphs. These questions are:

\begin{enumerate}
\item Is there an FPT$\langle p \rangle$ algorithm solving \cmincuts\ with smaller polynomial factors? In particular, for dense graphs, our algorithm complexity is $O(n^5)$ if we neglect the function of $p$. At first sight, it seems difficult to avoid the use of a maximum flow algorithm~\cite{FoFu56}, which is $O(mp)$. Is it possible to identify an algorithm with running time $O(f(p)n^2)$, $O(f(p)n^3)$, $O(f(p)n^4)$?
\item Is there an algorithm solving \cmincuts\ in time $2^{o(p^2)}n^{O(1)}$? In other words, we wonder whether it is possible to lower the factor $O(p^2)$ in the exponential and to find an algorithm with running time $2^{O(p)}n^{O(1)}$ or $2^{O(p\log p)}n^{O(1)}$, for example.
\end{enumerate}

Our algorithm can be generalized to undirected graphs with positive integer weights. The idea is to transform these undirected weighted graphs into undirected multigraphs: if the weight of edge $e=(u,v)$ is larger than $p+1$, then we replace $e$ by $p+1$ edges $(u,v)$ in parallel. Otherwise, if its weight is $w \le p$, we replace it by $w$ edges $(u,v)$. By this method, the number of edges in the transformed graph is upper-bounded by $m(p+1)$ and it preserves the minimum cuts of the input graph. Moreover, the techniques used  in this article to compute the number of minimum $(S,T)$-cuts, such as flow algorithms or closest cuts, work on multigraphs.

We can modify slightly our algorithm to enumerate minimum $(S,T)$-cuts. This simply consists in stacking in the recursive tree the set of cuts counted with $C_{\ell}(\mcali)$ instead of value $C_{\ell}(\mcali)$ itself. The number of instances stays FPT$\langle p \rangle$ but the sets of cuts put inside the recursive tree may contain a number $\Omega(n^p)$ of elements.

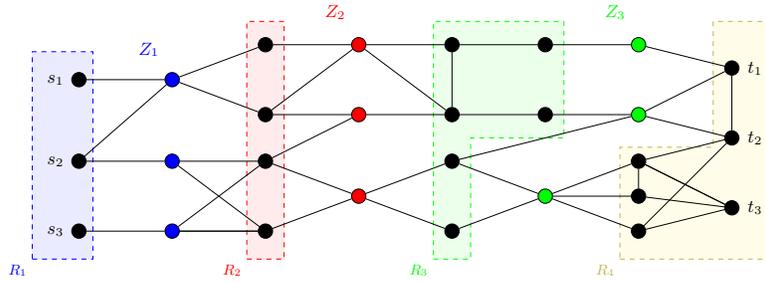
\begin{figure}[t]
\centering
\scalebox{.62}{\begin{tikzpicture}


\draw [dashed, color = blue, fill = white!92!blue] (1.0,0.4) -- (2.3,0.4) -- (2.3,4.85) -- (1.0,4.85) -- (1.0,0.4) node[below left] {$R_1$};
\draw [dashed, color = red, fill = white!92!red] (5.6,0.4) -- (6.4,0.4) -- (6.4,5.5) -- (5.6,5.5) -- (5.6,0.4) node[below left] {$R_2$};
\draw [dashed, color = green, fill = white!92!green] (9.6,0.4) -- (10.4,0.4) -- (10.4,3) -- (12.4,3) -- (12.4,5.5) -- (9.6,5.5) -- (9.6,0.4) node[below left] {$R_3$};
\draw [dashed, color = black!30!yellow, fill = white!90!yellow] (13.6,0.4) -- (16.9,0.4) -- (16.9,5.5) -- (15.6,5.5) -- (15.6,2.8) -- (13.6,2.8) -- (13.6,0.4) node[below left] {$R_4$};


\node[draw, circle, minimum height=0.2cm, minimum width=0.2cm, fill=black] (P11) at (2*1,4.25) {};
\node[draw, circle, minimum height=0.2cm, minimum width=0.2cm, fill=black] (P12) at (2*1,2.5) {};
\node[draw, circle, minimum height=0.2cm, minimum width=0.2cm, fill=black] (P13) at (2*1,1) {};

\node[draw, circle, minimum height=0.2cm, minimum width=0.2cm, fill=blue] (P21) at (2*2,4.25) {};
\node[draw, circle, minimum height=0.2cm, minimum width=0.2cm, fill=blue] (P22) at (2*2,2.5) {};
\node[draw, circle, minimum height=0.2cm, minimum width=0.2cm, fill=blue] (P23) at (2*2,1) {};

\node[draw, circle, minimum height=0.2cm, minimum width=0.2cm, fill=black] (P31) at (2*3,5) {};
\node[draw, circle, minimum height=0.2cm, minimum width=0.2cm, fill=black] (P32) at (2*3,3.5) {};
\node[draw, circle, minimum height=0.2cm, minimum width=0.2cm, fill=black] (P33) at (2*3,2.5) {};
\node[draw, circle, minimum height=0.2cm, minimum width=0.2cm, fill=black] (P34) at (2*3,1) {};

\node[draw, circle, minimum height=0.2cm, minimum width=0.2cm, fill=red] (P41) at (2*4,5) {};
\node[draw, circle, minimum height=0.2cm, minimum width=0.2cm, fill=red] (P42) at (2*4,3.5) {};
\node[draw, circle, minimum height=0.2cm, minimum width=0.2cm, fill=red] (P43) at (2*4,1.75) {};

\node[draw, circle, minimum height=0.2cm, minimum width=0.2cm, fill=black] (P51) at (2*5,5) {};
\node[draw, circle, minimum height=0.2cm, minimum width=0.2cm, fill=black] (P52) at (2*5,3.5) {};
\node[draw, circle, minimum height=0.2cm, minimum width=0.2cm, fill=black] (P53) at (2*5,2.5) {};
\node[draw, circle, minimum height=0.2cm, minimum width=0.2cm, fill=black] (P54) at (2*5,1) {};

\node[draw, circle, minimum height=0.2cm, minimum width=0.2cm, fill=black] (P61) at (2*6,5) {};
\node[draw, circle, minimum height=0.2cm, minimum width=0.2cm, fill=black] (P62) at (2*6,3.5) {};
\node[draw, circle, minimum height=0.2cm, minimum width=0.2cm, fill=green] (P63) at (2*6,1.75) {};

\node[draw, circle, minimum height=0.2cm, minimum width=0.2cm, fill=green] (P71) at (2*7,5) {};
\node[draw, circle, minimum height=0.2cm, minimum width=0.2cm, fill=green] (P72) at (2*7,3.5) {};
\node[draw, circle, minimum height=0.2cm, minimum width=0.2cm, fill=black] (P73) at (2*7,2.5) {};
\node[draw, circle, minimum height=0.2cm, minimum width=0.2cm, fill=black] (P74) at (2*7,1.75) {};
\node[draw, circle, minimum height=0.2cm, minimum width=0.2cm, fill=black] (P75) at (2*7,1) {};

\node[draw, circle, minimum height=0.2cm, minimum width=0.2cm, fill=black] (P81) at (2*8,4.5) {};
\node[draw, circle, minimum height=0.2cm, minimum width=0.2cm, fill=black] (P82) at (2*8,3.0) {};
\node[draw, circle, minimum height=0.2cm, minimum width=0.2cm, fill=black] (P83) at (2*8,1.5) {};


\draw (P11) -- (P21);
\draw (P21) -- (P31);
\draw (P31) -- (P41);
\draw (P41) -- (P51);
\draw (P51) -- (P61);
\draw (P61) -- (P71);
\draw (P71) -- (P81);

\draw (P12) -- (P21);
\draw (P21) -- (P32);
\draw (P32) -- (P42);
\draw (P33) -- (P42);
\draw (P42) -- (P52);
\draw (P52) -- (P62);
\draw (P62) -- (P72);
\draw (P72) -- (P81);
\draw (P72) -- (P82);

\draw (P12) -- (P22);

\draw (P22) -- (P33);
\draw (P33) -- (P43);
\draw (P43) -- (P53);
\draw (P43) -- (P54);
\draw (P53) -- (P63);
\draw (P63) -- (P74);
\draw (P74) -- (P73);
\draw (P54) -- (P63);
\draw (P63) -- (P75);
\draw (P75) -- (P83);
\draw (P75) -- (P82);

\draw (P13) -- (P23);
\draw (P23) -- (P34);
\draw (P23) -- (P33);
\draw (P34) -- (P43);
\draw (P63) -- (P73);
\draw (P73) -- (P82);
\draw (P73) -- (P83);

\draw (P22) -- (P34);
\draw (P32) -- (P41);
\draw (P41) -- (P52);
\draw (P52) -- (P51);
\draw (P53) -- (P72);
\draw (P23) -- (P34);

\draw (P74) -- (P83);
\draw (P73) -- (P83);
\draw (P81) -- (P82);


\node[scale=1.2] at (1.5,4.25) {$s_1$};
\node[scale=1.2] at (1.5,2.5) {$s_2$};
\node[scale=1.2] at (1.5,1) {$s_3$};

\node[scale=1.2] at (16.5,4.5) {$t_1$};
\node[scale=1.2] at (16.5,3.0) {$t_2$};
\node[scale=1.2] at (16.5,1.5) {$t_3$};

\node[scale=1.2, color = blue] at (3.5,4.9) {$Z_1$};
\node[scale=1.2, color = red] at (7.5,5.7) {$Z_2$};
\node[scale=1.2, color = green] at (13.5,5.7) {$Z_3$};

\end{tikzpicture}}
\caption{An example of drainage when cuts are composed of vertices.}
\label{fig:skeleton_vertex}
\end{figure}

We believe that the techniques proposed in this study could be used to count minimum vertex $(S,T)$-cuts in undirected graphs and minimum $(S,T)$-cuts in directed graphs. However, we explain below why major changes of our algorithm are needed to make it work on these applications.

Let us focus on minimum vertex $(S,T)$-cuts. The drainage can be extended to vertex cuts, as the unicity of the minimum closest cut is preserved. Fig.~\ref{fig:skeleton_vertex} illustrates how the drainage could be defined for minimum vertex $(S,T)$-cuts: cut $Z_1$ is the minimum closest $(S,T)$-cut, set $R_1$ is equal to $R(Z_1,S_1)$ with $S_1=S$. Then, $S_2 = Z_1$, cut $Z_2$ is the minimum closest $(S_2,T)$-cut in graph $G\backslash R(Z_1,S)$, and so on.

With this definition, the drainage fulfils the properties given in Section~\ref{sec:skeleton} for the edge version, the most important of them being that any minimum vertex $(S,T)$-cut $X$ admits a front dam $B_{i(X)}$. Moreover, there is a vertex version of the Menger's theorem, stating that the size of the minimum $(S,T)$-cut is equal to the maximum number of vertex-disjoint $(S,T)$-paths. The largest set of maximum vertex-disjoint $(S,T)$-paths is computed in polynomial time and, consequently, the definitions of the dry area and the dry instance can be naturally extended.

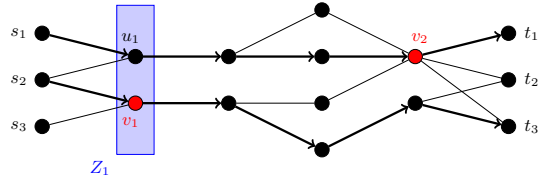
\begin{figure}[h]
\centering
\scalebox{.62}{\begin{tikzpicture}


\draw [color = blue, fill = white!80!blue] (3.6,0.4) -- (4.4,0.4) -- (4.4,3.6) -- (3.6,3.6) -- (3.6,0.4) node[below left,scale=1.2] {$Z_1$};


\node[draw, circle, minimum height=0.2cm, minimum width=0.2cm, fill=black] (P11) at (2*1,3) {};
\node[draw, circle, minimum height=0.2cm, minimum width=0.2cm, fill=black] (P12) at (2*1,2) {};
\node[draw, circle, minimum height=0.2cm, minimum width=0.2cm, fill=black] (P13) at (2*1,1) {};

\node[draw, circle, minimum height=0.2cm, minimum width=0.2cm, fill=black] (P21) at (2*2,2.5) {};
\node[draw, circle, minimum height=0.2cm, minimum width=0.2cm, fill=red] (P22) at (2*2,1.5) {};

\node[draw, circle, minimum height=0.2cm, minimum width=0.2cm, fill=black] (P31) at (2*3,2.5) {};
\node[draw, circle, minimum height=0.2cm, minimum width=0.2cm, fill=black] (P32) at (2*3,1.5) {};

\node[draw, circle, minimum height=0.2cm, minimum width=0.2cm, fill=black] (P41) at (2*4,3.5) {};
\node[draw, circle, minimum height=0.2cm, minimum width=0.2cm, fill=black] (P42) at (2*4,2.5) {};
\node[draw, circle, minimum height=0.2cm, minimum width=0.2cm, fill=black] (P43) at (2*4,1.5) {};
\node[draw, circle, minimum height=0.2cm, minimum width=0.2cm, fill=black] (P44) at (2*4,0.5) {};

\node[draw, circle, minimum height=0.2cm, minimum width=0.2cm, fill=red] (P51) at (2*5,2.5) {};
\node[draw, circle, minimum height=0.2cm, minimum width=0.2cm, fill=black] (P52) at (2*5,1.5) {};

\node[draw, circle, minimum height=0.2cm, minimum width=0.2cm, fill=black] (P61) at (2*6,3.0) {};
\node[draw, circle, minimum height=0.2cm, minimum width=0.2cm, fill=black] (P62) at (2*6,2.0) {};
\node[draw, circle, minimum height=0.2cm, minimum width=0.2cm, fill=black] (P63) at (2*6,1.0) {};


\draw[->,line width = 1.4pt] (P11) -- (P21);
\draw (P12) -- (P21);
\draw[->,line width = 1.4pt] (P12) -- (P22);
\draw (P13) -- (P22);

\draw[->,line width = 1.4pt] (P21) -- (P31);
\draw[->,line width = 1.4pt] (P22) -- (P32);

\draw (P31) -- (P41);
\draw[->,line width = 1.4pt] (P31) -- (P42);
\draw (P32) -- (P43);
\draw[->,line width = 1.4pt] (P32) -- (P44);

\draw (P41) -- (P51);
\draw[->,line width = 1.4pt] (P42) -- (P51);
\draw (P43) -- (P51);
\draw[->,line width = 1.4pt] (P44) -- (P52);

\draw[->,line width = 1.4pt] (P51) -- (P61);
\draw (P51) -- (P62);
\draw (P51) -- (P63);
\draw (P52) -- (P62);
\draw[->,line width = 1.4pt] (P52) -- (P63);


\node[scale=1.2] at (1.5,3.0) {$s_1$};
\node[scale=1.2] at (1.5,2.0) {$s_2$};
\node[scale=1.2] at (1.5,1.0) {$s_3$};

\node[scale=1.2] at (12.5,3.0) {$t_1$};
\node[scale=1.2] at (12.5,2.0) {$t_2$};
\node[scale=1.2] at (12.5,1.0) {$t_3$};

\node[scale=1.2, color = red] at (3.9,1.1) {$v_1$};
\node[scale=1.2, color = red] at (10.1,2.9) {$v_2$};
\node[scale=1.2, color = black] at (3.9,2.9) {$u_1$};

\end{tikzpicture}}
\caption{Illustration of the impossibility to prove Theorem~\ref{th:keystone} for vertex cuts}
\label{fig:keystone_vertex}
\end{figure}

However, Theorem~\ref{th:keystone} does not hold anymore: the set $X\backslash B_{i(X)}$ of a minimum $(S,T)$-cut $X$ is not necessarily included in the dry instance of $\overline{B}_{h(X)}$. We give an example in Fig.~\ref{fig:keystone_vertex}. Set $X = \set{v_1,v_2}$ is a minimum vertex $(S,T)$-cut and its two vertices are drawn in red. One of its vertices $v_1$ belongs to cut $Z_1$, so $B_{i(X)} = B_{h(X)} = \set{v_1}$. We observe that vertex $v_2$ is reachable from $S$ in graph $G_D\backslash \overline{B}_{h(X)}$ because the dry area of $\overline{B}_{h(X)} = \set{u_1}$ does not contain any vertex different than $u_1$ itself. To pursue the work presented in this article, our intention is to modify the definitions of the dry area and the dry instance in order to make Theorem~\ref{th:keystone} be true for vertex cuts.

\section*{Acknowledgement} We would like to thank Saket Saurabh who pointed out the related work~\cite{MaORa13} during the WG'19 conference.

\bibliographystyle{splncs04}
\bibliography{countingedgecuts}

\end{document}